\documentclass[aps,prx,twocolumn,superscriptaddress,nofootinbib]{revtex4}
\usepackage{amsfonts}
\usepackage{amsmath}
\usepackage{amssymb}
\usepackage{amsthm}
\usepackage{graphicx}
\usepackage{bm}
\usepackage{color}
\usepackage{mathrsfs}
\usepackage[colorlinks,bookmarks=true,citecolor=blue,linkcolor=red,urlcolor=blue]{hyperref}
\usepackage{appendix}
\usepackage{float}
\usepackage[export]{adjustbox}
\setlength{\parindent}{10 pt}
\setlength{\parskip}{2 pt}
\setcounter{MaxMatrixCols}{30}
\bibliographystyle{apsrev}
\newcommand{\RNum}[1]{\uppercase\expandafter{\romannumeral #1\relax}}

\newcommand{\beq}{\begin{eqnarray} }
\newcommand{\eeq}{\end{eqnarray} }
\newcommand{\Beq}{\begin{eqnarray*} }
\newcommand{\Eeq}{\end{eqnarray*} }

\newtheorem{thm}{Theorem}
\newtheorem{lem}{Lemma}
\newtheorem{cor}{Corollary}
\theoremstyle{rmk}
\newtheorem*{rmk}{Remark}

\begin{document}
\title{Quasi-Nambu-Goldstone modes in many-body scar models}
\author{Jie Ren}
\thanks{These authors contributed to this work equally.}
\affiliation{Beijing National Laboratory for Condensed Matter Physics and Institute of Physics, Chinese Academy of Sciences, Beijing 100190, China}
\affiliation{University of Chinese Academy of Sciences, Beijing 100049, China}
\author{Yu-Peng Wang}
\thanks{These authors contributed to this work equally.}
\affiliation{Beijing National Laboratory for Condensed Matter Physics and Institute of Physics, Chinese Academy of Sciences, Beijing 100190, China}
\affiliation{University of Chinese Academy of Sciences, Beijing 100049, China}
\author{Chen Fang}
\email{cfang@iphy.ac.cn}
\affiliation{Beijing National Laboratory for Condensed Matter Physics and Institute of Physics, Chinese Academy of Sciences, Beijing 100190, China}
\affiliation{Songshan Lake Materials Laboratory, Dongguan, Guangdong 523808, China}
\affiliation{Kavli Institute for Theoretical Sciences, Chinese Academy of Sciences, Beijing 100190, China}

\begin{abstract}
From the quasisymmetry-group perspective [Phys. Rev. Lett. 126, 120604 (2021)], we show the universal existence of collective, coherent modes of excitations in many-body scar models in the degenerate limit, where the energy spacing in the scar tower vanishes.
The number of these modes, as well as the quantum numbers carried by them, are given, not by the symmetry of the Hamiltonian, but by the quasisymmetry of the scar tower: hence the name quasi-Nambu-Goldstone modes.
Based on this, we draw a concrete analogy between the paradigm of spontaneous symmetry breaking and the many-body scar physics in the degenerate limit.
\end{abstract}
\maketitle

\section{Introduction}
The theory of spontaneous symmetry breaking is one of the underpinning pillars of condensed matter physics.
It states that the macroscopic ground state at sufficiently low temperature usually develops an ``order'', so that the ordered state, also called the symmetry-broken state, has less symmetry compared with that of the underlying Hamiltonian.
An example is the ferromagnetic state developed from a Heisenberg model \cite{heissenberg}: while the Hamiltonian has full spin SU(2) symmetry, the ferromagnetic state only retains spin U(1) symmetry, corresponding to the rotation along the ordered moments.
Generally speaking, the symmetry $G$ of a Hamiltonian may break down to a lower symmetry $H\subset{G}$ in the ground state of the same Hamiltonian, and the quotient space $Q\equiv{G/H}$ describes the orientation of the order parameter.
In the example, $G=\mathrm{SU(2)}$, $H=\mathrm{U(1)}$, and $Q=G/H=S^2$ is a sphere of unit vectors, where $\mathbf{v}\in{Q}$ denotes a possible direction of magnetization.
One key observation is that while the ground state lowers the symmetry, the low-lying modes of collective excitations \cite{NG-1,NG-2,NG-3} over the ground state ``recover'' some of the broken symmetry.
These modes correspond to global rotations of the order parameter from one vector $\mathbf{v}\in{Q}$ to another vector $\mathbf{v}'$.
In the absence of long-range interactions, even if these rotations have small momenta, that is, some slow spatial modulation, the modes remain coherent and continue to zero energy as the momentum goes to zero.
These are the Nambu-Goldstone modes (NGM), of which phonons and magnons are typical examples.

Recently, in a Rydberg-atom experiment \cite{lukin2017}, quasi-periodic dynamics in certain observables was observed in the time evolution of a product state having finite energy density. 
Independently, theorists discovered in the Affleck-Kennedy-Lieb-Tasaki (AKLT) model \cite{AKLT-1987} a tower of exact eigenstates having equal energy spacing \cite{AKLT-1}. 
The phenomenon observed in Ref.~\cite{lukin2017} was termed the quantum-many-body scar \cite{PXP-1} and reproduced in simplified spin models \cite{PXP-2,PXP-3,PXP-4,PXP-5,PXP-6,PXP-7,PXP-8,PXP-9,PXP-10,PXP-11} that possess nearly-equally-spaced eigenstates. 
The tower of states in the AKLT model was also related to QMBS \cite{AKLT-2, AKLT-4}, and similar towers were discovered in various spin and fermion models known as the exact scar models \cite{AKLT-3,AKLT-5,Hubbard-1,Hubbard-2,XY-1,XY-2,Onsager,domain-wall-0,domain-wall,rainbow-1,rainbow-2,Bull2019,Hudomal2019,isingladder,PRA.109.023310,review-1,review-2,review-3}.
In terms of the energy spectrum, exact scar dynamics corresponds to the existence of a tower of eigenstates, called the scar tower, that has equal energy spacing $hT^{-1}$ buried in the middle of a non-integrable spectrum.
In Refs.~\cite{qsymm-1}, it is shown that many scar towers can be unified under a group theoretic framework (see also Refs.~\cite{qsymm-2,Klebanov-1,Klebanov-2,Klebanov-3}), where the tower of states form a representation of a group $\tilde{G}$ that is larger than the symmetry of the Hamiltonian $G$.
Operations in $\tilde{G}$ in general do not preserve the Hamiltonian, but preserve the scar tower, and if $\tilde{G}$ is a Lie group, the equal energy spacing can be easily created by adding a constant external field coupled to any generator of $\tilde{G}$.
In Ref.~\cite{qsymm-1}, $\tilde{G}$ is called the quasisymmetry group, and a routine is proposed to generate model Hamiltonians for nearly all known scar towers~\cite{qsymm-1,qsymm-2}.

Inspired by the recent proposal of the asymptotic scar in the XY-model \cite{asymscar,moudgalya2023symmetries,kunimi2024proposal}, in this Letter, we establish the general existence of modes of collective excitations in exact bosonic scar models, and relate their defining properties, such as the number of modes and the quantum numbers they can transport, to the quasisymmetry group $\tilde{G}$.
This is illustrated using two examples, the ferromagnetic scar model and the AKLT scar model in the main text, then proved for any scar model having an arbitrary $\tilde{G}$ in Appendix \ref{apx:qngt} in the degenerate limit.
Here the degenerate limit is where the energy spacing vanishes, and the entire scar tower becomes a degenerate-energy subspace.
%This result furnishes an analogy between the collective modes of the scar models and the (type-b) Nambu-Goldstone modes \cite{type-b-1,type-b-2,type-b-3} from spontaneous symmetry breaking.
We call these collective modes the quasi-Nambu-Goldstone modes (qNGM), based on the following similarities to the type-B \cite{type-b-1,type-b-2,type-b-3,herzogarbeitman2022manybody} NGM appearing in ferromagnetism:
(i) the number of qNGM is the same as the rank of the quasisymmetry group $\tilde{G}$, and the number of type-B NGM is the rank of the symmetry group $G$; (ii) the lifetime $\tau\rightarrow\infty$ as the momentum of the mode $q\rightarrow0$; and (iii), each qNGM carries the charge of $\tilde{G}$, and each NGM carries that of $G$.
There are also key differences: (i) the dispersion of qNGM near the zero momentum is linear, while that of type-B NGM is quadratic; (ii) at $t\gg\tau$, the excitations decay, and the system returns to the ground state in the case of NGM, but to chaotic, thermal states in the case of qNGM.

\section{qNGM in the ferromagnetic scar model}
There are many scar spin models, with the spin-1 XY model serving as a prominent example. 
In these models, the scar space, up to certain onsite unitary transformations, consists of ferromagnetic (FM) states. 
These states are parameterized by spherical coordinates:
\begin{equation}
	|\text{FM}_{\theta,\phi}\rangle = \includegraphics[width=0.1\textwidth,valign=c]{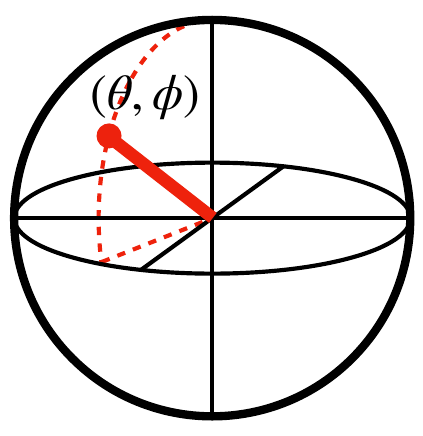} \otimes \cdots \otimes \includegraphics[width=0.1\textwidth,valign=c]{pics/spin-typical}.
\end{equation}
For a spin-$s$ model of $N$ sites, the scar space $\mathcal{H}_{\mathrm{SO(3)}} = \{|\text{FM}_{\theta,\phi}\rangle |\forall \theta,\phi\}$ form an $S=Ns$ irreducible representation of the quasisymmetry group $\tilde{G} = \mathrm{SO(3)}$, invariant under spin rotation.
To be concrete, we consider the following Hamiltonian
\begin{equation}\label{eq:jdm_ham}
\hat{H}=\sum_{j=1}^N\left[J\hat{\mathbf{s}}_j\cdot\hat{\mathbf{s}}_{j+1}-D_1(\hat{\mathbf{s}}_j\times\hat{\mathbf{s}}_{j+1})_z-D_2(\hat{\mathbf{s}}_j\times\hat{\mathbf{s}}_{j+2})_x\right],
\end{equation}
where $\hat{\mathbf{s}}_i$'s are {spin-$s$} operators, and the periodic boundary condition is assumed here and after.
Here the $D_{1,2}$ terms break the spin conservation of any component, and introduce ``frustration'' to the model.
Here ``frustration'' means that the scar states, being eigenstates of $\hat H$, are not eigenstates of certain individual terms in $\hat H$.
One may intuit that $|\text{FM}_{\theta,\phi}\rangle$'s are eigenstates: (i) each inner product takes the maximum value, and (ii) each cross product vanishes as all spins are parallel.
This argument is helpful and the conclusion correct, albeit that (ii) is incorrect.
(The readers are referred to Appendix \ref{apx:fsm} for the rigorous proof.)

\begin{figure}
\begin{centering}
\includegraphics[width=\linewidth]{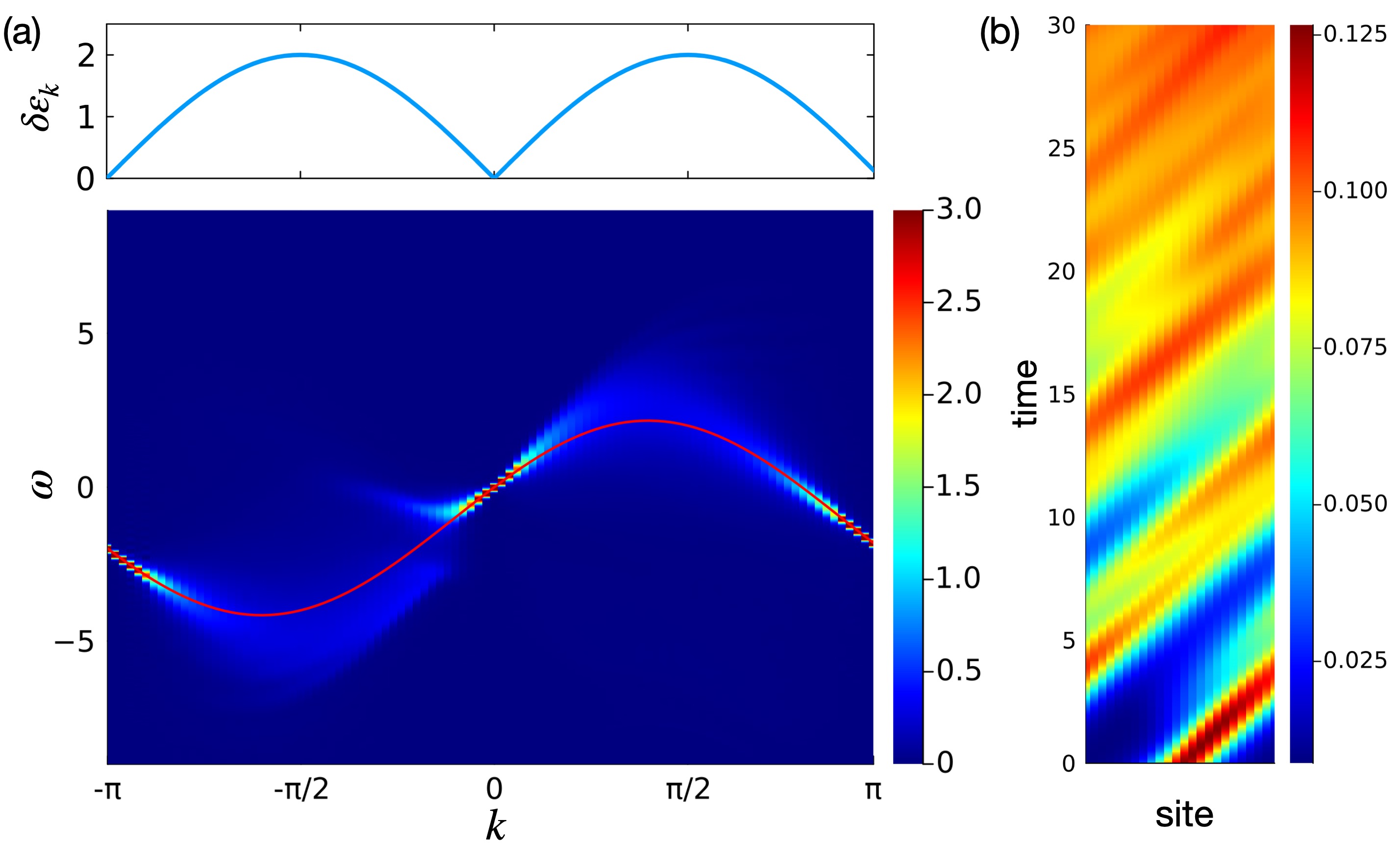}
\par\end{centering}
\caption{Spin-1/2 ferromagnetic scar model. (a) Bottom: Spectral function $A(k,\omega)$ of a spin-1/2 chain with parameters ($J=1, D_1=3, D_2=2$), computed utilizing the MPS-based Chebyshev series expansion method \cite{chemps}. The system size is $L=100$, and the Chebyshev vectors are approximated by a bond dimension of $\chi=30$ MPSs. We numerically evaluate the Chebyshev series up to order $200$. In presenting the result, we cut off values exceeding $3.0$. The red line denotes the ``band'' $\varepsilon_k$. Top: Energy variance $\delta\varepsilon_k$ of $|k\rangle$. Note that in this model there is a vanishing variance at $k=\pi$, which is accidental (see Appendix \ref{apx:fsm}). This zero will be absent by adding more terms. (b) Simulation of many-body evolution from a coherent wave-packet on spin-1/2 chain with size $L=24$. To show a long coherent time, we choose the parameters ($J=1, D_1=3, D_2=0.5$). The heatmap shows the local expectation of the magnon density $n_j=\langle \hat s_j^z\rangle + 1/2$. }
\label{fig:jdm}
\end{figure}

An analogy is observed between the scar model and the FM Heisenberg model $\hat{H}'=\sum_{j}J\hat{\mathbf{s}}_j\cdot\hat{\mathbf{s}}_{j+1}$ ($J<0$), as both share the FM manifold $\mathcal{H}_{\mathrm{SO(3)}}$ as their scar space or ground state space, respectively. 
The generators of the SO(3) symmetry play analogous roles in both scenarios: in the scar context, they act as ladder operators, generating the tower states; whereas, in the ground state symmetry-breaking case, they correspond to the NGM. 
Notably, the NGMs are not confined to the fixed momentum $k_0$ determined by the group generator (for the ferromagnetic scar, the homogeneity means $k_0=0$); 
small spatial modulation also yields small energies.

From SO(3) symmetry, we pick without loss of generality the all-down state $|\text{FM}_{\pi,0}\rangle \equiv \left|\Downarrow\right\rangle$ as the symmetry breaking ground state.
An NGM with momentum $k$ is generated by the magnon operator:	
\begin{equation}
	\hat{a}^\dag_k=\frac{1}{\sqrt{N}}\sum_j\hat{s}^+_j\exp(ikj).
\end{equation}
The single magnon states $|k\rangle\equiv\hat{a}^\dag_k \left|\Downarrow\right\rangle$ exhibit a dispersion $\varepsilon_k = sJk^2$, vanishing as $k\rightarrow 0$.
The properties of $|k\rangle$ become evident when examining the spectral function:
\begin{equation}\label{eq:spectral_function}
	A(k,\omega)=\mathrm{Im}\int \frac{dt}{i\pi} e^{i\omega{t}}
	\left\langle \Downarrow\right| \Theta(t)[\hat{a}_k(t),\hat{a}^\dag_k] \left|\Downarrow\right\rangle.
\end{equation}
The NGM in the Heisenberg model $\hat H'$ is evident as a sharp coherent signal detectable by neutron scattering experiments \cite{FM-1, FM-2}.

When $\hat H'$ undergoes a sudden global quench to the scar Hamiltonian Eq.(\ref{eq:jdm_ham}), the original ground state $\left|\Downarrow\right\rangle$ becomes a scar initial state showing revival dynamics in the presence of a transverse field \cite{qsymm-1,qsymm-2}.
This property is encoded in the same spectral function Eq.(\ref{eq:spectral_function}), despite the change in the Hamiltonian. 
The gaplessness of the NGM is governed by symmetry and locality, which are retained in the scar system, rendering $|k\rangle$ significant.
Indeed, Fig.\ref{fig:jdm}(a) illustrates a linear dispersion and vanishing energy variance as $k \rightarrow 0$.
The exact expressions for the energy expectation and variance can be obtained (See Appendix \ref{apx:fsm} for the proof):
\begin{eqnarray}\label{eq:disp}
	\varepsilon_k&\equiv&\langle{k}|\hat{H}-E_0|k\rangle= 2s D_1k+O(k^2),\\
	\nonumber
	\delta\varepsilon_k&\equiv&\sqrt{\langle{k}|(\hat{H}-E_0)^2|k\rangle-\varepsilon_k^2}=\sqrt{2s}|D_2k|+O(k^2).
\end{eqnarray}
The linear dispersion and vanishing energy variance facilitate the effective transport of quantum numbers. 
Consider a Gaussian wave packet constructed from $|k\rangle$'s, representing a quasiparticle with momentum $k=0$ and position $x=x_0$:
\begin{equation}
	|x_0\rangle=\int{dp}\exp\left(ipx_0-\frac{p^2}{2\delta{k}^2}\right)|p\rangle.
\end{equation}
From Eq.(\ref{eq:disp}), we find that this wave packet is a coherent quasiparticle with lifetime $\tau=|D_2|^{-1}\delta{k}^{-1}$, transporting an integer spin $\delta{S}_z=\hbar$ along the chain with velocity $v=2sD_1$. 
Fig.\ref{fig:jdm}(b) illustrates this coherent propagation using the model parameters provided in the caption. 
It is noteworthy that as $S_z$ is not conserved in $H$, the quantum number transported by the quasi-magnon is not related to the symmetry group $G$, but rather the quasi-symmetry group $\tilde{G}$.

We remark here that the scar initial state need not necessarily be an FM state; rather, it can possess arbitrary order, provided it is parameterized by a Lie group. An example of such a configuration frequently encountered is a ``$\pi$-ferromagnetic'' spin configuration:
\begin{equation}
	|\pi\text{FM}_{\theta,\phi}\rangle 
	= \includegraphics[width=0.1\textwidth,valign=c]{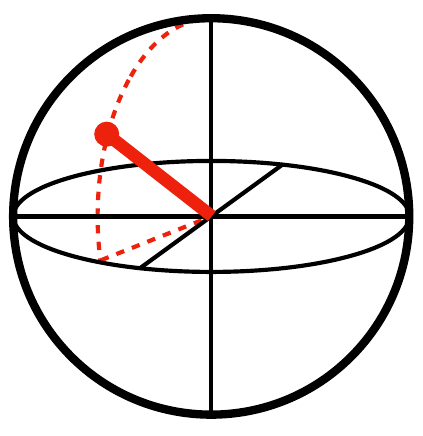}\otimes \includegraphics[width=0.1\textwidth,valign=c]{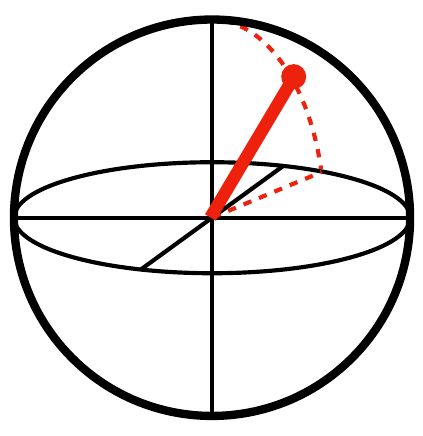}\otimes \cdots,
\end{equation}
where the spin configuration consists of even-odd spins that differ by a $\pi$-rotation around the $z$-axis.
The symmetric space spanned by $|\pi\text{FM}_{\theta,\phi}\rangle$ also possesses SO(3) symmetry, but the definition of spin rotation differs, with the generator being $\{\hat s^x,\hat s^y, \hat s^z\}$ on the odd sites while $\{-\hat s^x,-\hat s^y, \hat s^z\}$ on the even sites.

\section{qNGM in the AKLT scar model}

\begin{figure*}
\includegraphics[width=\linewidth]{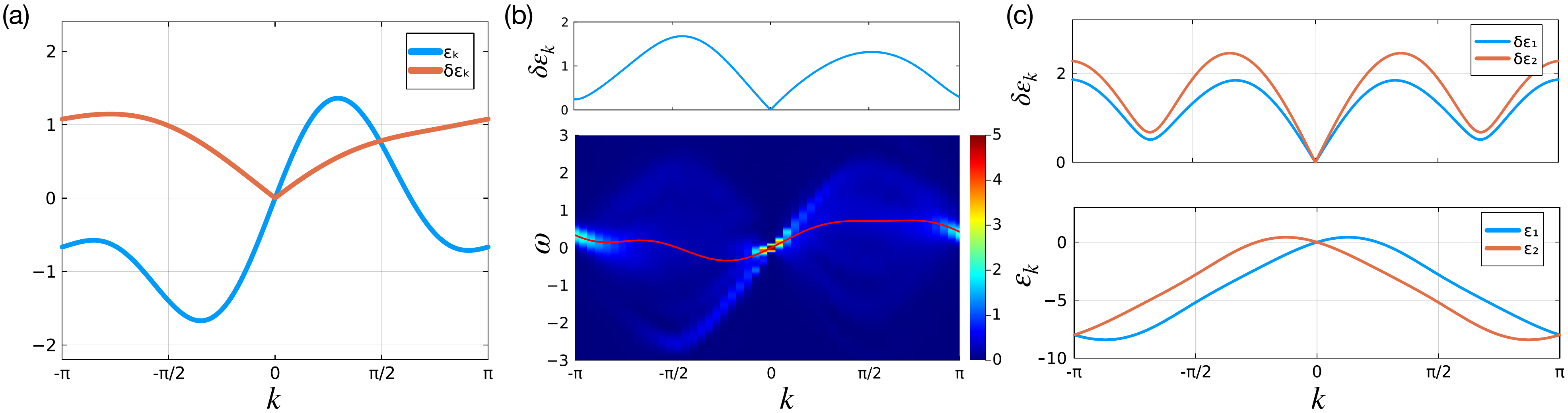}
\caption{(a)(b) Illustrations of two types of qNGMs for the generalized AKLT model defined in Eq.(\ref{eq:gen_aklt}). (c) Depiction of the SU(3) scar model, detailed in Appendix \ref{apx:qngmsu3}. (a) The energy expectation and variance for the z-qNGM state $|k,x\rangle$. (b) Bottom: Spectral function $\tilde{A}(k,\omega)$ of the x-qNGM for the generalized AKLT model. The computation employs a Chebyshev expansion for an $L=50$ system with a maximum bond dimension $\chi=30$, and the Chebyshev series is calculated up to order $n=100$. The red line represents the dispersion of the energy expectation $\varepsilon_k$. Top: Energy variance $\delta\varepsilon_k$ of the x-qNGM. Note that at $k\approx \pi$, there seems to be a coherent mode, but the energy variance is always positive. (c) Bottom: Energy expectation of two modes in the SU(3) scar model. Top: Energy variance of two modes.}
\label{fig:aklt}
\end{figure*}

In the context of FM scar model, the qNGM emerges through spatial modulation in the symmetry generator $\hat{a}^\dag$.
However, most exact scar spaces do \emph{not} have a clear symmetry structure, and the scar initial states are not product states as in the previous case, but matrix-product states \cite{mps-1,mps-2}, of which the AKLT model is a prime example.

Consider the Hamiltonian
\begin{equation}\label{eq:AKLT_scar}
	\hat{H}_\text{AKLT}=\sum_{j}\frac{J}{2}\left[\hat{\mathbf{l}}_j\cdot\hat{\mathbf{l}}_{j+1}+\frac{1}{3}(\hat{\mathbf{l}}_j\cdot\hat{\mathbf{l}}_{j+1})^2\right]-h\hat{l}^z_j,
\end{equation}
where $\hat{\mathbf{l}}_j$ is the spin-1 operator at site $j$.
In Ref.~\cite{AKLT-1}, the authors discovered a scar tower in Eq.(\ref{eq:AKLT_scar}), stemming from the ground state $|\text{AKLT}\rangle$, and generated by a ``one-way ladder'' $\hat Q^\dagger \equiv \sum_j(-1)^j (\hat l^+)^2$. 
This ladder creates a tower of scar states by $|n+1\rangle=\hat Q^\dagger |n\rangle$, but $\hat Q|n+1\rangle\ne|n\rangle$, precluding the formation of a symmetry sector. 
The energy spacing within the scar tower is $\Delta{E}=2(J-h)$. 
Setting $J=h$ renders the scar tower degenerate.

Although lacking an apparent symmetry structure, Ref.~\cite{dsymm} regards the degenerate scar space as a ``deformed'' SO(3)-symmetric space, constructed via a two-step process involving two Hilbert spaces: the ``prototype space'' and the physical space, denoted by the subscripts on kets.
In this specific example, the prototype space comprises a chain of $N$ $s=1/2$ spins, while the physical space comprises a chain of $l=1$ spins. 
The scar initial state takes the form: $|\Psi_{\theta,\phi}\rangle_\text{phy} = \hat{W}|\pi\text{FM}_{\theta,\phi}\rangle$, where the operator $\hat W$, termed the deforming operator, is a matrix product operator \cite{mpo-1,mpo-2} (MPO) mapping the prototype space to the physical one under periodic boundary conditions. (PBC).
The deformed state $|\Psi_{\theta,\phi}\rangle_\text{phy}$ represents a PBC MPS \cite{mps-1,mps-2} also parameterized by $\theta$ and $\phi$, graphically represented as:
\begin{equation}\label{eq:deformed_ssb}
	|\Psi_{\theta,\phi}\rangle_\text{phy} 
	=\includegraphics[width=0.5\linewidth,valign=c]{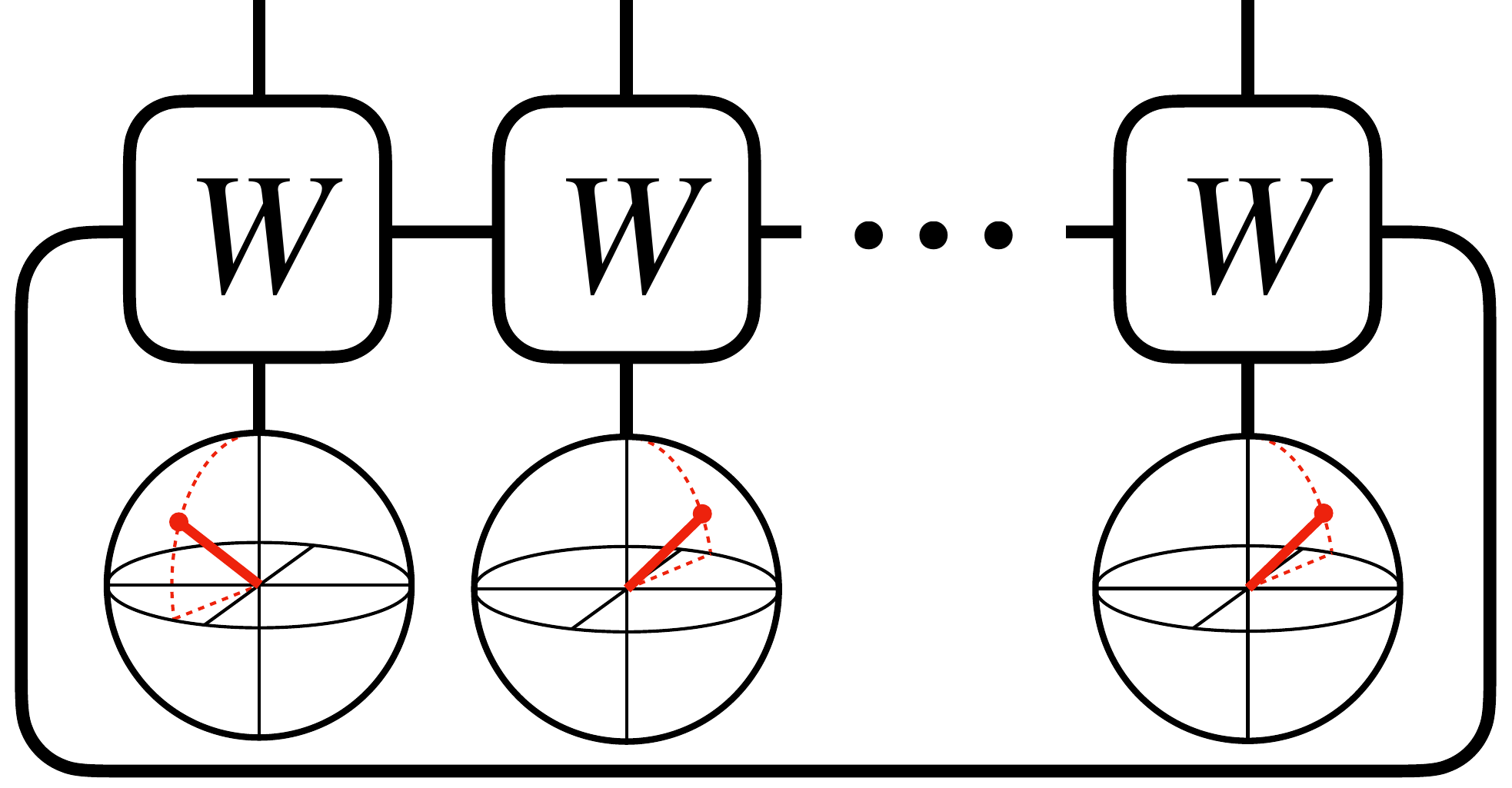},
\end{equation}
where the deforming operator is described by a single four-leg tensor $W^{ls;\alpha\beta}$, with $s=\uparrow,\downarrow$ contracting with the spin index in the prototype space, $l=1,0,-1$ representing the spin index in the physical space, and $\alpha,\beta$ denoting the auxiliary legs encoding entanglement.
For scar dynamics to occur, an additional term is necessary to generate equally spaced energy levels within the tower. This implies that the physical tower must retain at least one U(1) symmetry, represented by:
\begin{equation}
	\hat{W}\exp(-i\hat{S}^z\theta)|\psi\rangle_\text{pro}=\exp(-2i\hat{L}^z\theta)\hat{W}|\psi\rangle_\text{pro}.
\end{equation}
Ref.~\cite{dsymm} elucidates that for the AKLT scar space, $W^{ls;\alpha\beta}$ takes the form:
\begin{equation}
\begin{aligned}
	&W^{1\downarrow} = \sigma^+,\quad 
	W^{0\downarrow} = -\frac{\sigma^z}{\sqrt 2}, \quad 
	W^{-1\downarrow} =  -\sigma^-,\\
	&W^{1\uparrow} = -\sigma^-,\quad 
	W^{0\uparrow} = W^{-1\uparrow} = 0.
\end{aligned}
\end{equation}
Despite that $\hat W$ breaks the spin rotation symmetry in the physical space, the hidden symmetry generators in the prototype space enable the extension of the qNGM theorem to more general scenarios. 
For the general theorem and proof, please refer to the Appendix \ref{apx:dqngt}.

Due to the existence of U(1) quasisymmetry, the spin-$z$ rotation holds particular significance, leading to two distinct types of qNGMs, termed the $z$-qNGM and the $x$-qNGM. 
The $z$-qNGM originates from the physical state:
\begin{equation}
	|\Psi_{\pi,0}\rangle_\text{phy} 
	=\hat{W}\left[\includegraphics[width=0.07\textwidth,valign=c]{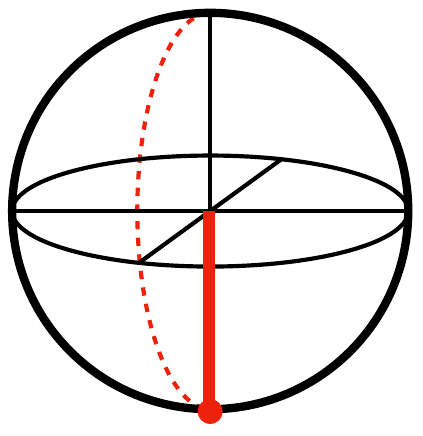}\otimes \includegraphics[width=0.07\textwidth,valign=c]{pics/spin-down}\otimes \cdots\right].
\end{equation}
Utilizing the hidden symmetry within the prototype space, we define the $z$-qNGM as:
\begin{equation}
	|k,z\rangle = \hat{W} \left[ \hat S_{\pi+k}^+\left|\downarrow\cdots\downarrow\right\rangle_\text{pro} \right].
\end{equation}
Notably, when $k=0$, this state precisely corresponds to the first excited state in the AKLT scar tower. 
Hence, the $z$-qNGM takes the form of an AKLT tower state generated by a spatially modulated ladder operator $\hat{Q}^+$. 
Fig.\ref{fig:aklt}(a) illustrates the energy expectation and variance for $|k,z\rangle$. 
It is worth mentioning that in the figure, the computation is actually performed utilizing the generalized AKLT model:
\begin{equation}\label{eq:gen_aklt}
	\hat H = \hat{H}_\text{AKLT} + \hat V,
\end{equation}
where $\hat V$ represents a frustrated term preserving the AKLT tower while introducing finite velocity to qNGMs. 
The precise form of $\hat V$ is given in Appendix \ref{apx:dqngmaklt}.

While $z$-qNGMs are simple to construct and characterize, they are difficult to obtain from a practical point of view.
In the prototype space, all qNGMs are created by acting $\hat{s}^+_k$ to $\left|\Downarrow\right\rangle_\text{pro}$; but in the physical space, the ladder operators do not in general exist. (They do luckily exist in the specific example of AKLT, but not in more general scar towers.)
Hence, we shift our focus to qNGMs originating from the scar initial state
\begin{equation}
	|\Psi_{\frac{\pi}{2},0}\rangle_\text{phy} 
	= \hat{W}\left[\includegraphics[width=0.07\textwidth,valign=c]{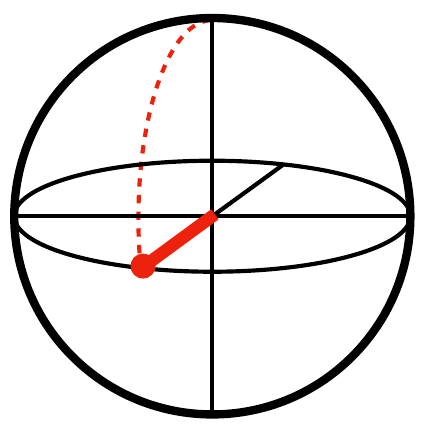}\otimes \includegraphics[width=0.07\textwidth,valign=c]{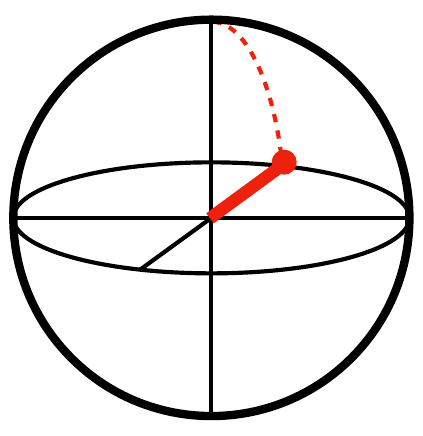}\otimes \cdots \right],
\end{equation}
from which the qNGM can be locally generated. 
Leveraging the U(1) quasisymmetry of the scar space, let $\hat{L}^z_{k}=\sum_j{e}^{ikj}\hat{l}^z_j$, and we define the $x$-qNGM as:
\begin{equation}
	|k, x\rangle \equiv \hat{L}^z_k|\Psi_{\frac{\pi}{2},0}\rangle_\text{phy}.
\end{equation}
Since it can be locally generated, $|k, x\rangle$ proves more relevant for potential experimental detection. 
In Fig.\ref{fig:aklt}(b), we present both the spectral function
\begin{equation}
\tilde{A}(k,\omega)= \operatorname{Im}\int \frac{dt}{i\pi} e^{i\omega t}\langle \Psi_{\frac{\pi}{2},0}| \Theta(t)[\hat{L}^z_{-k}(t),\hat{L}^z_k]|\Psi_{\frac{\pi}{2},0}\rangle,
\end{equation}
and $(\varepsilon_k,\delta\varepsilon_k)$, revealing a nearly coherent mode near $k=0$.

In addition to the AKLT scar mode, Appendix \ref{apx:dqngmother} investigates qNGMs of other scar models featuring deformed symmetric spaces. 
These include the Rydberg-blockaded scar model \cite{domain-wall-0,domain-wall}, the Onsager scar model \cite{Onsager}, and an additional scar tower in the spin-1 XY model \cite{XY-1,XY-2}. 
These findings underscore the universality of qNGM existence in scar models.

\section{General results and remarks}
Using two examples of the FM tower and the AKLT tower, we illustrate the ubiquitous existence of quasi-Nambu-Goldstone modes in quantum many-body scar models.
They are finite-momentum $k\neq 0$, low-entanglement states having linear dispersion and infinite lifetime as $k\rightarrow0$.
To be specific, a wave packet having width $\delta{k}$ in momentum has lifetime $\tau\propto\delta{k}^{-1}$.
(They become the tower eigenstates at $k=0$.)
In both examples, the scar tower has the SU(2) quasisymmetry or deformed symmetry, but more complicated scar towers may be similarly considered.
In Appendix \ref{apx:qngmsu3}, we show that two branches of qNGM exist in an SU(3)-scar model.
(See Fig.\ref{fig:aklt}(c) for energy expectation and variance of two qNGMs.)
Generally, the number of qNGM equals the rank of the underlying quasisymmetry (deformed symmetry), and the dynamics of those qNGM is described by a multi-band Hamiltonian.

Like Nambu-Goldstone modes, qNGM can be excited by local fields from a tower state, and they transport quantum numbers.
But the quantum numbers are those of the quasisymmetry, instead of the symmetry, of the Hamiltonian.
Unlike their low-energy counterpart, qNGM in energy are located within a continuum of \emph{thermal states}, and will relax into those states long after their lifetime (see Fig.\ref{fig:jdm}(b)), since qNGMs are superpositions of \underline{thermal} eigenstates (see Appendix \ref{apx:dqngmaklt} and \ref{apx:qngmpxp}). 

Moreover, the principle of qNGMs from quasisymmetry extends beyond exact scar models to encompass the PXP model, which exhibits a special (approximate) SU(2) quasisymmetry~\cite{PXP-1,PXP-2}. 
Within this framework, we can define generalized qNGMs. 
As detailed in Appendix \ref{apx:qngmpxp}, we delve into the properties of PXP qNGMs by examining both the original PXP model~\cite{PXP-1,PXP-2} and the deformed PXP model~\cite{PXP-4}.
Through numerical simulation, we confirm that these modes also exhibit approximate revival dynamics. 
Notably, for the approximate quantum many-body scars, the gradual thermalization of the small-$k$ qNGMs becomes indistinguishable from the inherent thermalization dynamics of the PXP model. 
Consequently, the qNGM emerges as an interesting initial state for scar dynamics: it dynamically resembles $\mathbb{Z}_2$ evolution, yet its spectrum primarily comprises thermal states rather than scar states.

\begin{appendix}

\section{Quasi-Nambu-Goldestone Theorem}
\label{apx:qngt}
In this section, we present a formal proof of the quasi-Nambu-Goldstone theorem, which predicts the existence of an asymptotic long-live mode associated with quasisymmetry.
To enhance clarity and conciseness, we first review the basis framework of the quasisymmetry framework \cite{qsymm-1} before enunciating the theorem.

\subsection{Quasisymmetry and quasi-Nambu-Goldstone modes}
In this work, we assume the symmetry to be non-Abelian Lie groups, thus possessing high-dimensional irreducible representation. 
The generators of a rank-$n$ Lie group, when expressed in the Cartan-Weyl basis, consist of $n$ pairs of ladder operators and $n$ mutually commuting operators forming the Cartan subalgebra.

The quasisymmetry framework involves the following elements:
\begin{itemize}
	\item a product \textit{anchor state}: $|\Psi_e\rangle = |\psi_e\rangle^{\otimes N}$;
	\item a \textit{quasisymmetry} group $\tilde G$ as a onsite symmetry action: $|\Psi_g\rangle = \hat U_g|\Psi_e\rangle = e^{i \theta_a \hat Q_a}|\Psi_e\rangle$;
	\item a quasisymmetric Hamiltonian $\hat H$ degenerate in the quasisymmetry subspace $\mathcal H_{\tilde G} \equiv \operatorname{span}\{\hat U_g|\Psi_e\rangle |\forall g\in \tilde G\}$;
	\item a spectrum splitting term $\hat H_z = \sum_{j=1}^N \hat q_j^z$ chosen from the generators of the Lie algebra. 
\end{itemize}
We also assume that the anchor state is in the highest-weight representation of $\tilde G$. 
In terms of local state $|\psi_e\rangle$, this means $|\psi_e\rangle$ is the ground state of a local group generator, denoted as $\hat q^z$. 
The product state $|\Psi_e\rangle$ is automatically the eigenstate of $\hat Q^z = \sum_j \hat q_j^z$, and the eigenvalue (corresponding to the weight in mathematics) is maximal.

For a Hamiltonian with quasisymmetry $\tilde G$, on top of the anchor state $|\Psi_e\rangle$, the state 
\begin{equation}
	\hat Q^+|\Psi_e\rangle = \sum_{j=1}^N \hat q_j^+ |\Psi_e\rangle
\end{equation}
is also the eigenstate in the scar space.
Note that we assume the quasisymmetry generator has zero momentum, although other momenta are generally possible, with the generator expressed as 
\begin{equation}
	\hat Q^+_k = \sum_{j=1}^N e^{ikj} \hat q_j^+
\end{equation}
in the general scenario. 
Nonetheless, we can always apply a unitary transformation 
\begin{equation}
	\hat U = \bigotimes_{j=1}^N \exp\left(-i k j \hat q_j^z \right),\quad
	\hat U \hat Q_k^+ \hat U^\dagger = \hat Q^+,
\end{equation}
where $[\hat q_j^z, \hat q_j^+] = \hat q_j^+$, to shift the momentum to zero without breaking the translational symmetry of the local Hamiltonian, and the following theorem remains valid. Hence, we will focus on the case where $k=0$.

The \textit{quasi-Goldstone mode} is defined as 
\begin{equation}
	|k\rangle \equiv \frac{1}{\mathcal N_k} \hat Q_k^+|\Psi_e\rangle = \frac{1}{\mathcal N_k}\sum_j e^{ikj} \hat q_j^+|\Psi_e\rangle,
\end{equation}
where $k$ is a nonzero crystal momentum.
In the forthcoming sections, we demonstrate that any scar model governed by the quasisymmetry group $\tilde G$ possesses a quasi-Goldstone mode, characterized by an energy variance that converges to zero as $k$ approaches zero.

\subsection{Quasi-Nambu-Golstone theorem}

\begin{thm}[Quasi-Nambu-Golstone theorem]
For a translational invariant local Hamiltonian $\hat H$ with quasisymmetry $\tilde G$ and ladder operator $\hat Q$.
The energy expectation 
\begin{equation*}
	\varepsilon_k \equiv E_k-E_0= \langle k|\hat H|k\rangle -\langle 0|\hat H|0\rangle 
\end{equation*}
and the energy variance
\begin{equation*}
	\delta \varepsilon^2_k \equiv \langle k|\left(\hat H-\langle k|\hat H|k\rangle\right)^2|k\rangle
\end{equation*}
is a continuous function of $k$, consequently converging to zero as $k$ approaches zero.
\end{thm}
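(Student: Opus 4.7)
My plan is to reduce continuity of $\varepsilon_k$ and $\delta\varepsilon_k^2$ in $k$ to the manifest continuity of finite trigonometric polynomials, and to reduce their vanishing at $k=0$ to a single commutator identity that follows from the degeneracy of the scar subspace. The starting observation is that both $|\Psi_e\rangle$ and $\hat Q^+|\Psi_e\rangle$ lie in $\mathcal H_{\tilde G}$ and are therefore eigenstates of $\hat H$ at the common energy $E_0$. Hence $[\hat H,\hat Q^+]|\Psi_e\rangle = 0$, which, writing the local operator $\hat A_j\equiv[\hat H,\hat q_j^+]$, is the sum rule $\sum_j\hat A_j|\Psi_e\rangle=0$. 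Locality of $\hat H$ as a sum of bounded-range terms forces each $\hat A_j$ to be supported in a uniformly bounded neighborhood of site $j$.

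I would next express $(\hat H-E_0)|k\rangle=\mathcal N_k^{-1}\sum_j e^{ikj}\hat A_j|\Psi_e\rangle$ (using $\hat H|\Psi_e\rangle=E_0|\Psi_e\rangle$), so that
\begin{equation*}
\varepsilon_k=\frac{1}{\mathcal N_k^2}\sum_{j,l}e^{ik(j-l)}\langle\Psi_e|\hat q_l^-\hat A_j|\Psi_e\rangle,
\end{equation*}
and $\varepsilon_k^2+\delta\varepsilon_k^2=\|(\hat H-E_0)|k\rangle\|^2=\mathcal N_k^{-2}\sum_{j,l}e^{ik(j-l)}\langle\Psi_e|\hat A_l^\dagger\hat A_j|\Psi_e\rangle$. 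The crucial input is that $|\Psi_e\rangle$ is a \emph{product} state whose factor is the lowest weight of $\hat q^z$, so $\hat q^-|\psi_e\rangle=0$ and $\langle\Psi_e|\hat A_j|\Psi_e\rangle=\langle\Psi_e|[\hat H,\hat q_j^+]|\Psi_e\rangle=0$. For a product state, expectation values of local operators with disjoint supports factorize; combined with these two vanishing one-point functions, this kills every disconnected contribution to the two-point functions above, leaving only the connected parts where the two operators overlap. Consequently $G(m)\equiv\langle\Psi_e|\hat q_0^-\hat A_m|\Psi_e\rangle$ and $F(m)\equiv\langle\Psi_e|\hat A_0^\dagger\hat A_m|\Psi_e\rangle$ vanish for $|m|>R$ with some fixed range $R$.

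Exploiting translation invariance, each double sum collapses to a single Fourier sum in the relative coordinate, with an overall factor $N$ from the center-of-mass sum, giving $\varepsilon_k=(N/\mathcal N_k^2)\sum_{|m|\le R}e^{-ikm}G(m)$ and $\varepsilon_k^2+\delta\varepsilon_k^2=(N/\mathcal N_k^2)\sum_{|m|\le R}e^{-ikm}F(m)$. The same product-state argument reduces $\mathcal N_k^2$ itself to a $k$-independent positive constant linear in $N$, since only the $l=j$ term survives in its double sum (again because $\hat q^-|\psi_e\rangle=0$). Each quantity is then a trigonometric polynomial of degree at most $R$ in $e^{ik}$, and is therefore continuous in $k$. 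Evaluating at $k=0$, the sum rule $\sum_j\hat A_j|\Psi_e\rangle=0$ gives $\sum_m G(m)=\langle\Psi_e|\hat q_0^-\sum_j\hat A_j|\Psi_e\rangle=0$ and $\sum_m F(m)=\langle\Psi_e|\hat A_0^\dagger\sum_j\hat A_j|\Psi_e\rangle=0$, so $\varepsilon_0=0$ and $\varepsilon_0^2+\delta\varepsilon_0^2=0$; continuity then delivers the theorem.

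The main difficulty I anticipate is organizational rather than analytic: one must verify carefully that the product-state factorization eliminates \emph{every} long-range piece in both $F$ and $G$, which relies specifically on both vanishing one-point functions above, i.e.\ on the lowest-weight property of $|\psi_e\rangle$ together with $|\Psi_e\rangle$ being a genuine product rather than a generic matrix-product state. This is presumably also why the deformed case, where the anchor is replaced by an MPS, must be handled separately in Appendix~\ref{apx:dqngt} with additional technology.
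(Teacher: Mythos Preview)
Your argument is correct and shares the paper's overall architecture---commutator identity, locality, product-state cluster decomposition, reduction to a finite trigonometric polynomial---but differs in two organizational choices. First, the paper groups $[\hat H,\hat Q_k^+]$ by the Hamiltonian-term index, defining the $k$-dependent block operator $\hat\Phi_l^+(k)=\sum_{j=0}^m e^{ikj}[\hat h_l,\hat q_{l+j}^+]$, whereas you group by the ladder-operator site via $\hat A_j=[\hat H,\hat q_j^+]$; these are equivalent rearrangements of the same double sum. Second, and more substantively, the paper does \emph{not} invoke the vanishing one-point functions $\langle\hat q^\pm\rangle=0$ and $\langle\hat A_j\rangle=0$ that you extract from the lowest-weight property; instead it uses the identity $\sum_{j=1}^N e^{ikj}=0$ at nonzero crystal momentum (Eq.~\eqref{eq:sum-to-zero}) to convert each infinite disconnected sum into a finite one, keeping all one-point functions explicit. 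Your route is slightly cleaner for the product lowest-weight anchor at hand, but the paper's sum-to-zero trick is what enables the extension to tower states $|\Psi_{\bm M}\rangle$ in the subsequent corollary, where the base state is merely permutation-invariant rather than a product state and factorization of disconnected correlators fails; it also foreshadows the MPS treatment in Appendix~\ref{apx:dqngt}, whose necessity you correctly anticipated.
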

\begin{rmk}
The original Goldstone theorem is formulated within the context of field theory, assuming the dispersion $\varepsilon_k$ as a continuous function of $k$, with the gaplessness being evident due to $\varepsilon_{k=0}=0$. 
In contrast, our approach in this study begins with a more concrete lattice model, abstaining from presuming the existence of a well-defined field theory. 
For a system comprising $N$ sites, the normalization $\mathcal{N}(N,k)$, energy $\varepsilon(N,k)$, and energy variance $\delta \varepsilon^2(N,k)$ are dependent on both $N$ and $k$. 
We shall establish that, as the system approaches the thermodynamic limit with $N\rightarrow \infty$, the resulting energy $\varepsilon_k$ and variance $\delta \varepsilon^2_k$ become well-defined and continuous functions of $k$.
\end{rmk}

\begin{lem}[Normalization]
The normalization factor $\mathcal N_k$ of the quasi-Goldstone mode $|k\rangle$ is independent of $k$, and is proportional to $\sqrt N$:
\begin{equation*}
	\mathcal N_k = \mathcal N =\sqrt N \Delta q,
\end{equation*}
where $\Delta q$ is determined by:
\begin{equation*}
	\Delta q^2 = \langle \psi_e|\hat q^- \hat q^+|\psi_e\rangle -|\langle\psi_e|\hat q^+|\psi_e\rangle|^2.
\end{equation*}
Here $|\psi_e\rangle$ is the local state that defines the anchor state, $\hat q^- = (\hat q^+)^\dagger$, and $\hat q^+$ is the local generator that defines the quasisymmetry action.
\end{lem}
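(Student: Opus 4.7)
The plan is to compute $\mathcal{N}_k^2 = \langle\Psi_e|(\hat{Q}_k^+)^\dagger\hat{Q}_k^+|\Psi_e\rangle$ directly from the definition and exploit the fact that the anchor state factorizes as the product $|\Psi_e\rangle=|\psi_e\rangle^{\otimes N}$. Substituting $\hat{Q}_k^+=\sum_j e^{ikj}\hat{q}_j^+$, I would expand the norm as the double lattice sum $\sum_{j,j'}e^{ik(j-j')}\langle\Psi_e|\hat{q}_{j'}^-\hat{q}_j^+|\Psi_e\rangle$ and split it into diagonal ($j=j'$) and off-diagonal ($j\neq j'$) contributions.

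The diagonal piece collapses at once because each single-site operator touches only one factor of the product state, yielding $N\langle\psi_e|\hat{q}^-\hat{q}^+|\psi_e\rangle$. For the off-diagonal piece the two operators act on disjoint sites, so the matrix element factorizes into $\langle\psi_e|\hat{q}^-|\psi_e\rangle\langle\psi_e|\hat{q}^+|\psi_e\rangle=|\langle\psi_e|\hat{q}^+|\psi_e\rangle|^2$, a $(j,j')$-independent constant that can be pulled out of the sum. The remaining lattice sum $\sum_{j\neq j'}e^{ik(j-j')}=\bigl|\sum_j e^{ikj}\bigr|^2-N$ reduces to $-N$ whenever $k$ is a nonzero crystal momentum ($k=2\pi m/N$ with $m\not\equiv 0\bmod N$) under periodic boundary conditions, because the geometric sum $\sum_j e^{ikj}$ vanishes. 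Combining the two pieces gives $\mathcal{N}_k^2=N\bigl(\langle\psi_e|\hat{q}^-\hat{q}^+|\psi_e\rangle-|\langle\psi_e|\hat{q}^+|\psi_e\rangle|^2\bigr)=N\,\Delta q^2$, manifestly independent of $k$, which is the claim.

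There is no substantive obstacle; the only point requiring care is the invocation of periodic boundary conditions to guarantee that the geometric phase sum vanishes, which is precisely what converts the naive off-diagonal contribution $|\langle\psi_e|\hat{q}^+|\psi_e\rangle|^2$ into the full variance $\Delta q^2$ rather than only its second-moment part. I would also briefly remark that when $|\psi_e\rangle$ is a genuine extremal-weight state, $\hat{q}^+|\psi_e\rangle$ carries a different $\hat{q}^z$-eigenvalue from $|\psi_e\rangle$, so $\langle\psi_e|\hat{q}^+|\psi_e\rangle=0$ by orthogonality of weight spaces and $\Delta q^2$ reduces to $\langle\psi_e|\hat{q}^-\hat{q}^+|\psi_e\rangle$; in this case the $\sqrt{N}\,\Delta q$ scaling holds uniformly in $k$ (including $k=0$), so the restriction to nonzero crystal momenta is only needed to handle the general case where the anchor state is not a weight eigenstate.
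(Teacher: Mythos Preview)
Your proposal is correct and follows essentially the same approach as the paper's own proof: expand the double sum, factorize the off-diagonal correlator using the product structure of $|\Psi_e\rangle$, and use the vanishing of the geometric sum $\sum_j e^{ikj}$ at nonzero crystal momentum to convert the off-diagonal piece into $-N|\langle\psi_e|\hat q^+|\psi_e\rangle|^2$. Your additional remark that $\langle\psi_e|\hat q^+|\psi_e\rangle=0$ for an extremal-weight anchor (so that the restriction to $k\neq 0$ is then unnecessary) is a welcome clarification that the paper leaves implicit.
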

\begin{proof}
The inner product $\langle k|k\rangle$ gives:
\begin{equation*}
	\mathcal N^2_k = \sum_j \langle\Psi_e| \hat q_j^- \hat q_j^+|\Psi_e\rangle + \sum_{j\ne l} e^{ik(j-l)}\langle\Psi_e| \hat q^-_j \hat q_l^+|\Psi_e\rangle.
\end{equation*}
Since $\Psi_e$ is a translational-invariant product state, the expectation $\langle \hat q_j^+ \hat q_l\rangle$ for $j\ne l$ is simply $\langle \hat q^+\rangle\langle \hat q\rangle=|\langle \hat q\rangle|^2$.
Also, for nonzero crystal momentum $k$,
\begin{equation}\label{eq:sum-to-zero}
\begin{aligned}
	\sum_{j,l=1}^N e^{ik(l-j)}\langle \hat q_j^- \hat q_l^+\rangle &= |\langle \hat q^+\rangle|^2\sum_{j,l=1}^N e^{ik(l-j)} = 0 \\
	\quad \Longrightarrow\quad
	\sum_{j\ne l} e^{ik(j-l)}\langle \hat q_j^- \hat q_l^+\rangle &= -\sum_j |\langle \hat q_j^+\rangle|^2.
\end{aligned}
\end{equation}
The normalization thus becomes a finite summation.
From the translational invariance, we get a compact expression:
\begin{equation*}
\begin{aligned}
	\mathcal N_k^2 &= \sum_j \left[\langle \hat q_j^- \hat q_j^+\rangle - |\langle \hat q_j^+\rangle|^2\right] \\
	&= N \left( \langle \psi_e|\hat q^- \hat q^+|\psi_e\rangle - \langle \psi_e|\hat q^-|\psi_e\rangle\langle \psi_e|\hat q^+|\psi_e\rangle \right),
\end{aligned}
\end{equation*}
and therefore proved the lemma.
\end{proof}

\begin{proof}[\textbf{Proof of the main theorem}]

We assume that the translational invariant Hamiltonian has the form 
\begin{equation}
	\hat H = \sum_{j=1}^N \hat h_j,
\end{equation}
where $\hat h_j$ acts nontrivially on the block of sites [$j,\cdots,j+m$].

\noindent\textbf{Energy dispersion:}
The energy expectation of $|k\rangle$ is
\begin{equation*}
\begin{aligned}
	\varepsilon_k &= \frac{1}{\mathcal N^2} \left( \langle \hat Q_k^- \hat H \hat Q_k^+ \rangle - \langle \hat Q^- \hat H \hat Q^+\rangle \right) \\
	&= \frac{1}{\mathcal N^2} \left(
		\langle \hat Q_k^- [\hat H,\hat Q_k^+]\rangle + \langle \hat Q_k^-  \hat Q_k^+ \hat H \rangle - \langle \hat Q^- \hat Q^+ \hat H\rangle \right) \\
	&= \frac{1}{\mathcal N^2} \langle \hat Q_k^- [\hat H,\hat Q_k^+]\rangle.
\end{aligned}
\end{equation*}	
We need to show next that $\langle \hat Q_k^- [\hat H,\hat Q_k^+]\rangle = N f(k)$ where $f(k)$ is a continuous function.
From the locality of the Hamiltonian, the commutator can be reduced to
\begin{equation*}
\begin{aligned}
	\left[\hat H, \hat Q_k^+\right] 
	&= \sum_l \sum_j e^{ikj} \left[\hat h_l, \hat q_j^+ \right] \\
	&= \sum_l e^{ikl}\left( \sum_{j=0}^{m} e^{ikj} \left[\hat h_l, \hat q_{l+j}^+ \right] \right).
\end{aligned}
\end{equation*}
We can further define an operator 
\begin{equation}
	\hat \Phi_l^+(k) = \sum_{j=0}^{m} e^{ikj} \left[\hat h_l, \hat q_{l+j}^+ \right]
\end{equation}
acting on block $[l,\dots,l+m]$.
$\hat \Phi_l^+(k)$ is translational invariant (its action does not depend on $l$), while it depends on momentum $k$.
Using this notation, the summation can then be simplified to 
\begin{equation*}
\begin{aligned}
	\varepsilon_k =&\ \frac{1}{\mathcal N^2} \sum_{j,l=1}^N e^{ik(l-j)} \langle \hat q_j^- \hat\Phi_l^+(k) \rangle \\
	=&\ \frac{1}{\mathcal N^2} \sum_{l=1}^N\left[ \sum_{j\in [l,l+m]} e^{ik(l-j)} \langle \hat q^-_j \hat\Phi_l^+(k) \rangle \right. \\
	&\left. + \sum_{j\notin [l,l+m]} e^{ik(l-j)} \langle \hat q^-_j \hat\Phi^+_l(k) \rangle\right].
\end{aligned}
\end{equation*}
Since the $|\Psi_e\rangle$ is a product state, the cluster decomposition always holds whenever $\hat q_j^-$ and $\hat\Phi_l^+(k)$ have no spatial overlap:
\begin{equation*}
\begin{aligned}
	\sum_{j\notin[l,l+m]} e^{ik(l-j)}\langle \hat q^-_j \hat\Phi_l^+(k) \rangle
	&= \sum_{j\notin[l,l+m]} e^{ik(l-j)}\langle \hat q^-\rangle \langle \hat\Phi_l^+(k) \rangle \\
	&= -\sum_{j=0}^me^{-ikj} \langle \hat q^- \rangle \langle \hat\Phi^+(k) \rangle,
\end{aligned}
\end{equation*}
where we have use the same summation trick in Eq.~(\ref{eq:sum-to-zero}).
Therefore, 
\begin{equation}\label{eq:energy-expect}
	\varepsilon_k = \frac{1}{\Delta q^2} \sum_{j=0}^m e^{-ikj} \left[\langle \hat q^-_{j} \hat\Phi_{0}^+(k) \rangle - \langle \hat q^-\rangle \langle \hat\Phi^+(k) \rangle\right].
\end{equation}
Since the expression is a finite sum of analytic functions, $\varepsilon_k$ is an analytical function of $k$.
By definition, $\varepsilon_{k=0} = 0$, therefore, $\varepsilon_k \rightarrow 0$ when $k\rightarrow 0$.
We have proved the first part of the theorem.

\noindent\textbf{Energy variance:}
The direct calculation of $\delta \varepsilon_k^2$ gives:
\begin{equation*}
\begin{aligned}
	\delta \varepsilon_k^2 
	=&\ \frac{1}{\mathcal N^2} \left\langle \hat Q_k^- (\hat H - E_k)^2 \hat Q_k^+ \right\rangle \\
	=&\ \frac{1}{\mathcal N^2} \left\langle \hat Q_k^- (\hat H - E_k) \left[\hat H- E_k, \hat Q_k^+\right]\right\rangle \\
	& +\frac{\varepsilon_k}{\mathcal N}\left\langle \hat Q_k^- (\hat H - E_k) \hat Q_k^+\right\rangle.
\end{aligned}
\end{equation*}
The second term on the right-hand side vanishes since $\frac{1}{\mathcal N}\langle \hat Q_k^- \hat H \hat Q_k^+\rangle = E_k$.
The expression then becomes
\begin{equation*}
\begin{aligned}
	\delta \varepsilon_k^2 
	=&\ \frac{1}{\mathcal N^2} \left\langle \hat Q_k^- (\hat H-E_k) \left[\hat H,\hat Q_k^+\right]\right\rangle \\
	=&\ \frac{1}{\mathcal N^2} \left\langle \left[\hat Q_k^-, \hat H-E_k \right] \left[\hat H, \hat Q_k^+\right] \right\rangle \\
	& +\frac{1}{\mathcal N^2} \left\langle (\hat H-E_k)\hat Q_k^- \left[\hat H, \hat Q_k^+\right] \right\rangle\\
	=&\ \frac{1}{\mathcal N^2} \left\langle \left[\hat H,\hat Q_k^+ \right]^\dagger \left[\hat H, \hat Q_k^+\right] \right\rangle - \varepsilon_k^2.
\end{aligned}
\end{equation*}
Using the notation of $\hat\Phi^+(k)$, the commutator is:
\begin{equation*}
	\left[\hat H, \hat Q_k^+\right] 
	= \sum_l e^{ikl} \hat\Phi_l^+(k).
\end{equation*}
Therefore the energy variance is
\begin{equation*}
\begin{aligned}
	\delta \varepsilon_k^2 + \varepsilon_k^2 =&\ \frac{1}{\mathcal N^2} \sum_{l=1}^N \left[\sum_{j=-m}^m e^{-ikj} \langle \hat \Phi^-_{l+j} \hat \Phi_l^+ \rangle \right. \\
	& \left. + \sum_{j\notin[-m,m]} e^{-ikj} \langle \hat\Phi^-_{l+j}\rangle \langle \hat\Phi_l^+ \rangle \right].
\end{aligned}
\end{equation*}
The disconnected sum becomes
\begin{equation*}
	\sum_{j\notin[-m,m]} e^{-ikj} \langle \hat\Phi^-_{l+j}\rangle \langle \hat\Phi_l^+ \rangle
	= -\sum_{j=-m}^m e^{-ikj} |\langle \hat\Phi^+ \rangle|^2.
\end{equation*}
Therefore,
\begin{equation}\label{eq:variance-expect}
\begin{aligned}
	\delta \varepsilon_k^2
	= \frac{1}{\Delta q^2} \sum_{j=-m}^m e^{-ikj} \left[\left\langle \hat\Phi^-_{i_0+j} \hat\Phi_{i_0}^+ \right\rangle-|\langle \hat\Phi \rangle|^2 \right] - \varepsilon_k^2.
\end{aligned}
\end{equation}
Being a finite sum of analytic functions, $\delta \varepsilon_k^2$ is also analytic.
By definition, $\delta \varepsilon_{k=0}^2=0$ and $\delta \varepsilon_{k=0}^2 \ge 0$ for all $k$, the asymptotic behavior can only be:
\begin{equation}
	\Delta H^2(k) \sim O(k^2).
\end{equation}
We, therefore, proved the quasi-Goldstone theorem.
\end{proof}	

\subsection{Remarks and corollaries}
The quasi-Goldstone theorem only requires the energy dispersion to be gapless, i.e., $\varepsilon_{k\rightarrow 0} \rightarrow 0$.
The following corollary further states that, for the quasi-goldstone mode to have nonzero velocity, the Hamiltonian should be frustrated, i.e., the scar state $|\Psi_g\rangle$ is not an eigenstate of local term $h_j$.
\begin{cor}[Frustration and velocity]
For frustration-free Hamiltonians, the energy expectations $\varepsilon_k \propto O(k^2)$.
\end{cor}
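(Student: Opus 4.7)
The plan is to exploit the frustration-free condition in its strongest form: every scar state $|\Psi_g\rangle$ is an eigenstate of every local term $\hat h_l$ with the same local eigenvalue $\epsilon$, so that $\hat V_l \equiv \hat h_l - \epsilon$ annihilates the entire scar space and, in particular, $\hat V_l \hat Q^+|\Psi_e\rangle = 0$. Starting from $\hat H - E_0 = \sum_l \hat V_l$, I write $\varepsilon_k \mathcal{N}^2 = \sum_l \langle \phi_k|\hat V_l|\phi_k\rangle$ with $|\phi_k\rangle \equiv \hat Q_k^+|\Psi_e\rangle$ and decompose $|\phi_k\rangle = \hat Q^+|\Psi_e\rangle + |\delta_k\rangle$, where $|\delta_k\rangle = \sum_j(e^{ikj}-1)\hat q_j^+|\Psi_e\rangle$. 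Because $\hat V_l$ is Hermitian and annihilates $\hat Q^+|\Psi_e\rangle$ on both sides, every cross term drops out and each summand collapses to $\langle \delta_k|\hat V_l|\delta_k\rangle$.

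Next, cluster decomposition on the product state $|\Psi_e\rangle$, combined with $\hat V_l|\Psi_e\rangle = 0$, restricts the surviving contributions to $\hat q_j^\pm$ sitting inside the support $[l,l+m]$ of $\hat V_l$. Writing $j = l+r$ and $j' = l+r'$ with $r,r'\in[0,m]$, translation invariance makes $M_{r',r} \equiv \langle \Psi_e|\hat q_{r'}^- \hat V_0 \hat q_r^+|\Psi_e\rangle$ independent of $l$. Summing over $l$ and invoking $\sum_l e^{\pm ikl} = 0$ for any nonzero crystal momentum collapses the $l$-sum to
\[
\sum_l \langle \delta_k|\hat V_l|\delta_k\rangle = N\sum_{r,r'=0}^{m}\bigl(e^{ik(r-r')}+1\bigr)M_{r',r}.
\]
Taylor expanding the plane wave, the $k^0$ coefficient is proportional to $\sum_{r,r'}M_{r',r} = \langle \Psi_e|\hat Q_{\mathrm{loc}}^- \hat V_0 \hat Q_{\mathrm{loc}}^+|\Psi_e\rangle$ with $\hat Q_{\mathrm{loc}}^\pm \equiv \sum_{r=0}^{m}\hat q_r^\pm$; locality of $\hat V_0$ (it commutes with $\hat q_r^\pm$ outside $[0,m]$ and annihilates $|\Psi_e\rangle$) lets me replace $\hat Q_{\mathrm{loc}}^+$ by the full $\hat Q^+$, and the frustration-free identity $\hat V_0\hat Q^+|\Psi_e\rangle = 0$ then gives zero. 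The $k^1$ coefficient $\sum_{r,r'}(r-r')M_{r',r}$ splits into $\sum_{r}r\sum_{r'}M_{r',r}$ and $\sum_{r'}r'\sum_r M_{r',r}$, each of whose inner sums reduces by the same argument to $\langle \Psi_e|\hat Q_{\mathrm{loc}}^-\hat V_0\cdots\rangle$ or $\langle\cdots \hat V_0\hat Q_{\mathrm{loc}}^+|\Psi_e\rangle$, hence also zero. Thus $\sum_l\langle\delta_k|\hat V_l|\delta_k\rangle = O(Nk^2)$, and dividing by $\mathcal N^2 = N\Delta q^2$ yields $\varepsilon_k = O(k^2)$.

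The hard part is making sure that the $O(k)$ cancellation genuinely requires the strong frustration-free condition and not merely $\hat h_l|\Psi_e\rangle = \epsilon|\Psi_e\rangle$: without the extra requirement that $\hat V_l$ kill the raised tower state, the identity $\sum_r[\hat V_0,\hat q_r^+]|\Psi_e\rangle = 0$ breaks and a linear dispersion reappears---precisely the behaviour of the Dzyaloshinskii--Moriya term $-D_1(\hat{\mathbf s}_j\times\hat{\mathbf s}_{j+1})_z$ in Eq.~(\ref{eq:jdm_ham}), which annihilates $|\Psi_e\rangle$ locally but acts nontrivially on $\hat S^+|\Psi_e\rangle$ and produces the linear piece $\varepsilon_k \sim 2sD_1 k$ forbidden by the corollary in the strongly frustration-free case.
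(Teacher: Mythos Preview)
Your proof is correct and ultimately rests on the same key identity as the paper, namely $\hat V_l\,\hat Q^+|\Psi_e\rangle = 0$, but the organization is genuinely different. The paper's argument is a two-line computation: after fixing one local term by translation invariance and setting $E_0=0$, it writes
\[
\varepsilon_k = \frac{1}{\Delta q^2}\sum_{j,l} e^{ik(l-j)}\langle\Psi_e|\hat q_j^-\,\hat h_0\,\hat q_l^+|\Psi_e\rangle,
\]
differentiates once at $k=0$ to bring down the factor $(l-j)$, splits it as $l\sum_j(\cdots)-j\sum_l(\cdots)$, and notes that each inner sum reconstructs the full ladder $\hat Q^\pm$, so that frustration-freeness kills both pieces directly. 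Your route---decompose $|\phi_k\rangle=\hat Q^+|\Psi_e\rangle+|\delta_k\rangle$, drop the cross terms, restrict to the support of $\hat V_l$, sum over $l$, then Taylor-expand---reproduces the same cancellations in two stages (your $k^0$ step merely re-encodes $\varepsilon_0=0$; your $k^1$ step is the paper's derivative argument localized to $[0,m]$). What your version buys is that the index ranges are finite \emph{before} you expand in $k$, so analyticity is manifest and there is no formal $\sum_l l\cdot 0$ to worry about; what the paper's version buys is brevity. Your closing remark on the $D_1$ DM term is a useful sanity check but plays no role in the logic of the proof itself.
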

\begin{proof}
Without loss of generality, we assume $\langle 0|\hat H|0\rangle=0$. 
In the proof above, the expression for the energy expectation is
\begin{equation*}
	\varepsilon_k = \frac{1}{\mathcal N^2} \langle \hat Q_k^- H \hat Q^+_k\rangle 
	= \frac{1}{\Delta q^2} \sum_{j,l} e^{ik(l-j)} \langle \hat q_j \hat h_0 \hat q^+_l\rangle.
\end{equation*}
The derivative near $k=0$ is
\begin{equation*}
\begin{aligned}
	-i\Delta q^2 \left.\partial_k E_k \right|_{k=0} 
	&= \sum_{j,l} (l-j) \langle \hat q_j \hat h_0 \hat q_l^+\rangle \\
	&= \sum_{j,l} l \langle \hat q_j \hat h_0 q_l^+\rangle - \sum_{j,l} j \langle \hat q_j \hat h_0 \hat q_l^+\rangle.
\end{aligned}
\end{equation*}
The right-hand-side vanishes since $\hat h \sum_j \hat q^+_j|\Psi_e\rangle = 0$ for frustration-free Hamiltonian.
\end{proof}

Our discussion on the quasi-Goldstone mode now is focused on the highest weight state. 
The conclusion however can be generalized to tower states in the quasisymmetric scar models.
Notably, the original asymptotic scar \cite{asymscar} in spin-1 XY model is defined on top of a tower state.

\begin{cor}
For the scar tower state taking the form of the standard weight in the Lie algebra representation:
\begin{equation*}
	|\Psi_{\bm M}\rangle = \hat Q^+_{\bm\lambda}\cdots \hat Q^+_{\bm \beta}\hat Q^+_{\bm \alpha} |\Psi_e\rangle
\end{equation*}
The state 
\begin{equation*}
	|k,\bm{M}\rangle \equiv \sum_{j=1}^N e^{ikj}  \hat q^+_j |\Psi_{\bm M}\rangle
\end{equation*}
also has the asymptotic behaviors as $|k\rangle$.
\end{cor}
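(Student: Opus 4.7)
The strategy is to mirror the proof of the main theorem with $|\Psi_e\rangle$ replaced by $|\Psi_{\bm M}\rangle$. Because $|\Psi_{\bm M}\rangle$ lies in the degenerate scar subspace, $\hat H|\Psi_{\bm M}\rangle = E_0|\Psi_{\bm M}\rangle$, and the algebraic reductions leading to Eqs.~(\ref{eq:energy-expect}) and (\ref{eq:variance-expect}) go through verbatim: one has $\varepsilon_k\mathcal N_{k,\bm M}^2 = \langle\Psi_{\bm M}|\hat Q_k^-[\hat H,\hat Q_k^+]|\Psi_{\bm M}\rangle$, the variance is governed by $\langle\Psi_{\bm M}|[\hat H,\hat Q_k^+]^\dagger[\hat H,\hat Q_k^+]|\Psi_{\bm M}\rangle$, and the local decomposition $[\hat H,\hat Q_k^+]=\sum_l e^{ikl}\hat\Phi^+_l(k)$ is unchanged. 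Both quantities reduce to finite-range sums of two-point functions $\langle\Psi_{\bm M}|\hat A_j\hat B_l|\Psi_{\bm M}\rangle$ of strictly local operators supported near sites $j$ and $l$.

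The only new ingredient required is an analog of cluster decomposition for $|\Psi_{\bm M}\rangle$. Since this state is built from the product anchor $|\Psi_e\rangle$ by applying translation-invariant global sums of on-site operators $\hat Q^+_{\bm\gamma}=\sum_n\hat q^+_{\bm\gamma,n}$, it is itself permutation-symmetric across sites. Therefore every two-point function $\langle\Psi_{\bm M}|\hat A_j\hat B_l|\Psi_{\bm M}\rangle$ with disjoint supports equals a site-independent constant $c^{\bm M}_{AB}$ depending only on the operator types. The identity $\sum_{j,l}e^{ik(l-j)}=0$ for $k\ne 0$ then plays the same role as Eq.~(\ref{eq:sum-to-zero}) and cancels the disconnected part of the sum, leaving both $\varepsilon_k$ and $\delta\varepsilon_k^2$ as finite, $k$-analytic expressions. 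At $k=0$ the state $|0,\bm M\rangle = \hat Q^+|\Psi_{\bm M}\rangle$ is another scar-tower state and hence an eigenstate of $\hat H$ with eigenvalue $E_0$, so $\varepsilon_0 = \delta\varepsilon_0 = 0$; analyticity then forces $\varepsilon_k = O(k)$ and $\delta\varepsilon_k^2 = O(k^2)$ as $k\to 0$, which is the asymptotic behavior of $|k\rangle$.

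The main obstacle I anticipate is confirming that the $O(N)$ scaling of the normalization $\mathcal N_{k,\bm M}^2$ survives despite the appearance of the constant $c^{\bm M}_{q^-q^+}$, which is itself $O(1)$ and contributes an extensive $-Nc^{\bm M}_{q^-q^+}$ to the off-diagonal piece. The resolution, which is already visible in the Dicke-state calculation $\mathcal N_{k,\bm M}^2=(1-M/N)^2 N$, is that the diagonal on-site expectation $\langle\Psi_{\bm M}|\hat q^-_j\hat q^+_j|\Psi_{\bm M}\rangle$ always exceeds $c^{\bm M}_{q^-q^+}$ by a strictly positive amount $(\Delta q_{\bm M})^2$ analogous to the quantity appearing in the normalization lemma, as follows from a Cauchy--Schwarz estimate on the single-site reduced density matrix of the permutation-symmetric state $|\Psi_{\bm M}\rangle$. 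Once this positivity is established, $\mathcal N_{k,\bm M}^2$ scales as $N(\Delta q_{\bm M})^2$, the same manipulations of the proof of the main theorem carry through, and the corollary follows.
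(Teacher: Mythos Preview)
Your proof is correct and follows essentially the same route as the paper: both identify permutation symmetry of $|\Psi_{\bm M}\rangle$ as the key property allowing the summation trick Eq.~(\ref{eq:sum-to-zero}), and hence Eqs.~(\ref{eq:energy-expect}) and (\ref{eq:variance-expect}), to carry over verbatim. The paper's proof additionally flags that the local density matrix of $|\Psi_{\bm M}\rangle$ is $N$-dependent but convergent in the thermodynamic limit---a point complementary to your more careful treatment of the $O(N)$ normalization scaling, which the paper does not address.
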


\begin{proof}
To prove that $|k,\bm{M}\rangle$ is an asymptotic scar, we only need to check that the untwisted state $|\Psi_{\bm M}\rangle$ is
\begin{itemize}
	\item invariant under permutations;
	\item has a converged local density matrix.
\end{itemize}
The permutational invariance implies that the expectation for two non-overlapping operators
\begin{equation}
	\langle \Psi_{\bm M}|\hat A_i \hat B_j|\Psi_{\bm M}\rangle
\end{equation}
is independent of their position.
Therefore, the summation trick Eq.~(\ref{eq:sum-to-zero}) still applies, and we still have Eqs.~(\ref{eq:energy-expect}) and (\ref{eq:variance-expect}).
The only difference is that the expectation $\langle\cdots\rangle$ is now taken from $|\Psi_{\bm M}\rangle$, of which local expectation is size-dependent.
However, it can be shown that those states have convergent local density matrices, and therefore the local expectation is well-defined in the thermodynamic limit.
\end{proof}

\section{Ferromagnetic Scar Model}
\label{apx:fsm}
In this section, we discuss properties of the spin-$s$ chain with Heisenberg and Dzyaloshinskii-Moriya (DM) interactions, described by the Hamiltonian:
\begin{equation}\label{eq:jdm}
\begin{aligned}
\hat{H} =&\ \sum_{j=1}^N[J\hat{\mathbf{s}}_j\cdot\hat{\mathbf{s}}_{j+1}-D_1(\hat{\mathbf{s}}_j\times\hat{\mathbf{s}}_{j+1})_z \\
	& - D_2(\hat{\mathbf{s}}_j\times\hat{\mathbf{s}}_{j+2})_x +h \hat s^z_j].
\end{aligned}
\end{equation}
Here, we add an external field to split lift the degeneracy of the scar tower.
When $s=1/2$, the Heisenberg term is integrable, while $D_1$ and $D_2$ terms break the integrability and the conservation of any spin component.

Note that when $h=0$, besides translational symmetry, the model also features a $\mathbb Z_2$ reflection+spin flip symmetry:
\begin{equation}
	\mathcal P: \hat s^{x/y}_j \rightarrow \hat s^{x/y}_{N-j},\quad 
	\hat s^{z}_j \rightarrow -\hat s^{z}_{N-j}.
\end{equation}
Then in the compatible sector ($k=0$ or $k=\pi$), the symmetry sector shall be further reduced to examine the level statistics.
This extra is relieved once we set $h \ne 0$ as it breaks such $\mathbb Z_2$ symmetry.

\begin{figure*}
	%\captionsetup{justification=raggedright,singlelinecheck=false}
    \begin{minipage}[t]{0.33\textwidth}
        \centering
        \includegraphics[width=\textwidth]{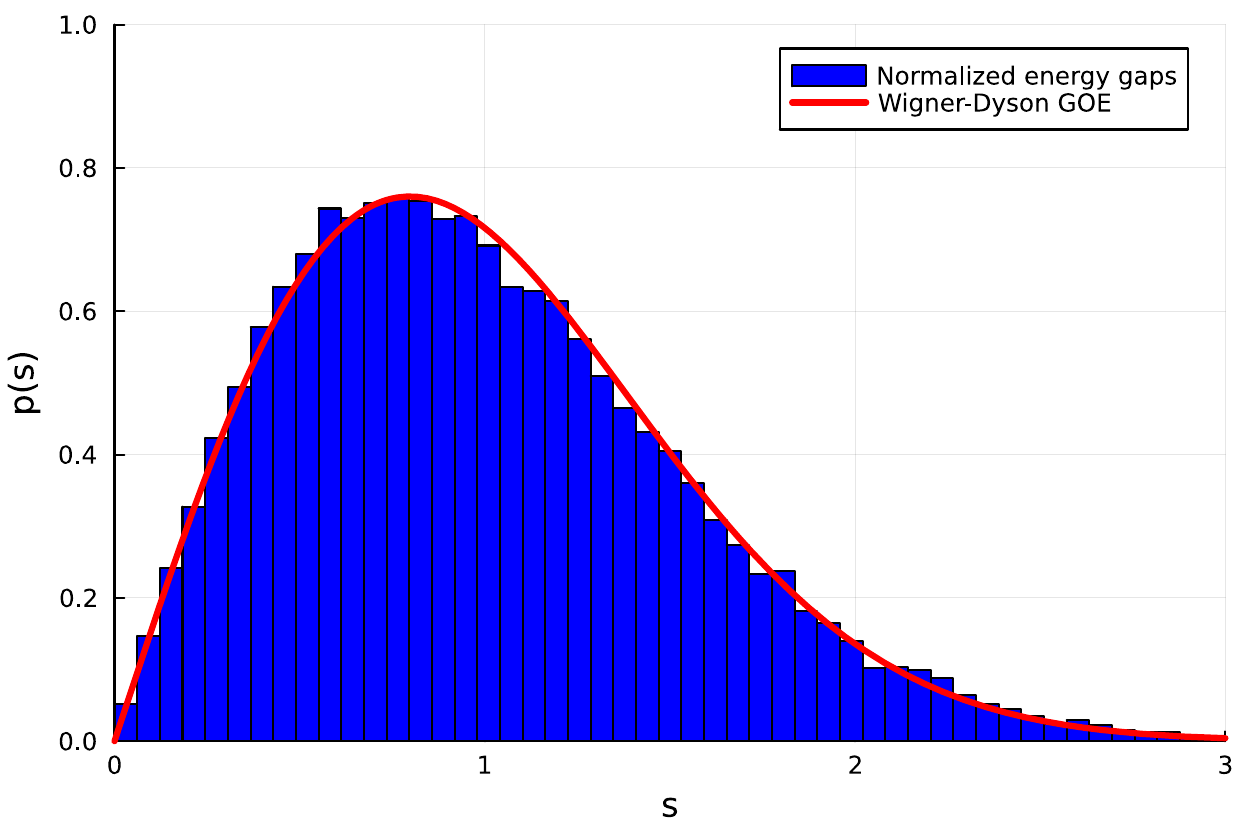}
        \begin{picture}(0,0)
            \put(-85,130){(a1)}
        \end{picture}
    \end{minipage}%
    \begin{minipage}[t]{0.33\textwidth}
        \centering
        \includegraphics[width=\textwidth]{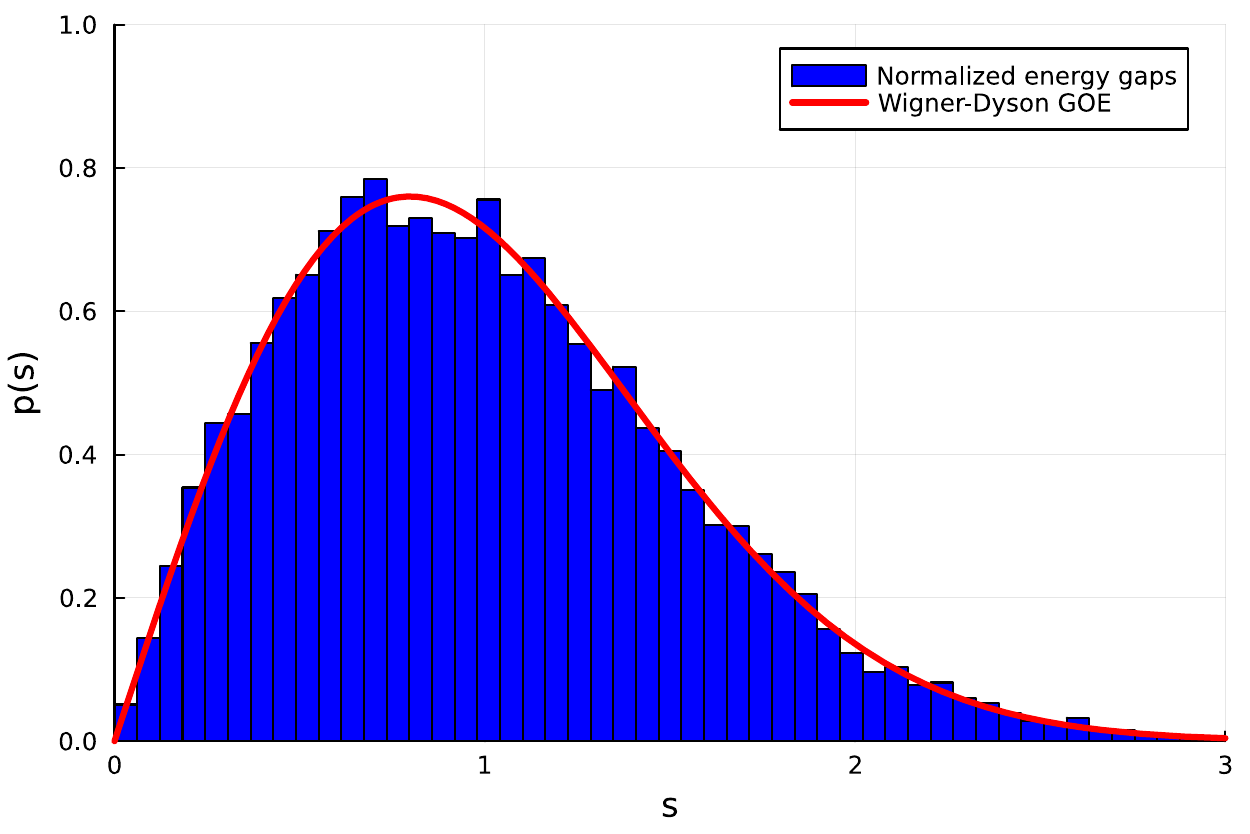}
        \begin{picture}(0,0)
            \put(-85,130){(b1)}
        \end{picture}
    \end{minipage}%
    \begin{minipage}[t]{0.33\textwidth}
        \centering
        \includegraphics[width=\textwidth]{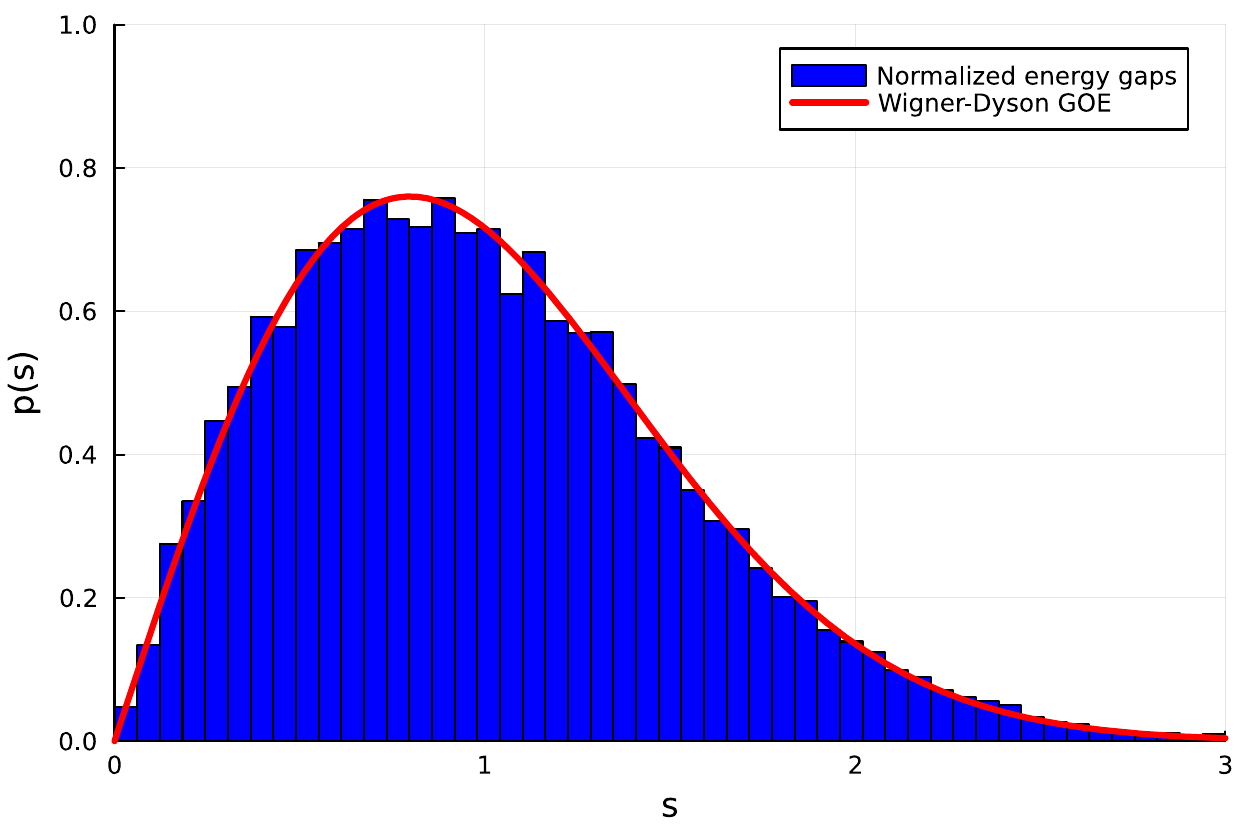}
        \begin{picture}(0,0)
            \put(-85,130){(c1)}
        \end{picture}
    \end{minipage}\\[1ex]
    \begin{minipage}[t]{0.33\textwidth}
        \centering
        \includegraphics[width=\textwidth]{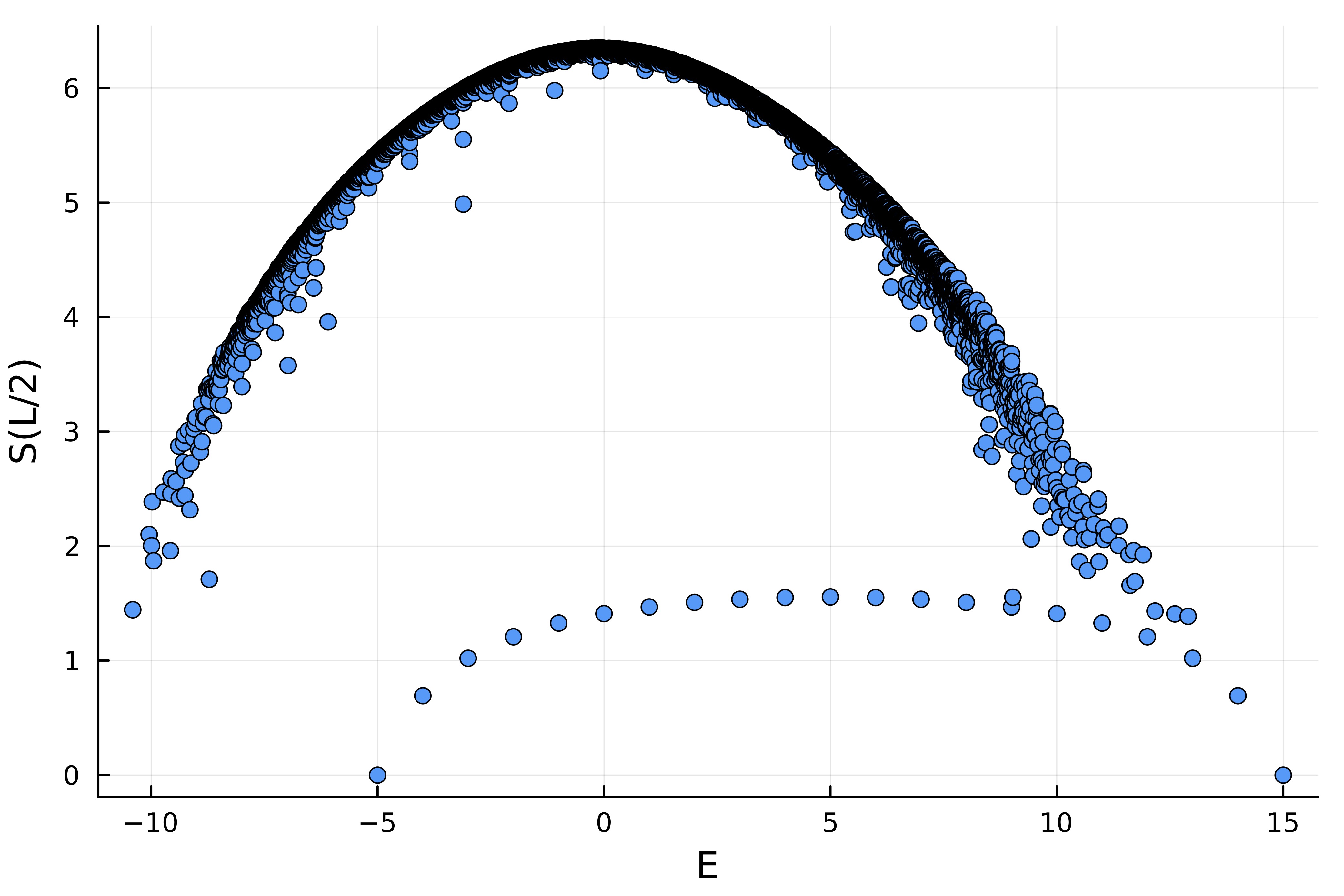}
        \begin{picture}(0,0)
            \put(-85,130){(a2)}
        \end{picture}
    \end{minipage}%
    \begin{minipage}[t]{0.33\textwidth}
        \centering
        \includegraphics[width=\textwidth]{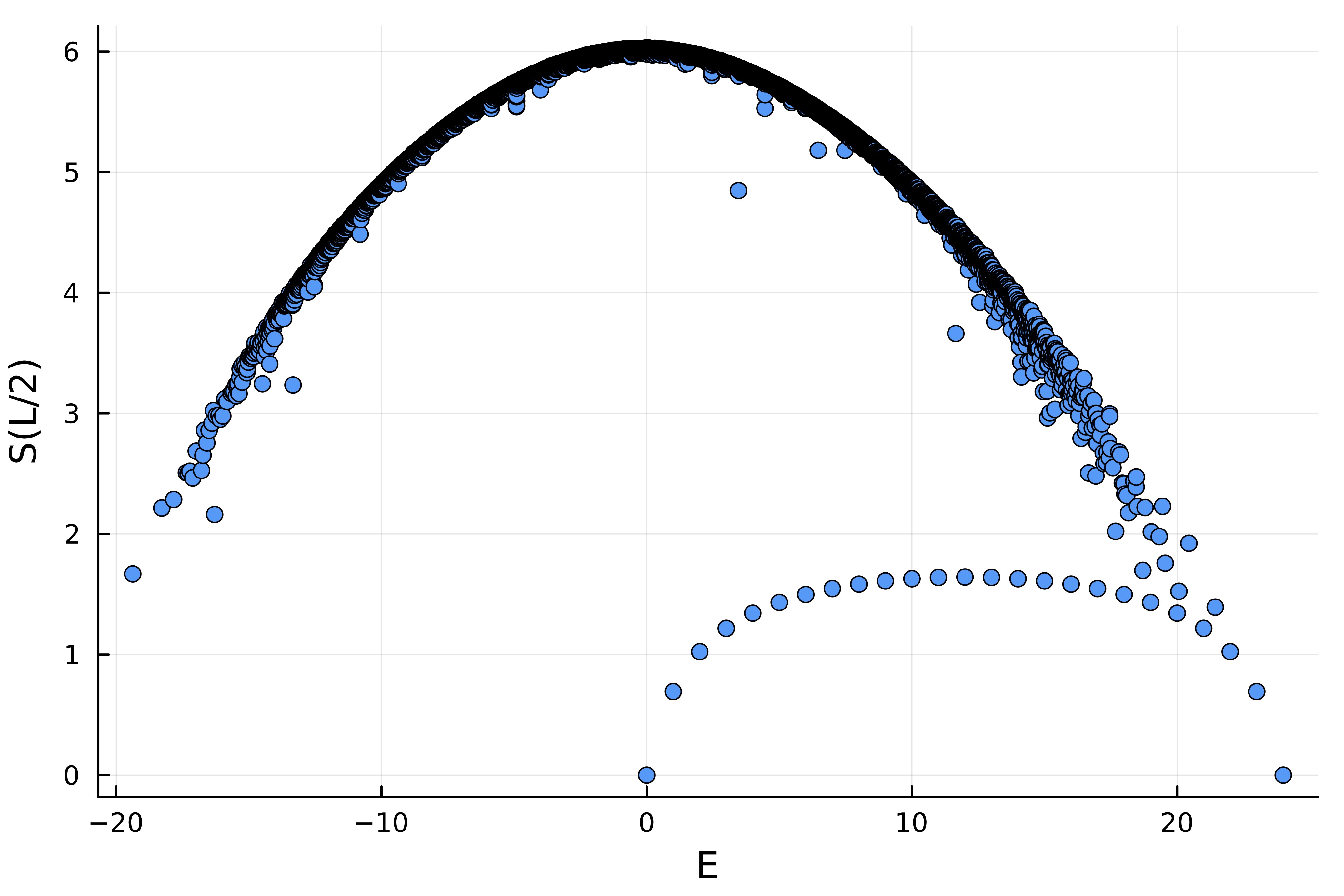}
        \begin{picture}(0,0)
            \put(-85,130){(b2)}
        \end{picture}
    \end{minipage}%
    \begin{minipage}[t]{0.33\textwidth}
        \centering
        \includegraphics[width=\textwidth]{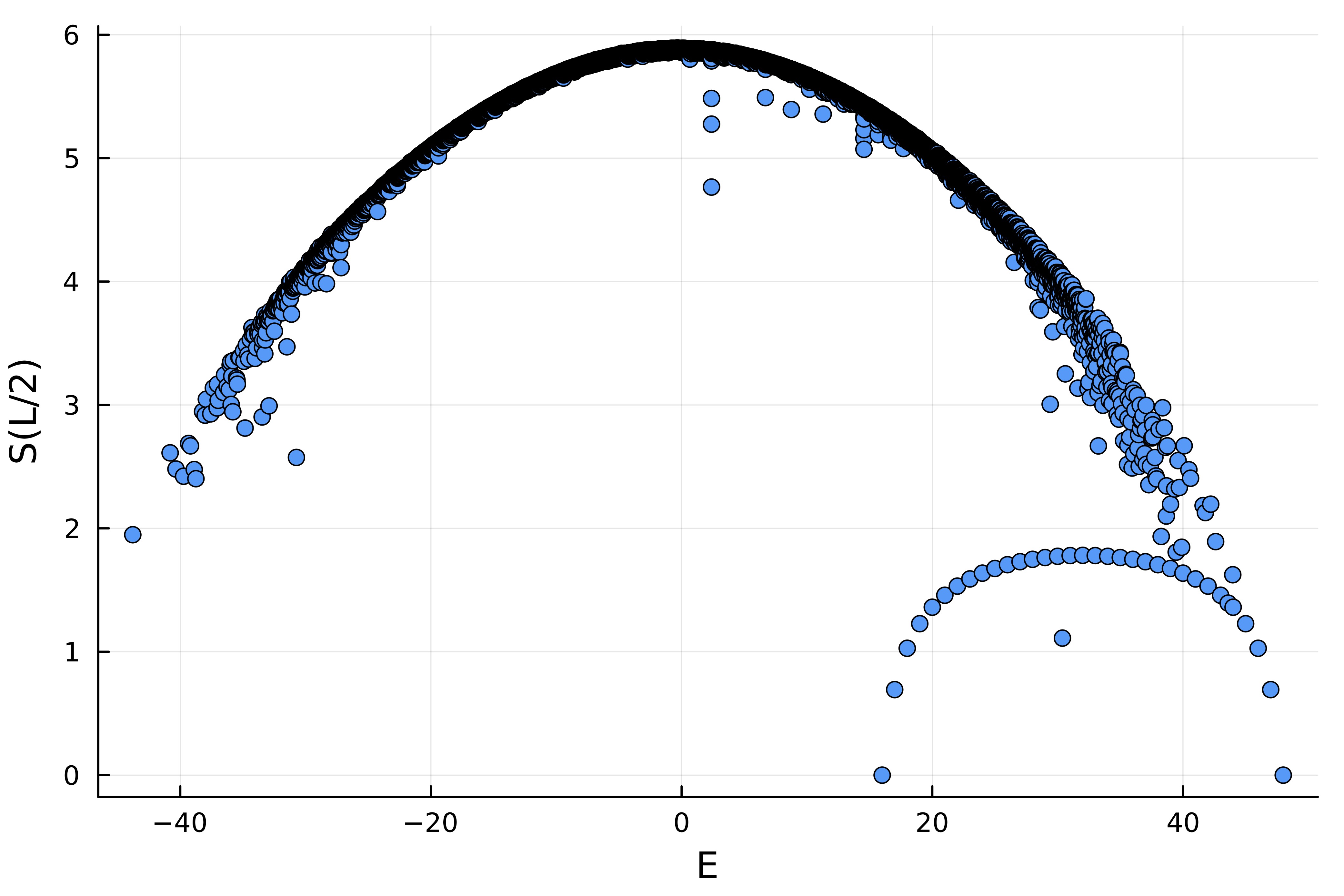}
        \begin{picture}(0,0)
            \put(-85,130){(c2)}
        \end{picture}
    \end{minipage}
    \caption{Level statistics and entanglement entropy in ferromagnetic scar models Eq.~(\ref{eq:jdm}), with $J=1$, $D_1=-1/2$, $D_2=1/3$, and $h=1$, focusing on the ($k=0$) sector. (a1)(a2) Spectrum of the $L=20$, $s=1/2$ spin chain, with Hilber space dimension $\dim \mathcal H = 52488$. (b1)(b2) Spectrum of the $L=12$, $s=1$ spin chain, with Hilber space dimension $\dim \mathcal H = 44368$. (c1)(c2) Spectrum of the $L=8$, $s=2$ spin chain, with Hilber space dimension $\dim \mathcal H = 48915$. (a1)(b1)(c1) Statistics of energy gaps in the middle half the ($k=0$) sector show Wigner-Dyson GOE form, indicating non-integrability. (a2)(b2)(c2) Entanglement entropy for eigenstates in the ($k=0$) sector reveals scar behavior across different spin values, with tower states displaying notably lower entanglement entropy.}
    \label{fig:JDM}
\end{figure*}

By performing the exact diagonalization on the $k=0$ sectors, we examine the spectrum of the ferromagnetic models and demonstrate numerically that for different choices of $s$, the model in Eq.~(\ref{eq:jdm}) displays a clear quantum many-body scar phenomenon (see Fig.~\ref{fig:JDM}).

\subsection{Exact scar tower on spin-$s$ chain}

\begin{thm}
The ferromagnetic subspace in the spin-$s$ chain 
\begin{equation*}
	\mathcal{H}_\text{FM} \equiv \operatorname{span} \left\{ \left. (\hat S^+)^n |-s\rangle^{\otimes N} \right| n=0,1,\dots,2sN \right\},
\end{equation*}
is the scar space of the Hamiltonian Eq.~(\ref{eq:jdm}).
\end{thm}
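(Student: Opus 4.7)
The plan is to verify directly that each state $(\hat S^+)^n |\Downarrow\rangle$, with $|\Downarrow\rangle \equiv |-s\rangle^{\otimes N}$ and $\hat S^+ \equiv \sum_j \hat s_j^+$, is an eigenstate of $\hat H$ with eigenvalue $E_n = JNs^2 + h(n - Ns)$, thereby identifying $\mathcal H_\text{FM}$ as an exact scar tower with level spacing $h$.

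First I would check that $|\Downarrow\rangle$ itself is an eigenstate, term by term. The Heisenberg piece contributes $JNs^2$ because the transverse parts $\hat s_j^\pm \hat s_{j+1}^\mp$ annihilate $|\Downarrow\rangle$, leaving only $\hat s_j^z \hat s_{j+1}^z = s^2$. The $D_1$ operator $(\hat{\mathbf s}_j \times \hat{\mathbf s}_{j+1})_z = \tfrac{1}{2i}(\hat s_j^-\hat s_{j+1}^+ - \hat s_j^+ \hat s_{j+1}^-)$ is annihilated pointwise. The $D_2$ operator is the subtle case, and is precisely where heuristic (ii) in the main text breaks: locally $(\hat{\mathbf s}_j \times \hat{\mathbf s}_{j+2})_x |\Downarrow\rangle = -s(\hat s_j^y - \hat s_{j+2}^y)|\Downarrow\rangle \neq 0$, but the translation sum telescopes under PBC to $-s(\hat S^y - \hat S^y)|\Downarrow\rangle = 0$.

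Next I would lift this eigenvalue to the whole tower. Splitting $\hat H = \hat H_0 + h\hat S^z$, the task becomes $\hat H_0 (\hat S^+)^n |\Downarrow\rangle = JNs^2 (\hat S^+)^n |\Downarrow\rangle$. The Heisenberg part commutes with $\hat S^+$ and carries its eigenvalue through. For each DM operator $\hat V$ I would use
\[
\hat V (\hat S^+)^n = \sum_{k=0}^{n} \binom{n}{k} (\hat S^+)^{n-k} \, \mathrm{ad}_{\hat S^+}^{k}(\hat V),
\]
and verify two facts: (a) the nested-commutator series terminates in finitely many steps --- two for $D_1$ and three for $D_2$ --- because the translation-invariant vector components $\hat A_r^a \equiv \sum_j (\hat{\mathbf s}_j \times \hat{\mathbf s}_{j+r})^a$ close under $\mathrm{ad}_{\hat S^+}$; and (b) every surviving operator in the series annihilates $|\Downarrow\rangle$, either trivially (when its local expansion contains only $\hat s^-$ factors) or by the same PBC telescoping used in step 1. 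For $D_2$ the schematic cycle is $\hat A_2^x \mapsto -\hat A_2^z \mapsto -\hat A_2^+ \mapsto 0$. Combined with $h\hat S^z (\hat S^+)^n |\Downarrow\rangle = h(n - Ns)(\hat S^+)^n |\Downarrow\rangle$, this delivers the full tower.

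The main obstacle is bookkeeping the $D_2$ commutator ladder, whose vanishing at each level relies on the global translation sum rather than on a pointwise identity. A more conceptual packaging is Wigner--Eckart: $\hat{\mathbf A}_r$ is a translation-invariant $SO(3)$ vector, so its restriction to the $S = Ns$ FM irrep must be a scalar multiple of $\hat{\mathbf S}$, and that scalar is pinned to zero by $\hat A_r^z |\Downarrow\rangle = 0$. Wigner--Eckart alone does not, however, forbid $\hat V$ from leaking FM states into lower-spin irreps when acting on higher tower states; the explicit nested-commutator check is exactly what shuts down that leakage.
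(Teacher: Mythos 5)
Your argument is correct and runs on the same underlying mechanism as the paper's proof, namely that $\mathrm{ad}_{\hat S^+}$ acting on the DM-type operator closes after finitely many steps, with every operator in the chain annihilating $|\Downarrow\rangle$ either pointwise or by a periodic-boundary telescope. The packaging differs in one worthwhile place: you work directly with the specific components that appear in $\hat H$ ($\hat A_1^z$ for $D_1$, $\hat A_2^x$ for $D_2$), so the $D_2$ chain has length three and you need the telescoping identity at two levels. The paper instead first observes that $\hat{\mathbf D}_d$ is an SO(3) vector operator and that $\mathcal H_\text{FM}$ is SO(3)-invariant, so it suffices to prove $\hat D_d^z\,\mathcal H_\text{FM}=0$; reducing to the $z$-component gives a uniform two-step chain for all $d$, and the base term $\hat D_d^z|\Downarrow\rangle=0$ is pointwise rather than telescoping. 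That covariance step buys a shorter and cleaner induction, whereas your route is a bit more bookkeeping but requires no invariance lemma about the scar space. Your closing Wigner--Eckart remark is a fair conceptual gloss, and you are right that it alone only controls the diagonal block; the commutator ladder is what kills the $S=Ns\to Ns-1$ leakage.

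Two small slips, neither fatal: the adjoint-expansion identity should carry alternating signs, $\hat V(\hat S^+)^n=\sum_{k=0}^n(-1)^k\binom{n}{k}(\hat S^+)^{n-k}\,\mathrm{ad}_{\hat S^+}^{\,k}(\hat V)$, and with the convention $[\hat S^a,\hat A^b]=i\epsilon^{abc}\hat A^c$ the chain is $\hat A_2^x\mapsto+\hat A_2^z\mapsto-\hat A_2^+\mapsto 0$, not $-\hat A_2^z$ at the first step. Since you only need each term to annihilate $|\Downarrow\rangle$, the overall signs drop out of the conclusion.
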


\begin{proof}
The Heisenberg interaction 
\begin{equation*}
	\hat J_d = \sum_{j=1}^N \hat{\mathbf s}_j \cdot \hat{\mathbf s}_{j+d}
\end{equation*}
manifestly has the SO(3) symmetry, and the ferromagnetic is its eigenspace.
We will then prove that the Dzyaloshinskii-Moriya  type of interaction
\begin{equation}
	\hat{\mathbf{D}}_d \equiv \sum_{j=1}^N \hat{\mathbf{s}}_j \times \hat{\mathbf{s}}_{j+d}
\end{equation}
will annihilate the spin-$s$ ferromagnetic space $\mathcal{H}_\text{FM}$. 
The first observation is that the vector $(\hat D_d^x,\hat D_d^y,\hat D_d^z)$ form a vector representation of SO(3). 
The ferromagnetic space is SO(3) invariant.
As the result, proofing $\hat D^z_d \mathcal{H}_\text{FM}=0$ is enough to show that $\hat{\mathbf{D}}_d \mathcal{H}_\text{FM}=0$.
The $z$-component $\hat D^z_d$ can be expressed as
\begin{equation}
	\hat D^z_d = \frac{i}{2} \sum_j (\hat s_j^+ \hat s_{j+d}^- - \hat s_j^- \hat s_{j+d}^+).
\end{equation}
It automatically annihilates $|0\rangle$ because of the $\hat s^-$ operators.
What we need to show is that 
\begin{equation}\label{apx:scar-tower-eq}
	\hat D^z_d |n\rangle = \hat D^z_d (\hat S^+)^n|0\rangle = 0.
\end{equation}
We will need the following commutation relation:
\begin{equation}
\begin{aligned}
	\left[\hat D^z_d, \hat S^+\right]
	&= \frac{i}{2} \sum_j \left[\hat s_j^+ \hat s_{j+d}^- - \hat s_j^- \hat s_{j+d}^+, \hat s_j^++\hat s_{j+d}^+\right] \\
	&= -i \sum_j \hat s_j^+ (\hat s^z_{j+d}-\hat s_{j-d}^z).
\end{aligned}
\end{equation}
Therefore,
\begin{equation}
	\left[[\hat D^z_d, \hat S^+],\hat S^+ \right]=0.
\end{equation}
Using the commutation relation, Eq.~(\ref{apx:scar-tower-eq}) becomes
\begin{equation}
\begin{aligned}
	\hat D^z_d |n\rangle
	&= \hat D^z_d \hat S^+|n-1\rangle \\
	&= (S^+)^{n-1}[\hat D^z_d, \hat S^+]|0\rangle + \hat S^+ \hat D^z_d |n-1\rangle.
\end{aligned}
\end{equation}
For the first part, since $|0\rangle$ is the eigenstate of all $\hat s^z$
\begin{equation}
	[\hat D^z_d, \hat S^+]|0\rangle
	= si \sum_j (\hat s_j^+ - \hat s_{j+d}^+)|0\rangle = 0.
\end{equation}
Therefore,
\begin{equation}
	\hat D^z_d |n\rangle = \hat S^+ \hat D^z_d |n-1\rangle
	= (\hat S^+)^n \hat D^z_d |0\rangle = 0.
\end{equation}
We therefore proved Eq.~(\ref{apx:scar-tower-eq}).
Thus $\mathcal{H}_\text{FM}$ is the eigenspace of Hamiltonian Eq.~(\ref{eq:jdm}).
\end{proof}

\subsection{Energy dispersion and variance}
In the following, we always consider the degenerate limit by setting $h=0$.
For the ferromagnetic model, the quasisymmetry is SO(3), generated by usual spin generators:
\begin{equation}
	\hat q^\pm = \hat s^\pm, \quad 
	\hat q^z = \hat s^z.
\end{equation}
The anchor state is the product of polarized states:
\begin{equation}
	|\Psi_e\rangle = \bigotimes_{j=1}^N |\psi_e\rangle,\quad 
	|\psi_e\rangle = |-s\rangle.
\end{equation}
The energy dispersion and variance can be directly calculated using Eq.~(\ref{eq:energy-expect}) and Eq.~(\ref{eq:variance-expect}). 
First, we calculate the normalization $\mathcal N^2 = N \Delta q^2$, where by definition:
\begin{widetext}
\begin{equation*}
	\Delta q^2 = \langle \psi_e|\hat q^- \hat q^+|\psi_e\rangle
	= \langle \psi_e|[\hat q^-, \hat q^+]|\psi_e\rangle
	= -2 \langle \psi_e|\hat q^z|\psi_e\rangle
	= 2s.
\end{equation*}
Then we calculate the quantity $\hat \Phi_l^+(k)$:
\begin{equation*}
\begin{aligned}
	\Phi_l^+(k) =&\ J\sum_{j=1,2} e^{ikj} \left[ \hat h_{\text{Heisenberg},l},\hat s^+_{l+j} \right]
	-D_1 \sum_{j=1,2} e^{ikj} \left[ \hat h_{\text{DM-1},l},\hat s^+_{l+j} \right]
	-D_2 \sum_{j=1,3} e^{ikj} \left[ \hat h_{\text{DM-2},l},\hat s^+_{l+j} \right] \\
	=&\ J \left[ 
		-\hat s_l^z \hat s_{l+1}^x - e^{ik}\hat s_l^x \hat s_{l+1}^z 
		-i\hat s_l^z \hat s_{l+1}^y -ie^{ik}\hat s_l^y \hat s_{l+1}^z
		+\hat s_l^+ \hat s_{l+1}^z +e^{ik}\hat s_l^z \hat s_{l+1}^+
		\right] \\
	& -D_1 \left[
			-\hat s_{l}^z\hat s_{l+1}^y -ie^{ik}\hat s_{l}^x \hat s_{l+1}^z 
			+i\hat s_{l}^z \hat s_{l+1}^x + \hat s_{l}^y \hat s_{l+1}^z
		\right] \\
	& -D_2 \left[ 
			-i \hat s_{l}^z \hat s_{l+2}^z + e^{2ik}\hat s_{l}^y \hat s_{l+2}^+ 
			- \hat s_{l}^+ \hat s_{l+2}^y +i e^{2ik}\hat s_{l}^z\hat s_{l+2}^z
		\right].
\end{aligned}
\end{equation*}
Note that in Eq.~(\ref{eq:energy-expect}) and Eq.~(\ref{eq:variance-expect}), $\hat\Phi^+(k)$ always acts directly on $|\Psi_e\rangle$. 
Therefore, we can make the following substitution without changing the result:
\begin{equation}
\begin{aligned}
	\hat s^x \simeq  \frac{1}{2}\hat s^+, \quad
	\hat s^y \simeq  -\frac{i}{2}\hat s^+, \quad
	\hat s^z \simeq  -s.
\end{aligned}
\end{equation}
The operator $\hat\Phi_l^+(k)$ can then be simplified to
\begin{equation*}
\begin{aligned}
	\Phi_l^+(k) 
	\simeq & -sJ  \left[ 
		(e^{ik}-1)\hat s_{l+1}^+ + (1- e^{ik})\hat s_l^+ \right] 
	 +isD_1 \left[
			\hat s_{l+1}^+ -e^{ik}\hat s_{l}^+ 
		\right] 
	 -i(e^{i2k}-1)D_2 \left[ 
			s^2 -\frac{1}{2}\hat s_{l}^+ \hat s_{l+2}^+ 
		\right].
\end{aligned}
\end{equation*}
Now we can compute the dispersion using Eq.~(\ref{eq:energy-expect}):
\begin{equation}
\begin{aligned}
	\varepsilon_k 
	&= \frac{1}{2s} \sum_{j} e^{-ikj} \langle  \hat s^-_{j} \hat \Phi_{0}^+(k) \rangle 
	= \frac{1}{2s} \left[ 
		-2s^2J (2-e^{-ik}- e^{ik})
	 +i2s^2D_1 \left(e^{-ik} -e^{ik}\right)\right] \\
	&= 2J s\left[ \cos k -1\right] 
	 +2 s D_1 \sin k.
\end{aligned}
\end{equation}

For the energy variance, we can in principle do the calculation using Eq.~(\ref{eq:variance-expect}).
However, the calculation for this specific case can be significantly simplified.
The simplification essentially comes from the fact that $|k\rangle$ is the eigenstate of the scar Hamiltonian when $D_2=0$, since the symmetry sector ($S^z_\text{tot} = -Ns+1, k=k$) is one-dimension.
We thus expect the variance to be solely controlled by $D_2$.
For concrete proof of the above statement, we first note that in Eq.~(\ref{eq:variance-expect}), the central quantity to calculate is the expectation $\langle \hat\Phi^-_{i_0+j} \hat\Phi_{i_0}^+ \rangle$.
We can insert a resolution:
\begin{equation*}
\begin{aligned}
	\left\langle \Psi_e \left|\hat\Phi^-_{i_0+j} \hat\Phi_{i_0}^+ \right|\Psi_e\right\rangle
	&= \frac{1}{\mathcal N^2}\sum_k \left\langle \Psi_e \left|\hat\Phi^-_{i_0+j} \hat Q_k^+ \right|\Psi_e\right\rangle 
	   \left\langle \Psi_e \left|\hat Q_k^- \hat\Phi_{i_0}^+ \right|\Psi_e\right\rangle
	   + \sum_{|\psi\rangle \notin \{|k\rangle\}}\left\langle \Psi_e \left|\hat\Phi^-_{i_0+j}  \right|\psi\right\rangle 
	   \left\langle \psi \left| \hat\Phi_{i_0}^+ \right|\Psi_e\right\rangle \\
	&= S_1 + S_2.
\end{aligned}
\end{equation*}
\end{widetext}
For the first term, it can be reduced to:
\begin{equation*}
\begin{aligned}
	S_1 &= \frac{1}{2s}\sum_l\sum_{j} e^{-ikj} \langle \hat\Phi_j^- \hat s_l^+\rangle
	\langle \hat s_l^-\hat\Phi_0^+\rangle \\
	&= \frac{1}{2s}  \left(\sum_{j} e^{ik(l-j)} \langle \hat\Phi_j^- \hat s_l^+\rangle\right)
	\left(\sum_l e^{-ikl}\langle \hat s_l^-\hat\Phi_0^+\rangle \right) \\
	&= 2s \varepsilon_k^2
\end{aligned}
\end{equation*}
Together with the normalization factor in the front, the first term and the $-\varepsilon_k^2$ term cancel out. 
For the matrix element $\left\langle \psi \left| \hat\Phi_{i_0}^+ \right|\Psi_e\right\rangle$, only $D_2$ terms contributes, so we can effectively substitute $\hat\Phi^+(k)$ to:
\begin{equation*}
	\Phi_l^+(k) 
	\simeq -i(e^{i2k}-1)D_2 \left[ s^2 -\frac{1}{2}\hat s_{l}^+ \hat s_{l+2}^+ \right].
\end{equation*}
The constant term will be cancelled out with $-|\langle \hat\Phi^+\rangle|^2$, the final term will be:
\begin{equation}\label{eq:jdm-ev}
\begin{aligned}
	\delta \varepsilon_k^2
	&= \frac{1}{2s}  \left[
		D_2^2 \sin^2 k \left\langle \Psi_e \right|\hat s_{l}^- \hat s_{l+2}^-  \hat s_{l}^+ \hat s_{l+2}^+\left|\Psi_e\right\rangle 
	\right] \\
	&= 2s D_2^2 \sin^2 k.
\end{aligned}
\end{equation}
In the $k \rightarrow 0$ limit:
\begin{equation}
\begin{aligned}
	\varepsilon_k &= 2s D_1 k + O(k^2), \\
	\delta\varepsilon_k &= 2sD_2 k + O(k^2).
\end{aligned}
\end{equation}
Indeed, the qNGM shows nonzero velocity near $k=0$, with vanishing energy variance.

\subsection{Accidental coherent mode at $k=\pi$}
From Eq.~(\ref{eq:jdm-ev}), we observe that when $k=\pi$, $\delta \epsilon_k = 0$. 
However, this coherent mode does not arise from quasisymmetry.
It is considered ``accidental'' because (i) more general perturbation can disrupt the coherence of $|k=\pi\rangle$, and (ii) this coherence is only present for the specific initial state $\left|\Downarrow\right\rangle$.

To illustrate, for the $s=1/2$ case, consider a perturbation with a longer range:
\begin{equation}
	\hat H' = \hat H - D_3 \sum_j (\hat{\bm s}_j \times \hat{\bm s}_{j+3})_y
\end{equation}
the energy variance is given by:
\begin{equation}
	\delta\varepsilon_k^2 = D_2^2 \sin^2(k) + D_3^2 \sin^2\left(\frac{3k}{2}\right).
\end{equation}
Despite the presence of quasisymmetry (and consequently qNGM), the accidental coherence at $k=\pi$ is lost.

Additionally, consider the qNGM generated from the initial state:
\begin{equation}
	|k\rangle = \sum_{j=1}^L e^{ikj} \hat s_j^+
	\bigotimes_{j=1}^L \left[\cos\frac{\theta}{2}\left|\downarrow\right\rangle + \sin\frac{\theta}{2} \left|\uparrow\right\rangle\right].
\end{equation}
Here, the energy variance is:
\begin{equation}
	\delta\varepsilon_k^2 = D_1^2 \sin^2\theta \sin^2\left(\frac{k}{2}\right)
	+ D_2^2 \cos^2\theta \sin^2(k).
\end{equation}
When $\theta \ne 0$, the coherence also disappears.

\subsection{Spectral function}
Here we outline the procedure for calculating the spectral function using the matrix-product-state-based algorithm \cite{chemps}. 
The spectral function, denoted as $A(k,\omega)$, characterizes the distribution of spectral weight as a function of frequency $\omega$ at a fixed momentum $k$:
\begin{equation}
\begin{aligned}
	A(k,\omega) 
	&= \operatorname{Im} \int \frac{dt}{i\pi} \ e^{i\omega t} \left\langle \Downarrow\right| \Theta(t) [\hat a_k(t), \hat a_k^\dagger] \left|\Downarrow\right\rangle \\
	&= \operatorname{Im} \sum_n \int_0^\infty \frac{dt}{i\pi} e^{i\omega t} 
	e^{-iE_nt}\left\langle \Downarrow\right| \hat a_k|n\rangle\langle n| \hat a_k^\dagger\left|\Downarrow\right\rangle \\
	&= -\frac{1}{\pi}\operatorname{Im} \sum_n \frac{\left\langle \Downarrow\right| \hat a_k|n\rangle\langle n| \hat a_k^\dagger\left|\Downarrow\right\rangle}{\omega-E_n+i 0^+} \\
	& = \left\langle \Downarrow\right| \hat a_k(t)\delta(\omega-\hat H) \hat a_k^\dagger\left|\Downarrow\right\rangle.
\end{aligned}
\end{equation}
Given a function $f(x)$ defined on the interval $[-1,1]$, we can represent it through a Chebyshev expansion:
\begin{equation}
	f(x) = \frac{1}{\pi \sqrt{1-x^2}}\left[ \mu_0+2 \sum_{n=1}^{N-1} \mu_n T_n(x)\right],
\end{equation}
where $T_n(x)$ are Chebyshev polynomials, defined recursively as
\begin{equation}
\begin{aligned}
	T_0(x)&=1, \quad T_1(x)=x, \\
	T_{n+1}(x)&=2 x T_n(x)-T_{n-1}(x),
\end{aligned}
\end{equation}
and $\mu_n$ is the Chebyshev momentum calculated from:
\begin{equation}
	\mu_n=\int_{-1}^1 d x f(x) T_n(x).
\end{equation}
Note that this expansion is only applicable within the domain $[-1,1]$. 
Should the spectral function span a different range, say $[-W,W]$, we adopt a rescaling strategy for the Hamiltonian:
\begin{equation}
\begin{aligned}
	\hat H &\rightarrow \hat H' = \hat H / W, \\
	A'(k,\omega') &= \left\langle \Downarrow\right| \hat a_k(t)\delta(\omega'-\hat H') \hat a_k^\dagger\left|\Downarrow\right\rangle.
\end{aligned}
\end{equation}
This rescaling allows us to redefine the spectral function within the standard $[-1,1]$ range. 
The transformation from the original spectral function to the rescaled one is:
\begin{equation}
	A(k,\omega) = \frac{1}{W} A\left(k,\frac{\omega}{W}\right).
\end{equation}
Subsequently, we focus our attention on the rescaled spectral function $A'$. 
Without loss of generality, we consider a renormalized Hamiltonian ensuring the spectral function remains defined within $[-1,1]$. 
Leveraging the Chebyshev expansion, we express the momentum 
\begin{equation}
	\mu_n = \left\langle\Downarrow\right|\hat a_k T(\hat H)\hat a_k^\dagger\left|\Downarrow\right\rangle.
\end{equation}
using a recursive definition of Chebyshev polynomials. 
This recursive formulation facilitates the computation process.

To proceed, we introduce the following Chebyshev vectors:
\begin{equation}
\begin{aligned}
	|t_0\rangle &= \hat a_k^\dagger\left|\Downarrow\right\rangle, \\
	|t_1\rangle &= \hat{H}|t_0\rangle, \\
	|t_{n+1}\rangle &= 2\hat{H}|t_{n}\rangle - |t_{n-1}\rangle. 
\end{aligned}
\end{equation}
The momentum is then 
\begin{equation}
	\mu_n = \langle t_0|t_n\rangle.
\end{equation}
However, a key challenge arises due to the many-body nature of the states represented by the vectors $|t_n\rangle$. 
To address this, we resort to an approximation scheme employing matrix-product states with a fixed bond dimension $\chi$. 
Remarkably, even with a modest bond dimension ($\chi=30$), this approximation yields satisfactory results, as demonstrated in Ref.~\cite{chemps}.

A few remarks are in order.
Firstly, the finite truncation of the Chebyshev series may introduce numerical artifacts known as \textit{Gibbs oscillations}. 
To mitigate this issue, a ``smoothing function'' can be applied to alleviate the artificial oscillations. 
The final result of the Chebyshev expansion is thereby defined as
\begin{equation}
	A(k,\omega) = \frac{1}{\pi \sqrt{1-\omega^2}}\left[ g_0\mu_0+2 \sum_{n=1}^{N-1} g_n\mu_n T_n(\omega)\right],
\end{equation}
where $g_n$'s are the \textit{Jackson damping} coefficients (see Ref.~\cite{polynomial-method} for details).
Additionally, truncating the bond dimension inevitably introduces numerical errors. 
If these errors result in function values outside the $[-1,1]$ interval, they can propagate exponentially in subsequent recursive computations. 
To avoid such numerical instabilities, an energy truncation procedure is incorporated into the calculation process, as outlined in \cite{chemps}.

\section{Deformed Quasi-Nambu-Goldstone Theorem}
\label{apx:dqngt}
\subsection{Deformed symmetry and quasi-Nambu-Goldstone modes}

The deformed symmetric space formalism extends the concept of the quasisymmetry scar space. Building upon the framework of the quasisymmetry space, we introduce a deforming matrix-product operator \cite{mpo-1,mpo-2} (MPO) denoted as $\hat W$, depicted as follows:
\begin{equation}
	\hat W = \includegraphics[valign=c,height=1.0cm]{pics/dMPS}.
\end{equation}
The resulting deformed anchor state takes the form of a matrix-product state \cite{mps-1,mps-2} (MPS):
\begin{equation}
	|\tilde\Psi_e\rangle = \hat W|\Psi_e\rangle = |[A \cdots A]\rangle,\quad A= \hat W|\psi_e\rangle.
\end{equation}
The scar space is then constructed as the span of:
\begin{equation}
	\mathcal H_d = \operatorname{span}\left\{\hat W \hat U_g|\Psi_e\rangle|\forall g \in \tilde{G}\right\}.
\end{equation}
Defining the quasi-Goldstone mode for the deformed scar as:
\begin{equation}
\begin{aligned}
	|\tilde k\rangle &\equiv \frac{1}{\mathcal N}\sum_j e^{ikj} \hat W q_j|\Psi_e\rangle \\
	&= \frac{1}{\mathcal N}\sum_j e^{ikj} \left[\includegraphics[valign=c,height=0.45cm]{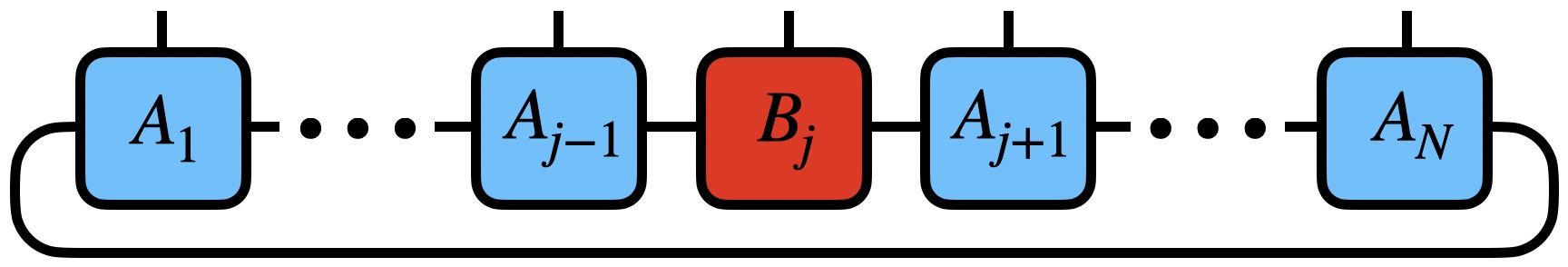}\right].
\end{aligned}
\end{equation}
In the deformed scenario, we assume $\hat W$ behaves as an operator that maps a product state to one with short-range entanglement. 
This implies injectivity of the MPS, ensuring the transfer matrix possesses a non-degenerate dominant eigenvalue. 
We consider the thermodynamic limit, and the tensor $A$ is assumed to be properly normalized so that:\footnote{The tensors in the upper layer represent their complex conjugates. For the sake of notational simplicity, we omit the conjugation sign.}
\begin{equation}
	\langle\tilde\Psi_e|\tilde\Psi_e\rangle
	= \includegraphics[valign=c,height=0.9cm]{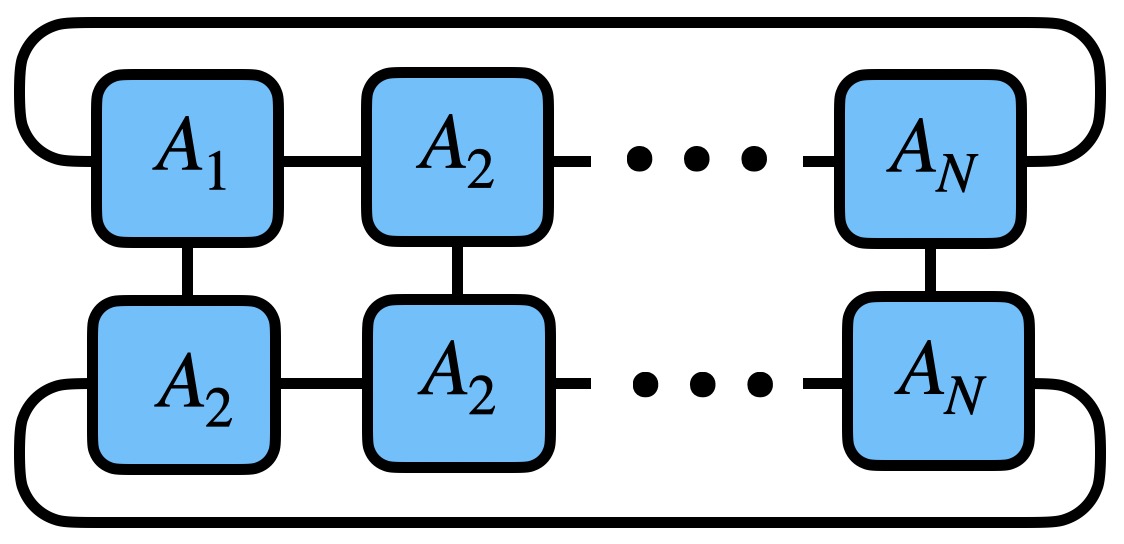}
	\simeq \includegraphics[valign=c,height=0.8cm]{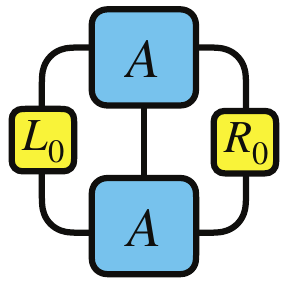} = 1.
\end{equation}
The scar Hamiltonian, assumed in the form
\begin{equation}
	H_d = \sum_{i=1}^N h_{[i,i+1,\cdots,i+m]},
\end{equation}
is degenerate on the subspace $\mathcal H_d$.
Without loss of generality, we assume that $H_d|_{\mathcal H_d}=0$, i.e.,
\begin{equation}\label{eq:h=0}
	\langle\tilde\Psi_e|H_d|\tilde\Psi_e\rangle
	= \includegraphics[valign=c,height=1cm]{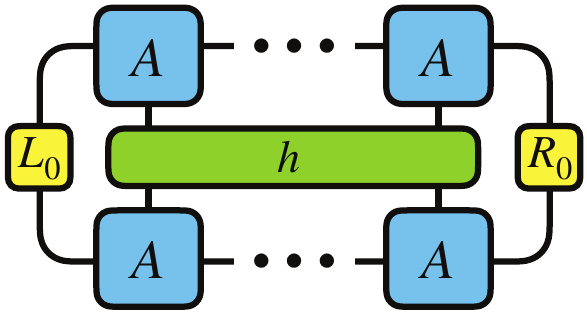} = 0.
\end{equation}
We further assume the transfer matrix of $|\tilde k\rangle$ contains no nontrivial Jordan block, allowing for its decomposition as follows:
\begin{equation}\label{eq:resolution}
	A^*\otimes A = \includegraphics[valign=c,height=0.8cm]{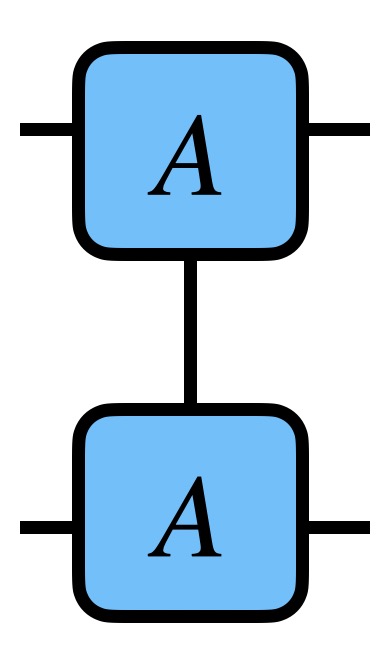} = \includegraphics[valign=c,height=0.7cm]{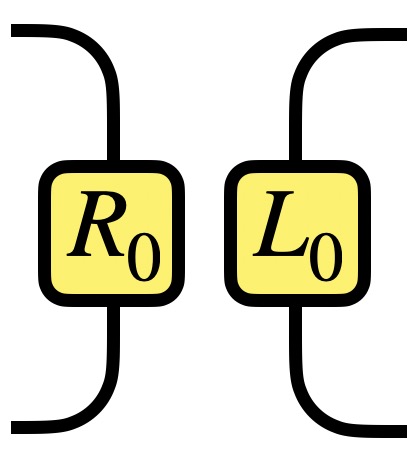} + \sum_{n=1}^{\chi^2-1} \lambda_n \includegraphics[valign=c,height=0.7cm]{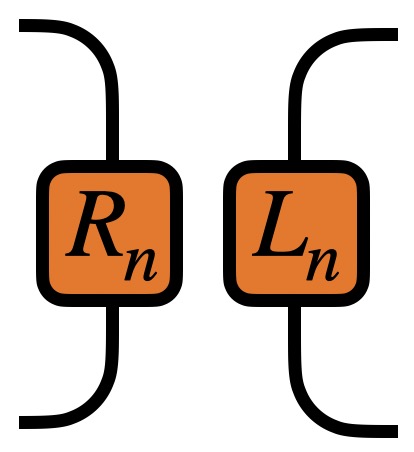}, \quad |\lambda_n| < 1.
\end{equation}
In the case of injective MPS, there is only one dominant eigenvector.

\subsection{Deformed quasi-Nambu-Goldstone theorem}
\begin{thm}[Deformed quasi-Nambu-Goldstone theorem]
For a translational invariant local Hamiltonian $\hat H$ hosting a deformed symmetric space as its scar subspace, both the energy expectation
\begin{equation*}
	\varepsilon_k \equiv \langle \tilde k|\hat H|\tilde k\rangle
\end{equation*}
and the energy variance
\begin{equation*}
	\delta \varepsilon^2_k \equiv \langle \tilde k|\hat H^2|\tilde k\rangle - \varepsilon_k^2
\end{equation*}
are continuous functions of $k$, converging to zero as $k$ approaches zero.
\end{thm}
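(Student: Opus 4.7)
The plan is to mirror the proof of Theorem 1 in MPS/tensor-network language, trading the exact cluster decomposition of product anchor states for the asymptotic cluster decomposition that the injectivity assumption $|\lambda_n|<1$ in Eq.(\ref{eq:resolution}) provides. Every scalar we need — $\mathcal N_k^2$, $\varepsilon_k$, and $\delta\varepsilon_k^2$ — is an MPS expectation value in $|\tilde\Psi_e\rangle$ of a small number of local operators (either the $\hat q^\pm$ insertions themselves or the local blocks $\hat\Phi^\pm_l(k)$ that come from $[\hat H,\hat Q^\pm_k]$ by locality of $\hat H$). A two-point function at separation $d$ sandwiches $\mathbb E^d$ between the insertions, and the spectral resolution of the transfer matrix decomposes it into a dominant piece that factorizes into one-site expectations plus subleading terms $\propto \lambda_n^d$; since $|\lambda_n|<1$, any momentum-weighted sum $\sum_d \lambda_n^d e^{ikd}$ converges absolutely and is analytic in $e^{\pm ik}$.

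First I would generalize Lemma 1: write $\mathcal N_k^2$ as a double sum of MPS two-point functions $T_d$ of $\hat q^-$ and $\hat q^+$, separate $T_d$ into a constant $|\langle\hat q^+\rangle|^2$ plus exponentially decaying corrections, and use the summation trick of Eq.(\ref{eq:sum-to-zero}) to kill the constant piece for $k\ne 0$. The result is $\mathcal N_k^2 = N\cdot g(k)$ with $g(k)$ analytic and nonzero near $k=0$. Next, to handle $\varepsilon_k$, I invoke the key scar input: both $\hat W|\Psi_e\rangle$ and $\hat W\hat Q^+|\Psi_e\rangle$ lie in $\mathcal H_d$, where Eq.(\ref{eq:h=0}) gives zero energy, so $\hat H\hat W|\Psi_e\rangle = 0 = \hat H\hat W\hat Q^+|\Psi_e\rangle$. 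Subtracting the vanishing $k=0$ matrix element and commuting $\hat H$ past $\hat Q^+_k$ exactly as in the undeformed proof reduces $\varepsilon_k\mathcal N_k^2$ to a finite sum of MPS two-point correlators of one $\hat q^-$ insertion with one local block $\hat\Phi^+_l(k)$, directly analogous to Eq.(\ref{eq:energy-expect}); applying the transfer-matrix spectral decomposition then yields $\varepsilon_k = f(k)/g(k)$ with $f(k)$ analytic and $f(0)=0$. The variance $\delta\varepsilon_k^2$ follows the same template, reducing to an MPS three-point correlator of two $\hat\Phi$ blocks that is also analytic in $k$ and vanishes at $k=0$.

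The main obstacle I anticipate is the thermodynamic-limit bookkeeping. In the product-state case the connected two-point function vanishes strictly off-site, making the $O(N)$ (rather than $O(N^2)$) scaling of connected sums and their analyticity in $k$ immediate. Here the connected part only decays exponentially in the separation, so one must check that summing the subleading geometric series over $d$, combined with the momentum sum over $j$ and $l$, still produces an $O(N)$ quantity and yields a well-defined analytic function uniformly in a neighborhood of $k=0$. Injectivity — the strict bound $|\lambda_n|<1$ — is exactly the hypothesis that tames every such series. Given $\varepsilon_{k=0}=0$ and $\delta\varepsilon_{k=0}^2=0$ by the scar property, the established continuity of $\varepsilon_k$ and $\delta\varepsilon_k^2$ then immediately delivers the desired $k\to 0$ vanishing.
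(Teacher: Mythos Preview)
Your high-level strategy --- replacing exact cluster decomposition by the transfer-matrix resolution Eq.~(\ref{eq:resolution}) and controlling all momentum sums via $|\lambda_n|<1$ --- is exactly what the paper does, and your generalization of the normalization lemma is correct.

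The gap is the commutator reduction. In Theorem~1 the identity $\varepsilon_k\mathcal N^2=\langle\hat Q_k^-[\hat H,\hat Q_k^+]\rangle$ works because $\hat q_j^+$ is a local operator in the \emph{same} Hilbert space as $\hat H$, so $[\hat h_l,\hat q_j^+]=0$ whenever $j$ lies outside the support of $\hat h_l$. In the deformed setting $\hat q_j^+$ acts in the prototype space \emph{before} $\hat W$; on the physical MPS it appears only as a one-site tensor substitution $A\to B$, not as a local physical operator. There is no operator commutator that annihilates the contributions with the $B$-insertion far from $\hat h_i$: those diagrams survive as genuine three-point MPS correlators (one $B^*$, one $B$, and the $\hat h$-modified block of $A$'s). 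Your claimed reduction of $\varepsilon_k\mathcal N_k^2$ to a two-point $\langle\hat q^-\,\hat\Phi^+\rangle$ correlator therefore does not hold, and for the variance the analogous step would miss four-point pieces.

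The paper abandons the commutator shortcut and instead decomposes the full three-point (for $\varepsilon_k$) and four-point (for $\delta\varepsilon_k^2$) diagrams by the relative positions of all insertions, inserts Eq.~(\ref{eq:resolution}) in every gap, and verifies that each resulting geometric series converges to an analytic function of $e^{\pm ik}$. For the variance the paper additionally needs the scar conditions $\langle\tilde\Psi_e|\hat H|\tilde\Psi_e\rangle=\langle\tilde\Psi_e|\hat H|\tilde k\rangle=\langle\tilde\Psi_e|\hat H^2|\tilde\Psi_e\rangle=\langle\tilde\Psi_e|\hat H^2|\tilde k\rangle=0$ to kill the dominant-eigenvector channels between the two $\hat h$ blocks, which would otherwise produce non-convergent sums. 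Your bookkeeping instinct in the last paragraph is correct; you simply have more insertions to bookkeep than the commutator argument suggests.
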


In the subsequent discussion, we will focus on the finite-size lattice model, taking the $n\rightarrow \infty$ limit.

\begin{widetext}
\begin{lem}[Normalization]
The normalization of $|\tilde k\rangle$ is proportional to $\sqrt N$.
\end{lem}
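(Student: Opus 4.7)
The plan is to mimic the undeformed normalization lemma, replacing the product-state cluster decomposition by the spectral decomposition Eq.~(\ref{eq:resolution}) of the MPS transfer matrix. I would first expand
\[
\mathcal{N}^2 = \sum_{j,l=1}^{N} e^{ik(j-l)}\,G(j-l),\qquad G(j-l) \equiv \langle\Psi_e|\hat q_l^- \hat W^\dagger \hat W\hat q_j^+|\Psi_e\rangle,
\]
where translational invariance of $\hat W$ and of the product anchor state ensures $G$ depends only on the separation. Drawing $\hat W^\dagger \hat W$ as a sandwich of MPO tensors and absorbing $|\Psi_e\rangle$ into the physical legs, $G(d)$ becomes a cyclic trace over $N-2$ copies of the ordinary transfer matrix $T_A = A^{*}\otimes A$ together with two ``decorated'' transfer matrices $T_B, T_{B'}$ at the sites where $\hat q^{\pm}$ act, with $B = \hat W \hat q^+|\psi_e\rangle$ in tensor-block notation.

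Next I would insert Eq.~(\ref{eq:resolution}) on both arcs of the cycle. On the long arc of length $N-d-1$ the bound $|\lambda_n|<1$ suppresses every subdominant channel exponentially in $N$, so in the thermodynamic limit only the unique dominant eigenvector with eigenvalue $1$ survives; on the short arc of length $d-1$ both dominant and subdominant channels contribute. This would yield
\[
G(d) = G_{\mathrm{disc}} + \sum_{n\ge 1} c_n\,\lambda_n^{d-1},\qquad d\ge 1,
\]
where $G_{\mathrm{disc}}$ is the factorized ``disconnected'' two-point value and the amplitudes $c_n$ are built from overlaps of $T_B,T_{B'}$ with the left/right eigenbasis of $T_A$.

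The proof would then close by the same geometric-sum cancellation as Eq.~(\ref{eq:sum-to-zero}). For any nonzero crystal momentum $k$ the disconnected piece drops out, $\sum_{j,l}e^{ik(j-l)}G_{\mathrm{disc}} = 0$, leaving only the diagonal $d=0$ contribution and the exponentially decaying connected pieces. Reorganizing the double sum by separation and using $|\lambda_n|<1$ to converge the geometric series in $d$, one arrives at $\mathcal{N}^2 = N\,\Delta q^2 + O(1)$, with $\Delta q^2$ a $k$-dependent but $N$-independent constant. This is exactly the claimed scaling $\mathcal{N}\propto\sqrt{N}$, and it also makes $\Delta q^2$ the natural deformed analogue of the quantity in the first normalization lemma.

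The step I expect to be the main obstacle is controlling the amplitudes $c_n$ uniformly over the full spectrum of $T_A$ so that the $d$-sum is absolutely convergent; this is where the injectivity assumption on $\hat W$ and the finite transfer-matrix gap implicit in Eq.~(\ref{eq:resolution}) do the real work. Once that technical bound is in hand, everything else is a tensor-network repackaging of the product-state calculation already carried out for $\mathcal N_k$.
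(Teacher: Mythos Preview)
Your proposal is correct and follows essentially the same route as the paper: split the norm into diagonal and off-diagonal contributions, insert the transfer-matrix resolution Eq.~(\ref{eq:resolution}), kill the dominant (disconnected) piece via the Eq.~(\ref{eq:sum-to-zero}) trick at nonzero $k$, and sum the subdominant channels as geometric series $\sum_{d>0}\lambda_n^d e^{\pm ikd} = \lambda_n/(e^{\mp ik}-\lambda_n)$ to obtain $\mathcal N_k = \sqrt{N}\,\mathcal N_d(k)$. Your worry about uniformly controlling the amplitudes $c_n$ is unnecessary: the transfer matrix is $\chi^2\times\chi^2$, so the sum over $n$ in Eq.~(\ref{eq:resolution}) is finite and absolute convergence of the $d$-sum follows immediately from $|\lambda_n|<1$.
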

\begin{proof}
In the tensor-network notation, the norm of the MPS can be represented as
\begin{equation*}
	\mathcal N_k^2 = N\cdot\includegraphics[valign=c,height=0.8cm]{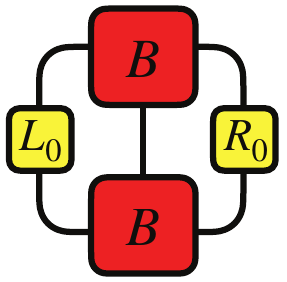}
	+\sum_{j<l} e^{-ik(j-l)}\includegraphics[valign=c,height=0.8cm]{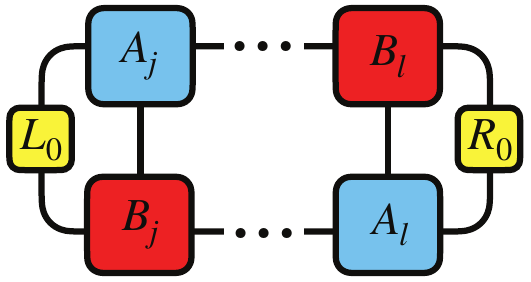}
	+\sum_{j<l} e^{ik(j-l)}\includegraphics[valign=c,height=0.8cm]{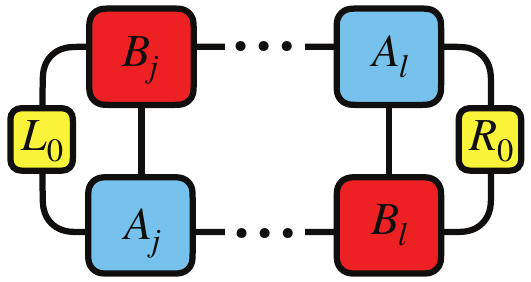}.
\end{equation*}
The second and third terms are complex conjugates of each other.
We evaluate the tensor contraction in the second term using the resolution (\ref{eq:resolution}) of the transfer matrix:
\begin{equation*}
	\includegraphics[valign=c,height=0.8cm]{pics/N-1}
	= \includegraphics[valign=c,height=0.8cm]{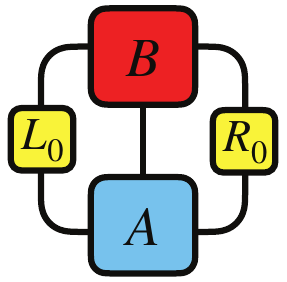}\includegraphics[valign=c,height=0.8cm]{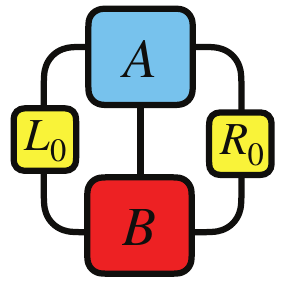}
	+ \sum_{n=1}^{\chi^2-1}\sum_{d>0} \lambda_n^d e^{-ik d}\includegraphics[valign=c,height=0.8cm]{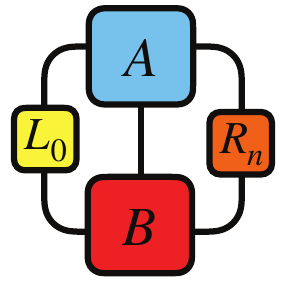}\includegraphics[valign=c,height=0.8cm]{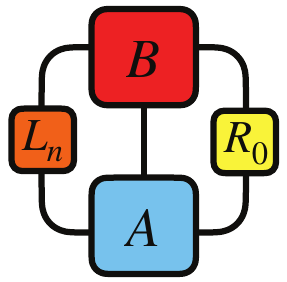}.
\end{equation*}
Similarly, for the third term:
\begin{equation*}
	\includegraphics[valign=c,height=0.8cm]{pics/N-2}
	= \includegraphics[valign=c,height=0.8cm]{pics/N-7}\includegraphics[valign=c,height=0.8cm]{pics/N-8}
	+ \sum_{n=1}^{\chi^2-1}\sum_{d>0} \lambda_n^d e^{+ik d} \includegraphics[valign=c,height=0.8cm]{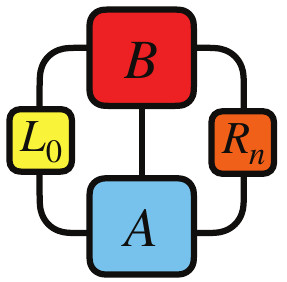}\includegraphics[valign=c,height=0.8cm]{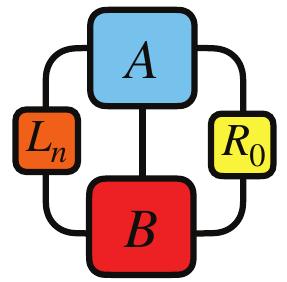}.
\end{equation*}
The summation of all non-decaying parts can be evaluated using the same trick in Eq.~(\ref{eq:sum-to-zero}), and the summation of the decaying parts converges since
\begin{equation*}
	\sum_{d>0} \lambda_n^d e^{\pm ik d} = \frac{\lambda_n}{e^{\mp ik}-\lambda_n}.
\end{equation*}
Thus, the final result is
\begin{equation}
	\mathcal N_k = \sqrt{N} \mathcal N_d(k)
\end{equation}
where $\mathcal N_d$ is defined by the following tensor contractions:
\begin{equation}\label{eq:tensor-norm}
	\mathcal N_d^2 = \includegraphics[valign=c,height=0.8cm]{pics/N-3}
	-\includegraphics[valign=c,height=0.8cm]{pics/N-7} \includegraphics[valign=c,height=0.8cm]{pics/N-8}
	+ \sum_{n=1}^{\chi^2-1}\left[\frac{\lambda_n}{e^{-ik}-\lambda_n} \includegraphics[valign=c,height=0.8cm]{pics/N-5}\includegraphics[valign=c,height=0.8cm]{pics/N-6} + C.c.\right].
\end{equation}
Since it is a finite sum, the lemma is proven.
\end{proof}

\begin{proof}[\textbf{Proof of the vanishing energy:}]
To begin, we introduce a tensor on a block consisting of $m+1$ sites:
\begin{equation*}
	\includegraphics[valign=c,height=0.44cm]{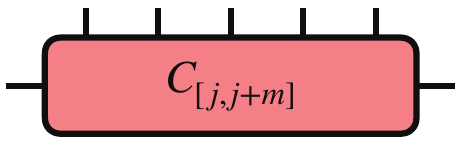}
	=\includegraphics[valign=c,height=0.4cm]{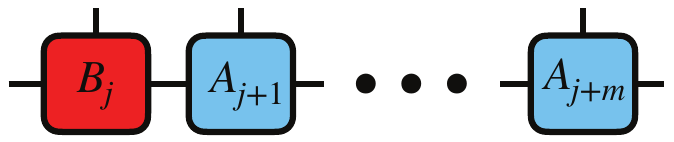} + e^{ik}\includegraphics[valign=c,height=0.4cm]{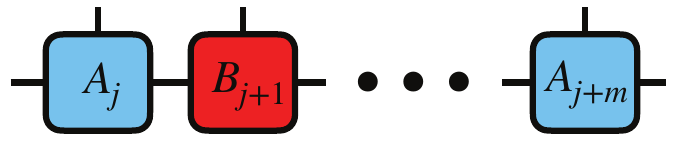} 
	 + \cdots+ e^{ikm}  \includegraphics[valign=c,height=0.4cm]{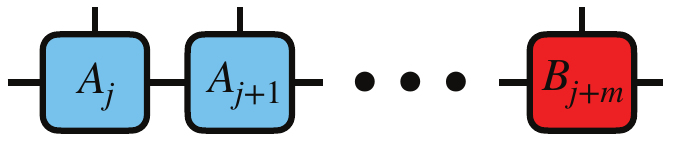}.
\end{equation*}
We decompose the expression of energy expectation into three distinct parts:
\begin{equation*}
\begin{aligned}
	\varepsilon_k =&\ S_1 + S_2 + S_2 \\
	=&\ \frac{1}{\mathcal N_d^2}\cdot \includegraphics[valign=c,height=0.9cm]{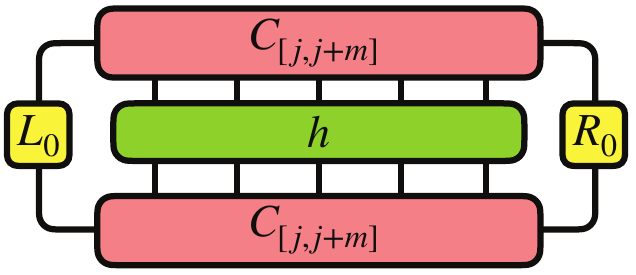} 
	 + \frac{1}{\mathcal N_d^2}\left[\sum_{j\notin[l,l+m]<l}e^{ik(j-l)}\includegraphics[valign=c,height=0.9cm]{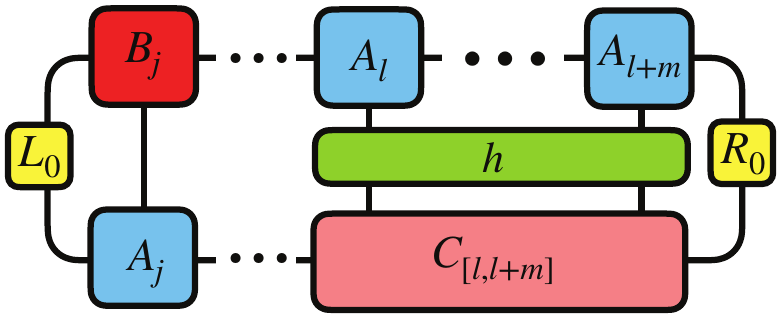} 
	+ C.c.\right] \\
	& + \frac{1}{\mathcal N_d^2}\left[\sum_{j,l\notin [i_0,i_0+m]}e^{ik(l-j)}\includegraphics[valign=c,height=0.9cm]{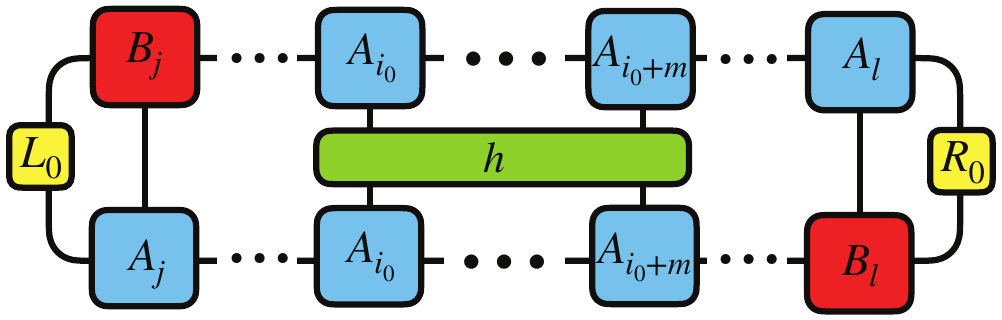}\right].
\end{aligned}
\end{equation*}
Note that the positioning of $B_j$ on the left and $B_l$ on the right is arbitrary, provided they are distinct from the primary block. 

The first term $S_1$ manifests as an analytic function of $k$.
Moving on to $S_2$, inserting the resolution (\ref{eq:resolution}), we partition $S_2$ into three components: $S_2=S_2^A + S_2^B + S_2^C$, where $S_2^A$ encompasses all non-decaying terms:
\begin{equation*}
\begin{aligned}
	S_2^A &= \frac{1}{\mathcal N_d^2}\sum_{j\notin[0,m]} e^{-ikj}\includegraphics[valign=c,height=0.9cm]{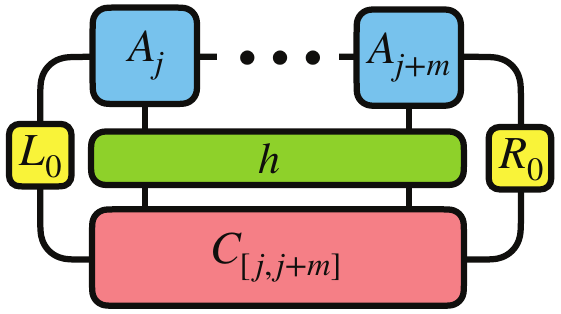} \cdot 
	\includegraphics[valign=c,height=0.9cm]{pics/N-7} +C.c.
\end{aligned}
\end{equation*}
Using the summation trick, we can convert the infinite sum over $j$ to a finite sum:
\begin{equation}
	\sum_{j\notin [0,m]} e^{ikj} = \sum_{j=0}^m e^{ikj} = \frac{1-e^{ik(m+1)}}{1-e^{ik}}.
\end{equation}
The $S_2^B$ part involves a decaying summation:
\begin{equation*}
\begin{aligned}
	S_2^B &= \frac{1}{\mathcal N_d^2} \sum_{n=1}^{\chi^2-1}\sum_{j>0}\left[ \lambda_n^j e^{+ikj}\includegraphics[valign=c,height=0.9cm]{pics/N-9}\includegraphics[valign=c,height=0.9cm]{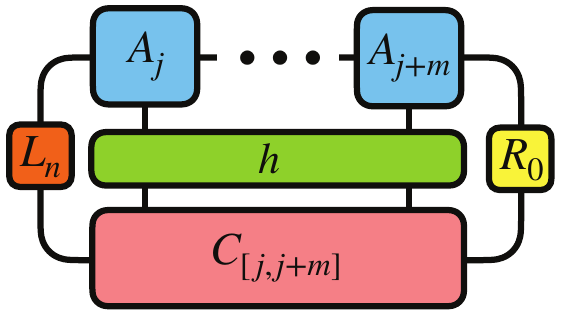}+C.c.\right].
\end{aligned}
\end{equation*}
For the infinite sum over $j$, note that the tensor-contraction part is a constant, and the summation becomes:
\begin{equation}
	\sum_{j>0} = \lambda_n^j e^{+ikj}
	= \frac{\lambda_n}{e^{-ik}-\lambda_n}.
\end{equation}
It is therefore convergent.
Similarly, summation of $S_2^C$ part mirrors that of $S_2^B$ part:
\begin{equation*}
\begin{aligned}
	S_2^C &= \frac{1}{\mathcal N_d^2} \sum_{n=1}^{\chi^2-1}\sum_{j>0}\left[\lambda_n^j e^{-ikj}\includegraphics[valign=c,height=0.9cm]{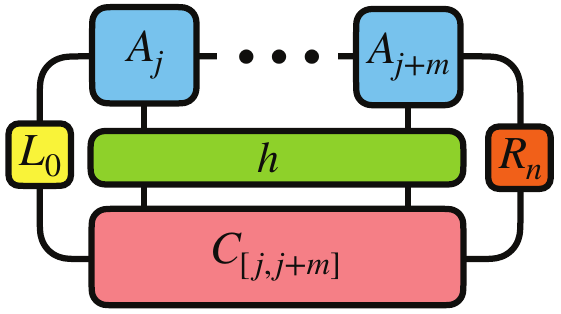}\includegraphics[valign=c,height=0.9cm]{pics/N-6}+C.c.\right],
\end{aligned}
\end{equation*}
where the infinite summation over $j$ is:
\begin{equation}
	 \sum_{j>0} \lambda_n^j e^{-ikj} =\frac{\lambda_n}{e^{+ik}-\lambda_n}.
\end{equation}
Thus, $S_2$ is also an analytic function.

Considering $S_3$, which encompasses all tensor contractions where $B$'s and $h$ are separated, we also partition the summation into different classes.
First, we examine terms where two $B$'s are connected:
\begin{equation*}
\begin{aligned}
	S_3^A &= \frac{1}{\mathcal N_d^2}\sum_{n=1}^{\chi^2-1} \sum_{d>0} \lambda_n^{d} \left[\includegraphics[valign=c,height=0.9cm]{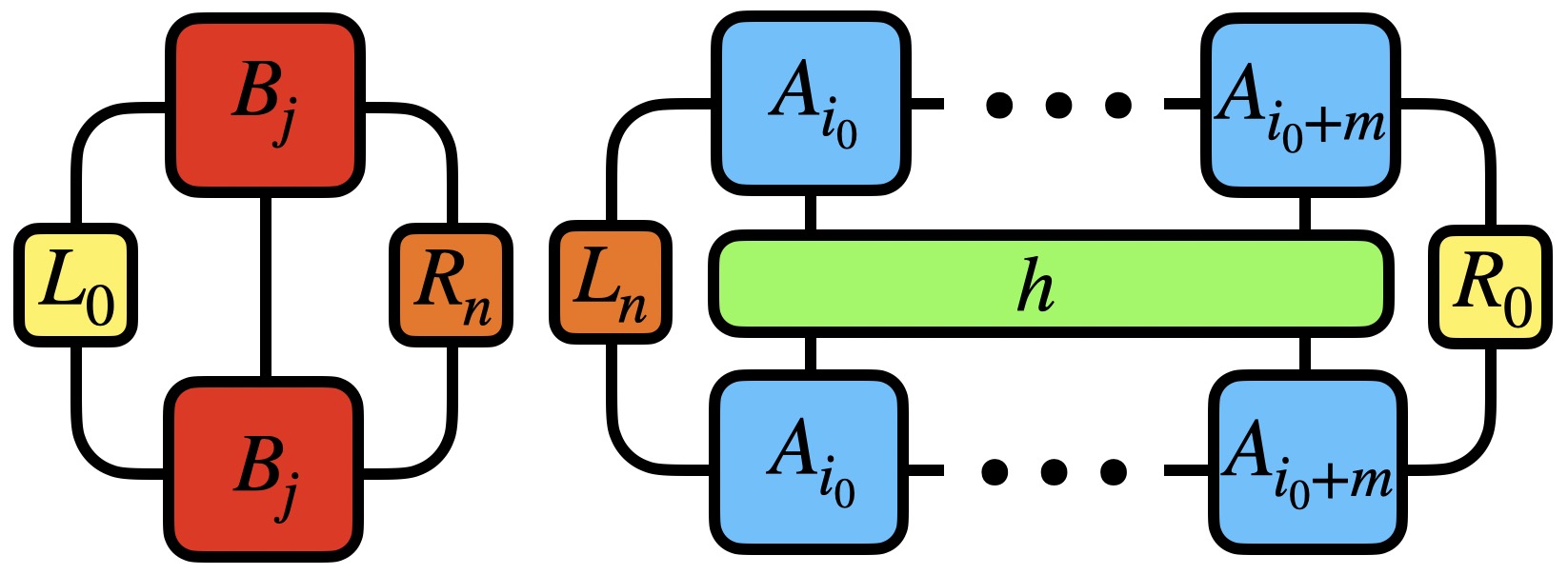} + \includegraphics[valign=c,height=0.9cm]{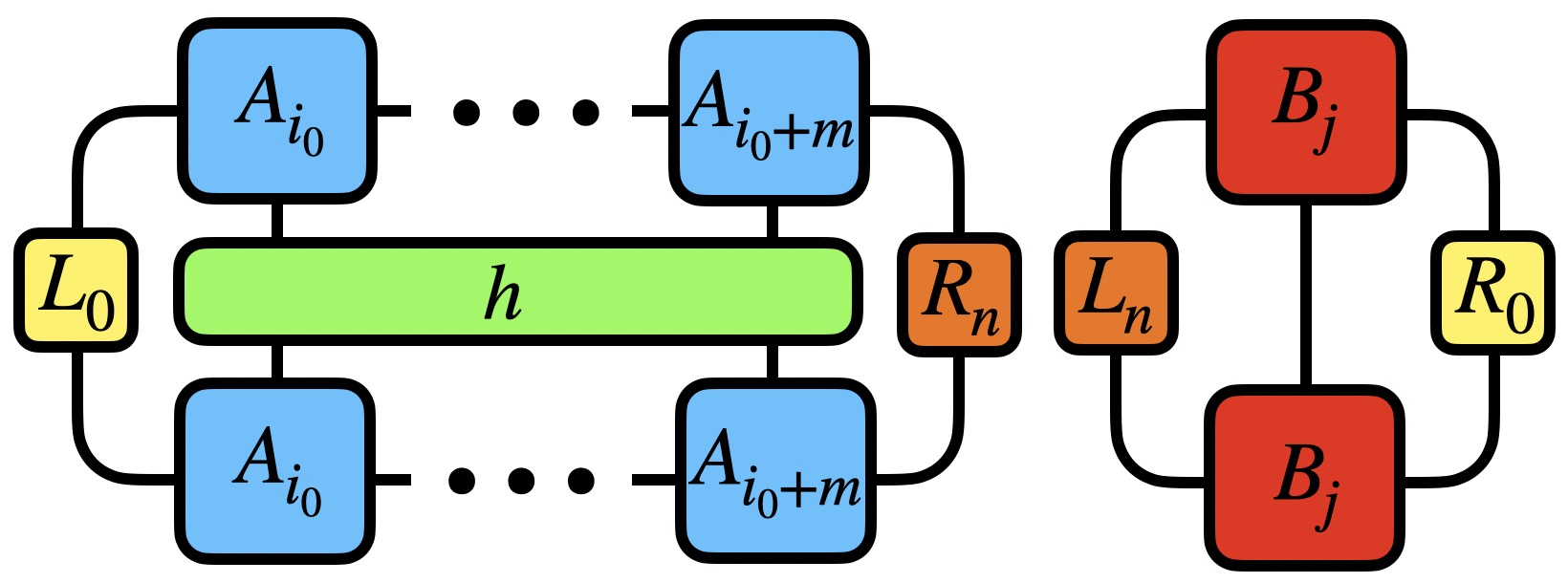} \right], 
\end{aligned}
\end{equation*}
where the infinite sum over $d$ is convergent due to the exponential decay of coefficients:
\begin{equation}
	\sum_{d>0} \lambda_n^d = \frac{\lambda_n}{\lambda_n-1}.
\end{equation}
Then, we explore contractions where $B$'s are separated but with decaying transfer modes:
\begin{equation*}
\begin{aligned}
	S_3^B =&\ \frac{1}{\mathcal N_d^2}\sum_{n,n'=1}^{\chi^2-1}
	\left[ \sum_{j<i_0}\sum_{l<j} \lambda_n^{i_0-j}\lambda_{n'}^{j-l} e^{-ik(j-l)}\includegraphics[valign=c,height=0.9cm]{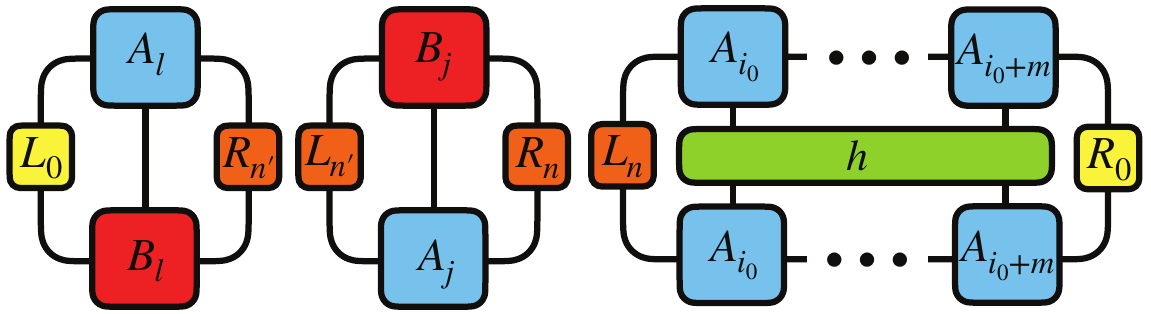} \right. \\
	&+ \sum_{l>i_0+m}\sum_{j>l} \lambda_n^{l-i_0+m} \lambda_{n'}^{j-l}e^{-ik(j-l)}\includegraphics[valign=c,height=0.9cm]{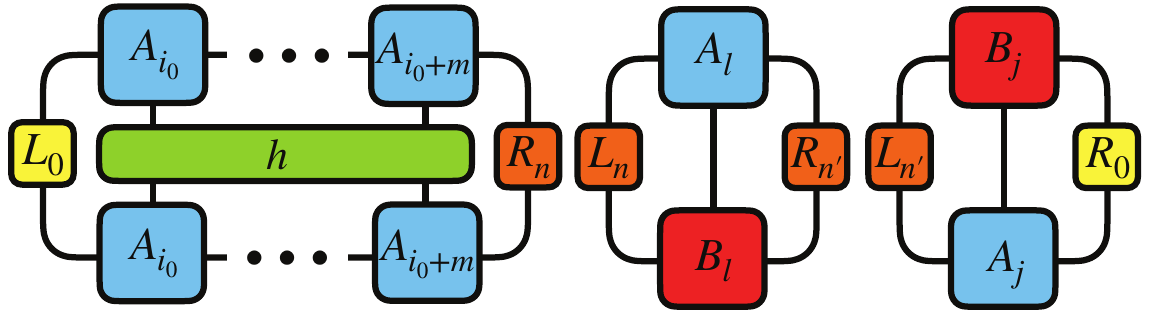} \\
	&+ \left. \sum_{l<i_0}\sum_{j>i_0+m} \lambda_n^{j-i_0-m}\lambda_{n'}^{i_0-l}e^{-ik(j-l)}\includegraphics[valign=c,height=0.9cm]{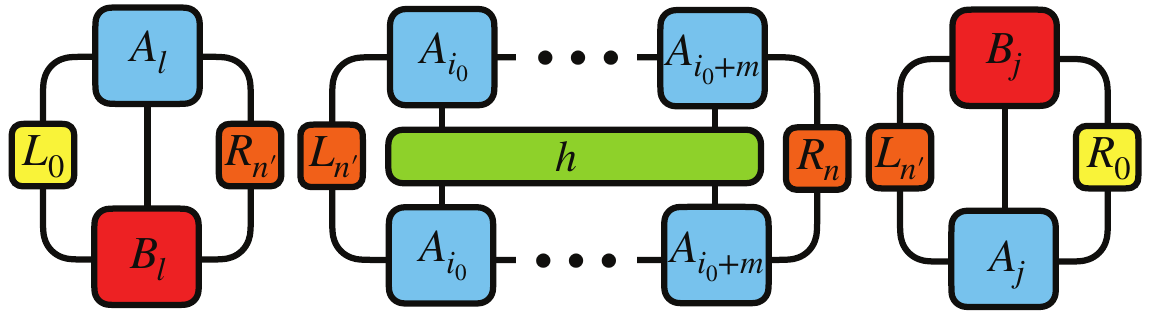} +C.c.\right].
\end{aligned}
\end{equation*}
The infinite summations produces the following convergent factors:
\begin{eqnarray}
	\sum_{j<i_0}\sum_{l<j} \lambda_n^{i_0-j}\lambda_{n'}^{j-l} e^{-ik(j-l)} 
	&=& \frac{\lambda_n}{1-\lambda_n}\frac{\lambda_{n'}}{e^{ik}-\lambda_{n'}}, \\
	\sum_{l>i_0+m}\sum_{j>l} \lambda_n^{l-i_0+m} \lambda_{n'}^{j-l}e^{-ik(j-l)}
	&=& \frac{\lambda_n}{1-\lambda_n} \frac{\lambda_{n'}}{e^{ik}-\lambda_{n'}}, \\
	\sum_{l<i_0}\sum_{j>i_0+m} \lambda_n^{j-i_0-m}\lambda_{n'}^{i_0-l}e^{-ik(j-l)}
	&=& \frac{\lambda_n}{e^{ik}-\lambda_n} \frac{\lambda_{n'}}{e^{ik}-\lambda_{n'}} e^{-ikm}.
\end{eqnarray}
The remaining part of $S_3$ is:
\begin{equation*}
\begin{aligned}
	S_3^C =&\ \frac{1}{\mathcal N_d^2}\sum_{n=1}^{\chi^2-1} \left[ \sum_{j<i_0}\sum_{l\notin[j,i_0+m]} \lambda_n^{i_0-j} e^{-ik(j-l)}\includegraphics[valign=c,height=0.9cm]{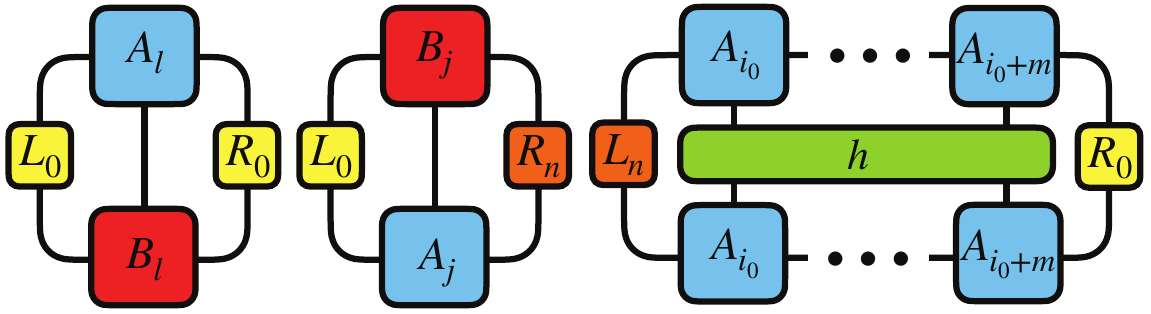} \right. \\
	&+ \left. \sum_{l>i_0+m}\sum_{j\notin[i_0,l]} \lambda_n^{l-i_0-m}e^{-ik(j-l)}\includegraphics[valign=c,height=0.9cm]{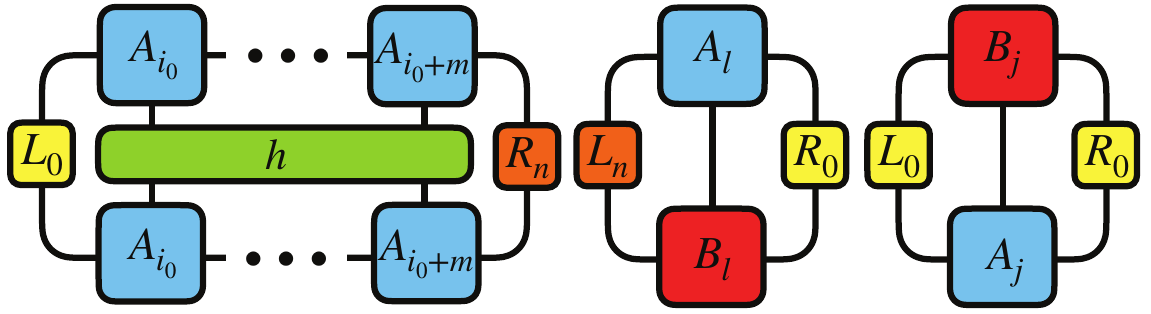} +C.c.\right].
\end{aligned}
\end{equation*}
To demonstrate the convergence of all terms, we first utilize a similar technique as in Eq.~(\ref{eq:sum-to-zero}) to convert the infinite sum outside the interval to a finite sum inside the interval. 
Then, we need to establish the convergence of the following summation:
\begin{equation}
	\sum_{j=1}^\infty \lambda_n^j \sum_{s=0}^{j+m}e^{-ik s}
	= \sum_{j=1}^\infty \lambda_n^j e^{-i k (j+m)} \frac{1-e^{i k (j+m+1)}}{1-e^{i k}}
	= \frac{\lambda_n}{1-\lambda_n}  \frac{(e^{-i k m}+e^{i k}) \lambda_n-e^{2i k}+e^{-i k m}}{\left(1-e^{i k}\right) \left(e^{i k}-\lambda_n\right)}.
\end{equation}
This expression constitutes analytic functions. Thus, we complete the proof.
\end{proof}

\begin{proof}[\textbf{Proof of the vanishing variance:}]
Until this point, throughout our proof of the theorem, we consistently transform expressions into finite sums of analytic functions. 
These expressions not only confirm the absence of gaps in the energy variance but also enable the direct derivation of the variance's analytic form for specific models. 
However, in the case of frustrated scenarios explored in this section, the general result has the form of infinite summations. 
Nonetheless, our primary objective persists: to rigorously establish the absolute convergence of this quantity as a summation of analytic functions.

Here we introduce another notation for two connected Hamiltonian local terms:
\begin{equation}
\begin{aligned}
	\includegraphics[valign=c,height=0.5cm]{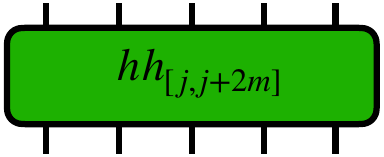}
	\equiv &\ \includegraphics[valign=c,height=0.5cm]{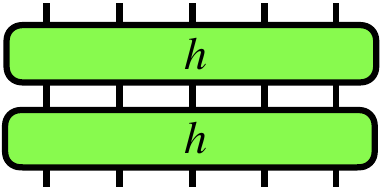} + \includegraphics[valign=c,height=0.5cm]{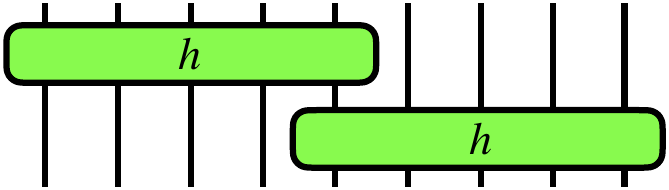} +
	 \includegraphics[valign=c,height=0.5cm]{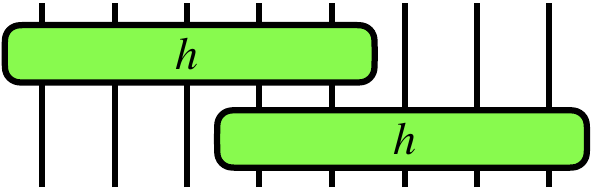}
	+\includegraphics[valign=c,height=0.5cm]{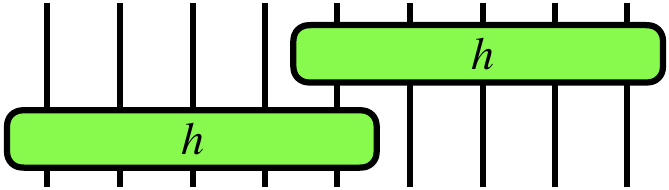}.
\end{aligned}
\end{equation}
We divided the energy variance into two parts: $\delta\varepsilon_k^2 = S_1 + S_2$, one for the all-connected term:
\begin{equation}
	S_1 = \frac{1}{\mathcal N_d^2} \includegraphics[valign=c,height=1.0cm]{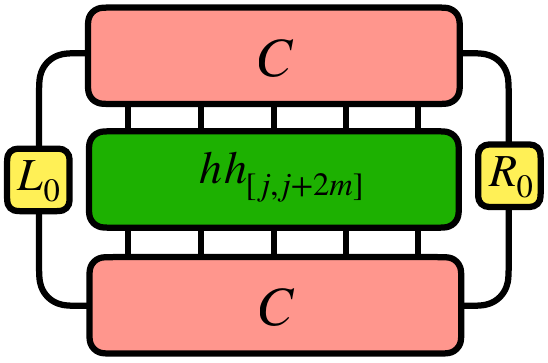},
\end{equation}
which is a short-hand notation for the following summation: 
\begin{equation*}
\begin{aligned}
	S_1 =&\ \frac{1}{\mathcal N_d^2}\includegraphics[valign=c,height=1.0cm]{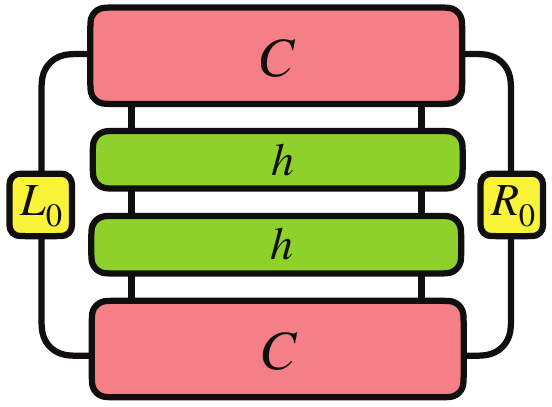}
	+ \frac{e^{ikm}}{\mathcal N^2_d}\includegraphics[valign=c,height=1.0cm]{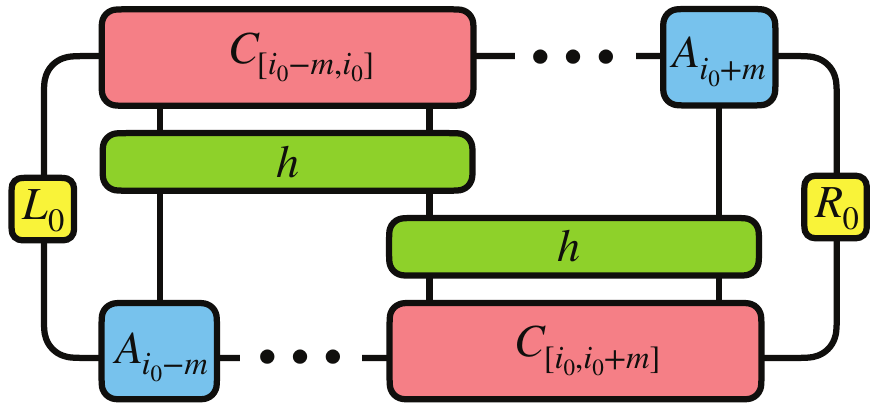} + \cdots \\
	&+ \frac{e^{ik(i_0-j)}}{\mathcal N_d^2}\includegraphics[valign=c,height=1.0cm]{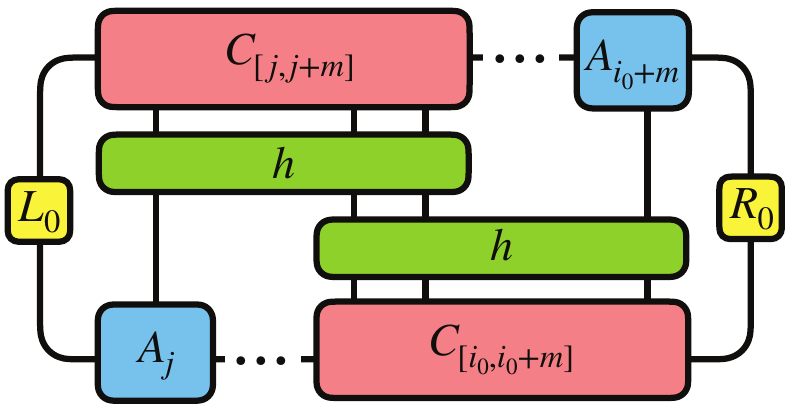} + \cdots
	+ \frac{e^{-ikm}}{\mathcal N_d^2}\includegraphics[valign=c,height=1.0cm]{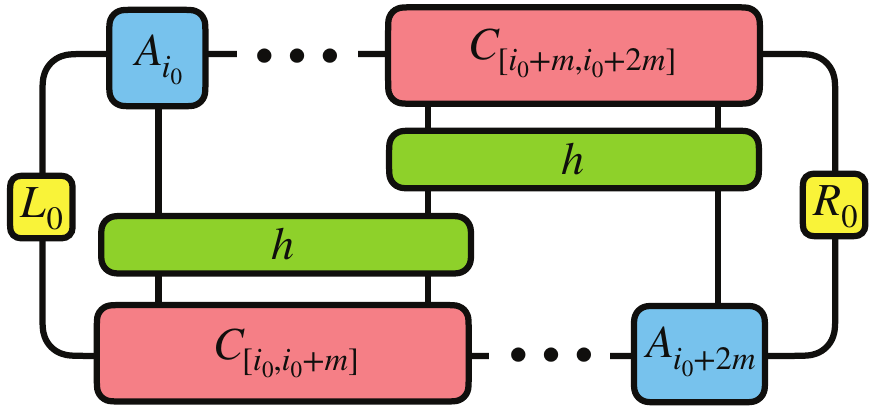}.
\end{aligned}
\end{equation*}

The critical aspect of the proof involves demonstrating the convergence of the infinite summation in the ``unconnected'' part.

Considering, without loss of generality, the tensor contraction where $B_j$ tensor in the upper left lies outside two Hamiltonian local terms. 
Two possibilities arise for this scenario. Firstly, 
\begin{equation}
	S_2^A = \frac{1}{\mathcal N_d^2}\sum_{p}\sum_{j<i_0}\sum_{l>j} e^{-ik(j-l)}\includegraphics[valign=c,height=0.9cm]{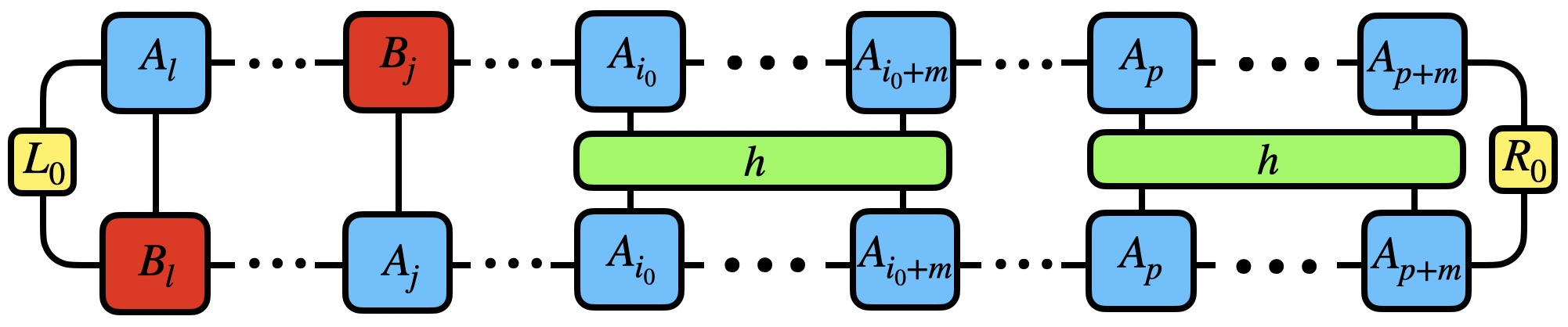},
\end{equation}
where two local Hamiltonian terms need not be separated. 
We claim that inserting resolution (\ref{eq:resolution}) between $B_j$ and Hamiltonian local terms will result in the dominant eigenvector leading to a zero, as the term involves the following contraction: 
\begin{equation*}
	\sum_{p} \includegraphics[valign=c,height=0.9cm]{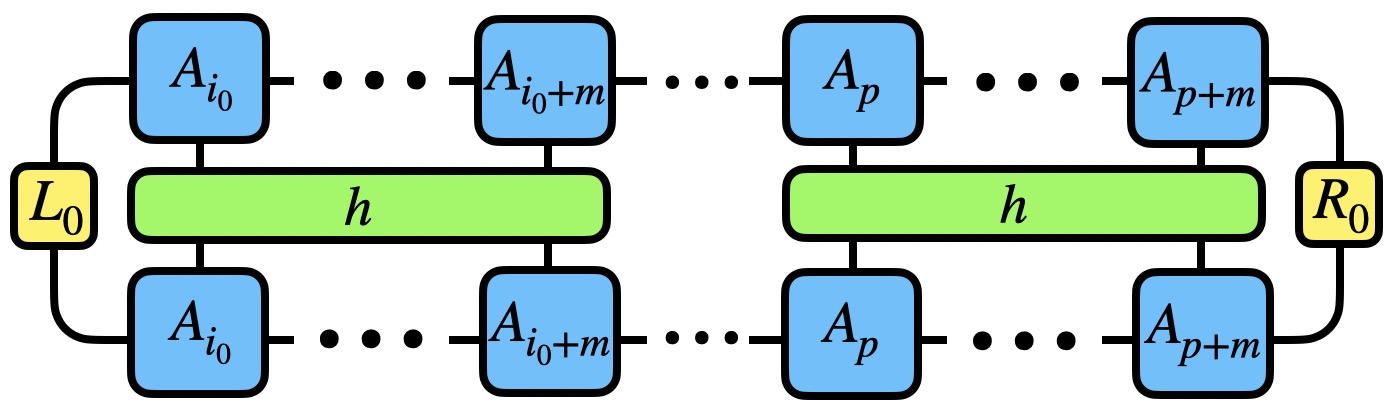} \sim \langle 
	\Psi_e|\hat H^2|\Psi_e\rangle = 0.
\end{equation*}
The other term in $S_2$ is
\begin{equation}
	S_2^B = \frac{1}{\mathcal N_d^2}\sum_{p}\sum_{j<i_0}\sum_{l<j} e^{-ik(j-l)}\includegraphics[valign=c,height=0.9cm]{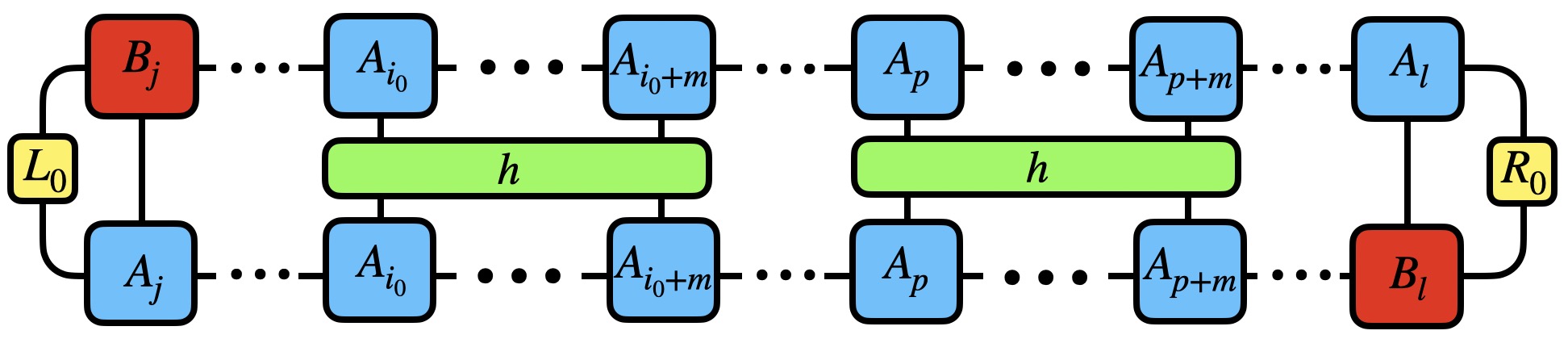},
\end{equation}
where $B_l$ in the bottom layer need not be located on the right of the Hamiltonian local terms. 
Upon inserting resolution (\ref{eq:resolution}) between $B_j$ and Hamiltonian local terms, the dominant eigenvector will also lead to a zero since it contains the contraction:
\begin{equation*}
	\frac{1}{\mathcal N_d} \sum_{p} \sum_{l>j}\includegraphics[valign=c,height=0.9cm]{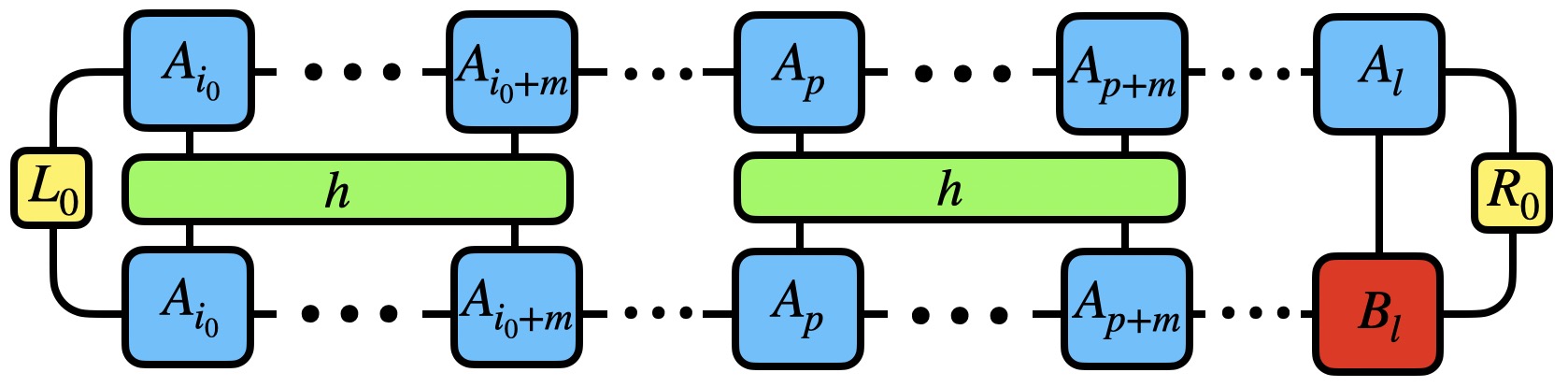} \sim \langle 
	\Psi_e|\hat H^2|k\rangle = 0.
\end{equation*}
Furthermore, when two Hamiltonian local terms are separate, there should be no dominant vector inserted between them; otherwise, a contraction in either
\begin{equation*}
	\includegraphics[valign=c,height=0.9cm]{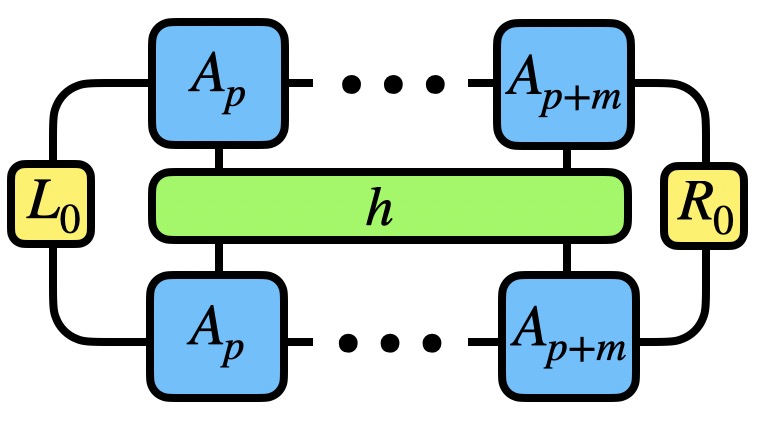} \sim \langle\Psi_e|\hat H|\Psi_e\rangle =0,
\end{equation*}
form or in the 
\begin{equation*}
	\frac{1}{\mathcal N_d}\sum_{l>j} \includegraphics[valign=c,height=0.9cm]{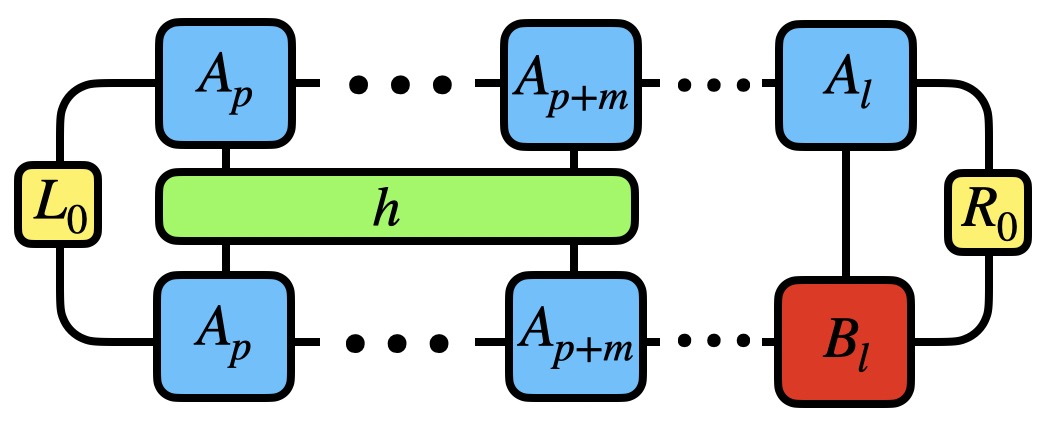} \sim \langle\Psi_e|\hat H|k\rangle =0,
\end{equation*}
form will result in zero.

Thus, we arrive at an effective quasi-short-range tensor: 
\begin{equation}
	\sum_{j<i_0, p>i_0+m} \lambda_n^{i_0-j}\lambda_{n'}^{p-i_0-m} \includegraphics[valign=c,height=0.9cm]{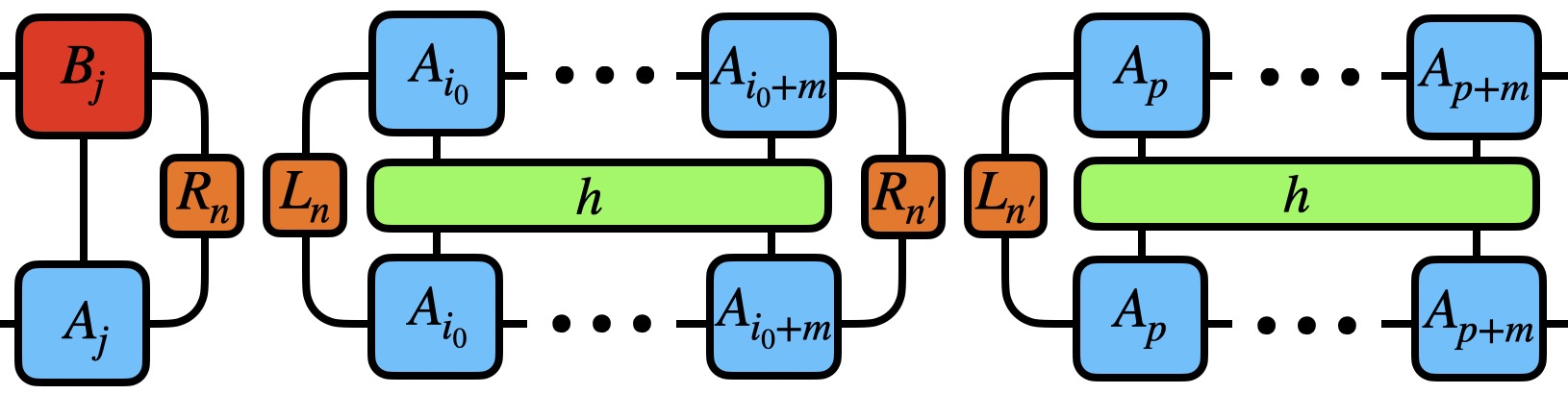}.
\end{equation}
The remaining convergence to verify is:
\begin{equation*}
	\sum_{j<i_0}\sum_{p>i_0+m}\lambda_n^{i_0-j}\lambda_{n'}^{p-i_0-m} \sum_{l\notin[j,p+m]} e^{ik(j-l)}\includegraphics[valign=c,height=0.9cm]{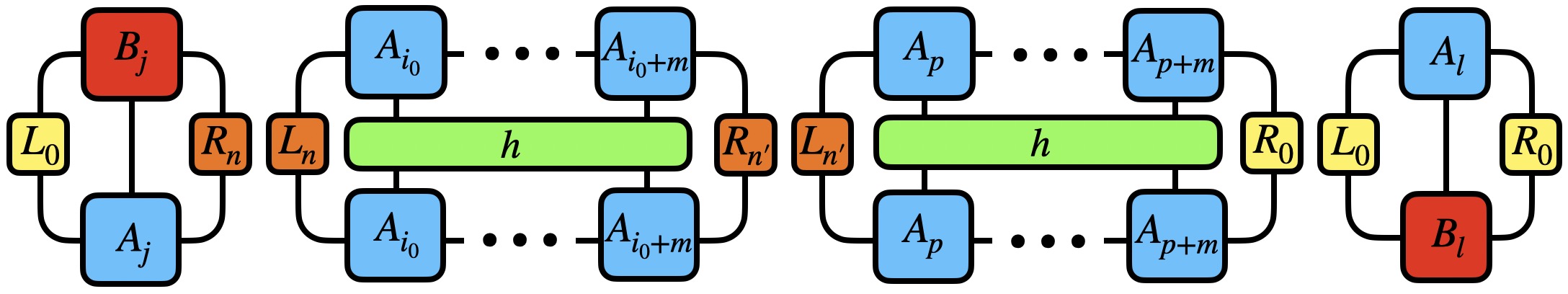}.
\end{equation*}
As before, the singled-out infinite summation can be converted into a finite summation. 
Consequently, all remaining infinite summations contain an exponentially decaying factor depending on the size of the nontrivial block, ensuring absolute convergence. 
Thus, we have established that $\delta\varepsilon_k^2$ is a continuous function of $k$.
\end{proof}
\end{widetext}

\section{Deformed Quasi-Nambu-Goldstone Modes in AKLT Model}
\label{apx:dqngmaklt}

\subsection{Generalized AKLT scar model}
Although the AKLT model (in the degenerate limit) is not frustration-free,\footnote{While the local terms of the original AKLT Hamiltonian are frustration-free, in the degenerate limit, we introduce an additional term $\hat{s}_j^z$ to induce degeneracy within the scar tower. This modification renders the scar tower no longer an eigenstate of local $\hat s_j$, thereby introducing frustration into the Hamiltonian.} since it has reflection symmetry, the dispersion cannot be linear and thus there is no velocity for the $k=\pi$ wave packet, as we can see from Fig.~\ref{fig:AKLT-origin}.
However, it is possible to induce velocity by adding scar tower-preserving perturbations into the AKLT Hamiltonian.
Such terms can be systematically found using the quantum inverse method \cite{inverse-method,inverse-method-2}, which takes a list of operators $\{\hat O^{(i)}, i=1,\dots,m\}$ as input, and calculate the $m\times m$ covariance matrix:
\begin{equation}
	C_{ij} = \operatorname{Tr}\left[\hat\rho \hat O^{(i)} \hat O^{(j)}\right] - \operatorname{Tr}\left[\hat\rho \hat O^{(i)} \right] \operatorname{Tr}\left[\hat\rho \hat O^{(j)}\right],
\end{equation}
where $\hat \rho$ is an arbitrary combination of the tower states:
\begin{equation*}
	\hat\rho = \sum_n p_n |\Psi_n\rangle\langle\Psi_n|,\quad \sum_n p_n=1.
\end{equation*}
If there is a combination
\begin{equation}
	\hat H(\mathbf{J}) = \sum_{i=1}^m J_i \hat O^{(i)}
\end{equation}
of which the scar tower is degenerate scar states, then
\begin{equation*}
\begin{aligned}
	\sum_{i,j=1}^m J_i C_{ij} J_j
	&= \operatorname{Tr}[\hat\rho \hat H^2] - \operatorname{Tr}[\hat\rho \hat H] \operatorname{Tr}[\hat\rho H] \\
	&= \sum_n p_n \langle\Psi_n|\hat H^2 |\Psi_n\rangle - E^2
	=0.
\end{aligned}	
\end{equation*}
The solution is the null space of covariance matrix $C_{ij}$.

\begin{figure}
	\centering
	\includegraphics[width=0.45\textwidth]{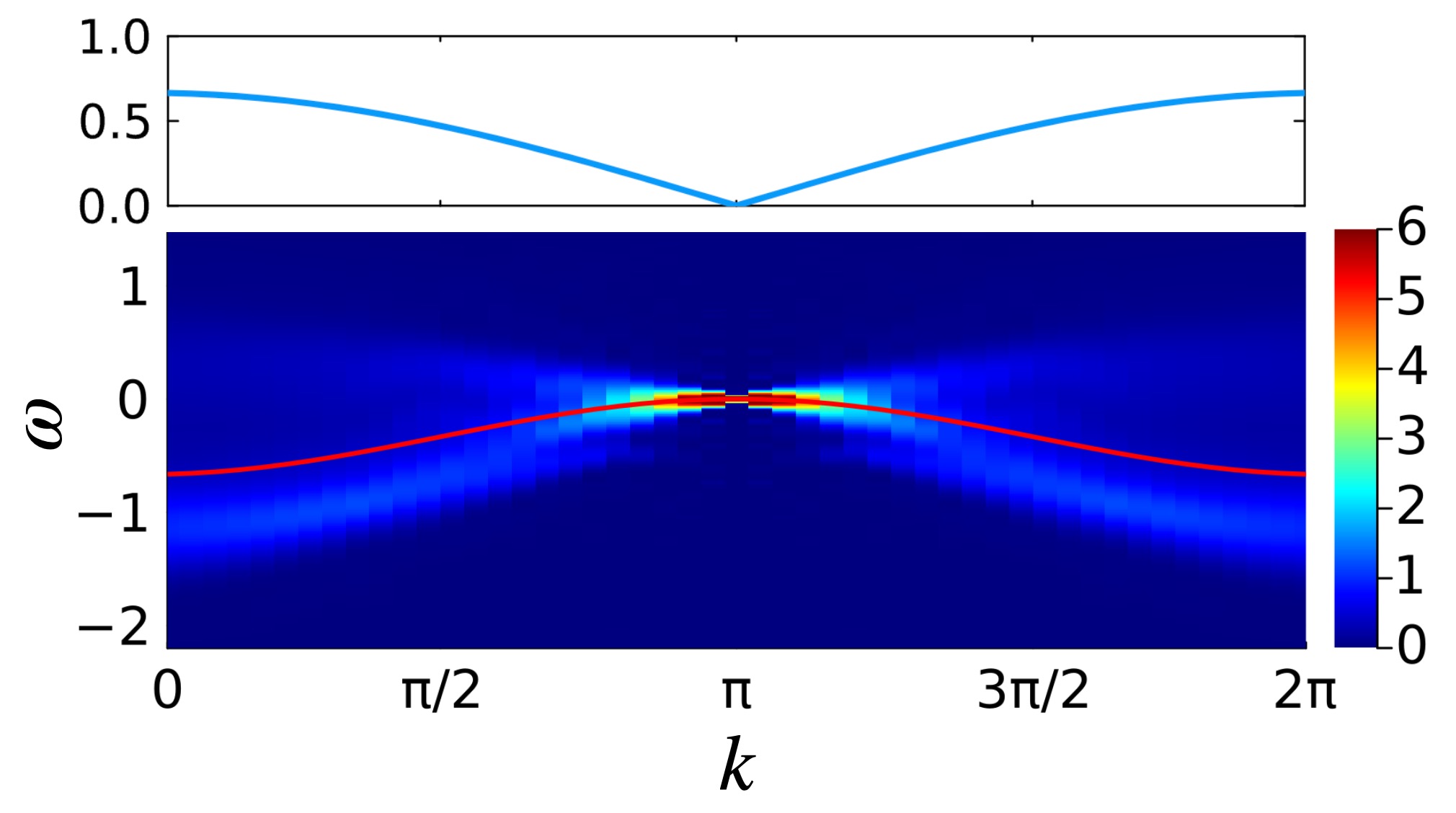}
	\caption{Bottom: Spectral function $A(k,\omega)$ of the original AKLT mode, and energy expectation $\varepsilon_k$. Top: energy variance $\delta\varepsilon_k$.}
	\label{fig:AKLT-origin}
\end{figure}

In practice, we usually focus on the translational-invariant Hamiltonian.
That is, the trial operator will be in the form:
\begin{equation}
	\hat O^{(i)} = \sum_{j=1}^N \hat o^{(i)}_j
\end{equation}
For spin-1 degree of freedom, we first choose the basis for the local operators.
A canonical choice is the Gell-Mann matrices:
\begin{equation}
\begin{aligned}
\lambda^{(1)} &= \left(
\begin{array}{ccc}
 0 & 1 & 0 \\
 1 & 0 & 0 \\
 0 & 0 & 0 \\
\end{array}
\right), &
\lambda^{(2)} &= \left(
\begin{array}{ccc}
 0 & -i & 0 \\
 i & 0 & 0 \\
 0 & 0 & 0 \\
\end{array}
\right), \\
\lambda^{(3)} &= \left(
\begin{array}{ccc}
 1 & 0 & 0 \\
 0 & -1 & 0 \\
 0 & 0 & 0 \\
\end{array}
\right), &
\lambda^{(4)} &= \left(
\begin{array}{ccc}
 0 & 0 & 1 \\
 0 & 0 & 0 \\
 1 & 0 & 0 \\
\end{array}
\right), \\
\lambda^{(5)} &= \left(
\begin{array}{ccc}
 0 & 0 & -i \\
 0 & 0 & 0 \\
 i & 0 & 0 \\
\end{array}
\right), &
\lambda^{(6)} &= \left(
\begin{array}{ccc}
 0 & 0 & 0 \\
 0 & 0 & 1 \\
 0 & 1 & 0 \\
\end{array}
\right), \\
\lambda^{(7)} &= \left(
\begin{array}{ccc}
 0 & 0 & 0 \\
 0 & 0 & -i \\
 0 & i & 0 \\
\end{array}
\right), & 
\lambda^{(8)} &= \frac{1}{\sqrt{3}}\left(
\begin{array}{ccc}
 1 & 0 & 0 \\
 0 & 1 & 0 \\
 0 & 0 & -2 \\
\end{array}
\right).
\end{aligned}
\end{equation}
For two-site operators, we define
\begin{equation}\label{eq:lambda-def}
	\hat \lambda^{(i,j)} \equiv \hat \lambda^{(i)}\otimes\hat \lambda^{(j)}, \quad
	\hat \lambda^{[i,j]} \equiv \hat \lambda^{(i,j)} - \hat \lambda^{(j,i)}.
\end{equation}
To find perturbations that induce nonzero velocity, we need to consider the odd-parity operators.
For simplicity, we only consider $S^z_\text{tot}$-conserving, translational-invariant Hamiltonian.
There are 6 independent odd-parity two-site operators:
\begin{equation}
\begin{aligned}
	\hat o^{(1)} &= \hat \lambda^{[38]},\\
	\hat o^{(2)} &= \hat \lambda^{[12]},\\
	\hat o^{(3)} &= \hat \lambda^{[45]},\\
	\hat o^{(4)} &= \hat \lambda^{[67]},\\
	\hat o^{(5)} &= \hat \lambda^{[16]} + \hat \lambda^{[27]},\\
	\hat o^{(6)} &= \hat \lambda^{[17]} + \hat \lambda^{[62]}.
\end{aligned}
\end{equation}
We do not find any solution for perturbation in the nearest-neighbor odd-parity operator space.
However, we can further search for the three-site operator in the form:
\begin{equation*}
	\hat \lambda_j^{(a)} \hat o_{j-1,j+1}^{(b)}, \quad a=3,8;\ b = 1,\dots,6.
\end{equation*}
We find two independent solutions:
\begin{equation*}
\begin{aligned}
	\hat V_1 &= \sum_{j=1}^N \left[\hat \lambda^{(3)}_j \otimes \hat o^{(2)}_{j-1,j+1}  - \hat s^z_j \otimes \hat o^{(3)}_{j-1,j+1} + \hat b_j \otimes \hat o^{(4)}_{j-1,j+1} \right], \\
	\hat V_2 &= \sum_{j=1}^N \hat s_j^z \otimes \left[\hat o^{(2)}+\hat o^{(4)}+2\hat o^{(6)}\right]_{j-1,j+1},
\end{aligned}
\end{equation*}
where we have defined the operator 
\begin{equation}
	\hat b = \frac{\sqrt 3}{2}\hat \lambda^{(8)} - \frac{1}{2}\hat \lambda^{(3)} = \operatorname{diag}(0,1,-1).
\end{equation}

\subsection{Quasi-Nambu-Goldstone modes}
The generalized scar Hamiltonian for AKLT tower is
\begin{equation}\label{eq:gen-aklt-ham}
	\hat H(\alpha,\beta) = \hat H_\text{AKLT} + \alpha \hat V_1 + \beta \hat V_2,
\end{equation}
where the (degenerate) AKLT model is fixed to 
\begin{equation*}
	\hat H_\text{AKLT} =\sum_{j}\left[\frac{1}{2}\hat{\mathbf{s}}_j\cdot\hat{\mathbf{s}}_{j+1}+\frac{1}{6}(\hat{\mathbf{s}}_j\cdot\hat{\mathbf{s}}_{j+1})^2-\hat{s}^z_j\right],
\end{equation*}
and $\alpha,\beta$ are two free parameters.
In the main text, the parameters are fixed to:
\begin{equation}\label{eq:aklt-pert}
	\hat V = -0.3 \hat V_1 -0.1 \hat V_2.
\end{equation}

\begin{figure}
	\centering
	\includegraphics[width=0.95\linewidth]{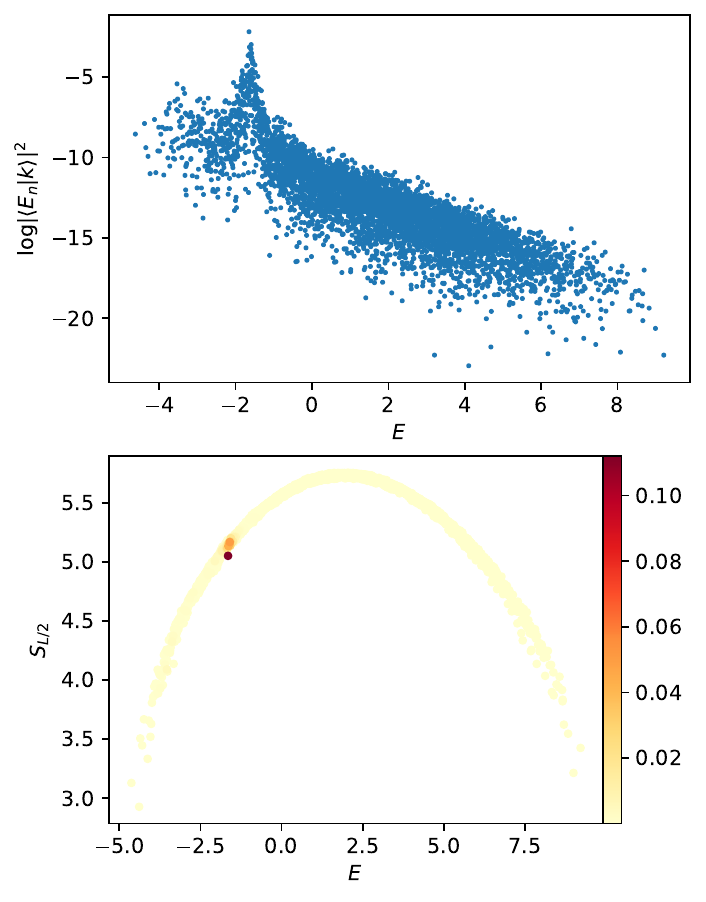}
	\caption{Exact diagonalization of Eq.~(\ref{eq:gen-aklt-ham}) for a finite system with $L=12$ in the $(N=8,k=\pi+2\pi/L)$ sector. Top: Overlap between the initial state and each eigenstate. Bottom: Scatter plot of eigenenergy versus energy, with a color map indicating the overlaps with the z-quasi-Goldstone mode..}
	\label{fig:aklt-ent}
\end{figure}

In Fig.~\ref{fig:aklt-ent}, we show explicitly that Eq.~(\ref{eq:gen-aklt-ham}) in the 
\begin{equation*}
	k=\pi + \frac{2\pi}{L}
\end{equation*}
sector is indeed thermal and the z-quasi-Goldstone mode is the superposition of those thermal states.

\section{Deformed Quasi-Nambu-Goldstone Modes in Other Scar Models}
\label{apx:dqngmother}

In this Appendix, we delve deeper into examples showcasing deformed quasi-Goldstone modes within existing exact scar models. 
Specifically, we explore the Rydberg-antiblockaded scar model, the Onsager scar model, and an additional scar in the spin-1 XY model.
Note that the original Hamiltonians of these scar models exhibit reflection symmetry, resulting in zero velocity at $k=0$ or $k=\pi$. 
To yield more nontrivial results, we introduce a deformation term $\hat V$ to the Hamiltonian:
\begin{equation}
	\hat H' = \hat H + \hat V,
\end{equation}
where $\hat V$ is reflection anti-symmetric, preserves the scar space, and induces nonzero velocity at the $k=0,\pi$.

\subsection{Rydberg-antiblockaded model}

\begin{figure}
	\centering
	\includegraphics[width=0.9\linewidth]{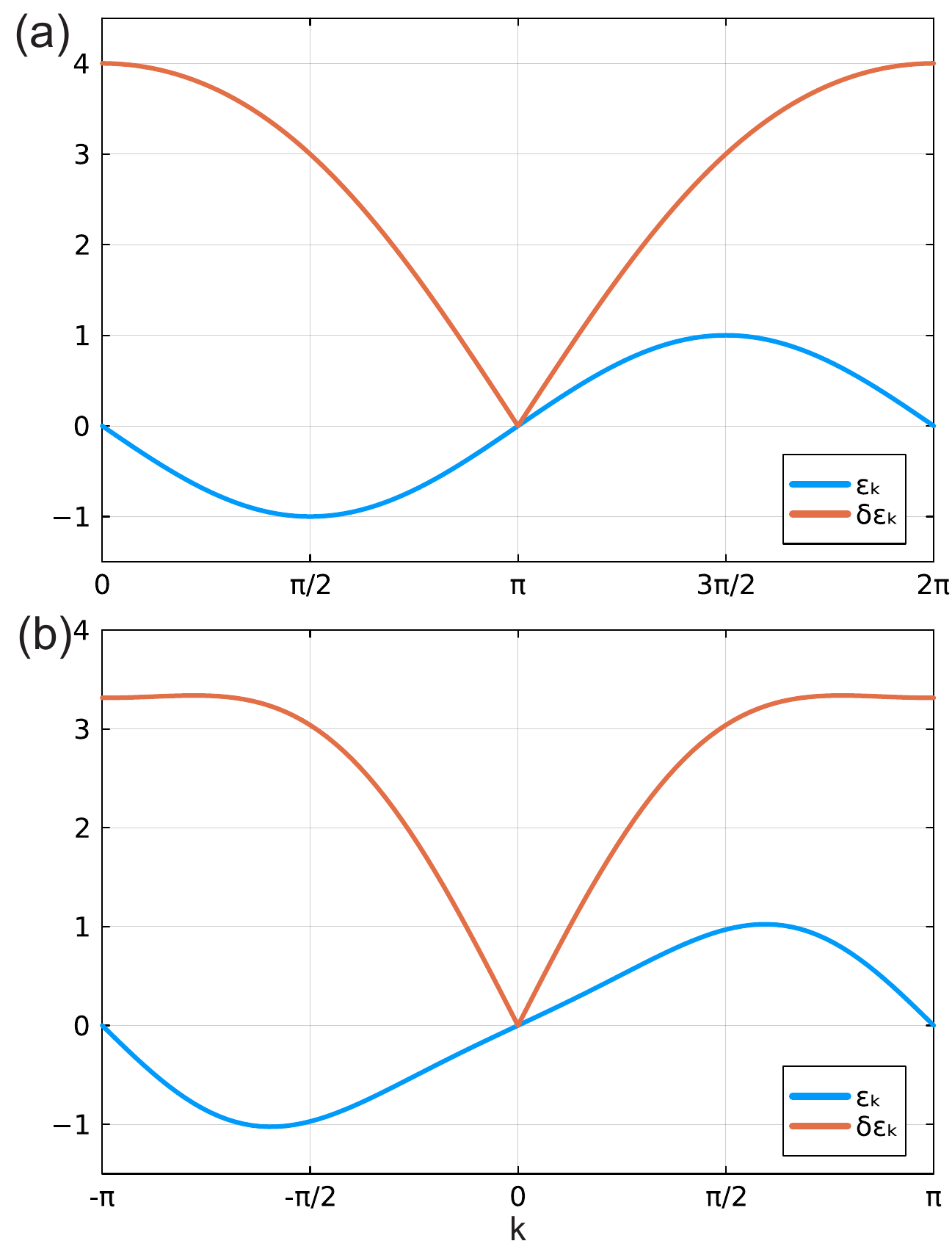}
	\caption{Energy expectation $\varepsilon_k$ and variance $\delta\varepsilon_k$ under deformed scar Hamiltonian Eq.~(\ref{eq:deformed-rydberg-ham}) with $J=1, h=0, g=2$. (a) $\varepsilon_k$ and and $\delta\varepsilon_k$ of the z-qNGM in Eq.~(\ref{eq:rydberg-zqngm}). (b) $\varepsilon_k$ and $\delta\varepsilon_k$ of the x-qNGM in Eq.~(\ref{eq:rydberg-xqngm}) with $\xi=1/2$.}
	\label{fig:rydberg}
\end{figure}

The model is introduced in Ref. \cite{domain-wall-0,domain-wall}, with the Hamiltonian:
\begin{equation}\label{eq:rydberg-scar}
	\hat H_\text{Ryd} = \sum_j \left[2J(\hat s_j^x - 4\hat s_{j-1}^z \hat s_j^x \hat s_{j+1}^z) + h \hat s_j^z \right]
\end{equation}
It was proved that there is an exact scar tower, starting from the spin-all-down state $|\Psi_0\rangle = \left|\downarrow \cdots \downarrow\right\rangle$, and generated by the ladder:
\begin{equation}
\begin{aligned}
	\hat Q^+ &= \sum_j (-1)^j\hat P_{j-1}^{\downarrow} \hat s_j^+ \hat P_{j+1}^{\downarrow}, \\
	|\Psi_n\rangle &= (\hat Q^+)^n|\Psi_e\rangle.
\end{aligned}
\end{equation}
In Ref.~\cite{dsymm}, the scar tower $\{|\Psi_n\rangle\}$ can be regarded as a deformed SU(2)-symmetric space deformed by an MPO with tensor:
\begin{equation}
	W = \begin{bmatrix}
        0 & \hat s^+ \\
        \hat s^- & \hat P^{\downarrow}
    \end{bmatrix}.
\end{equation}
Since the scar tower has a ladder operator, the z-qNGM is simply defined as:
\begin{equation}\label{eq:rydberg-zqngm}
\begin{aligned}
	|k,z\rangle &= \hat{W}\left[ \hat S_k^+ \left(\includegraphics[width=0.07\textwidth,valign=c]{pics/spin-down}\otimes  \cdots \otimes \includegraphics[width=0.07\textwidth,valign=c]{pics/spin-down} \right)\right] \\
	&= \sum_j e^{ikj} \hat P_{j-1}^{\downarrow} \hat s_j^+ \hat P_{j+1}^{\downarrow}|\Psi_0\rangle \\
	&= \sum_j e^{ikj} \hat s_j^+ |\Psi_0\rangle.
\end{aligned}
\end{equation}
We consider the deformed Hamiltonian:
\begin{equation}\label{eq:deformed-rydberg-ham}
\begin{aligned}
	\hat H &= \hat H_\text{Ryd} + g \hat V, \\
	\hat V &= \sum_j \left(\hat s_{j-1}^x \hat s_j^y \hat s_{j+1}^z -
	\hat s_{j-1}^z \hat s_j^y \hat s_{j+1}^x\right).
\end{aligned}
\end{equation}
Fig.~\ref{fig:rydberg}(a) shows the energy expectation and variance of the z-qNGM.

For the scar initial state being the following matrix-product state:
\begin{equation}
\begin{aligned}
	|\psi_\xi\rangle 
	&= \hat{W}\left[\includegraphics[width=0.07\textwidth,valign=c]{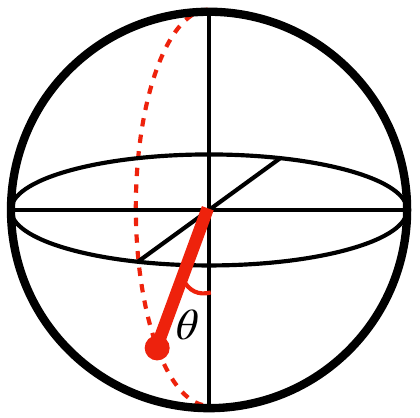} \otimes \includegraphics[width=0.07\textwidth,valign=c]{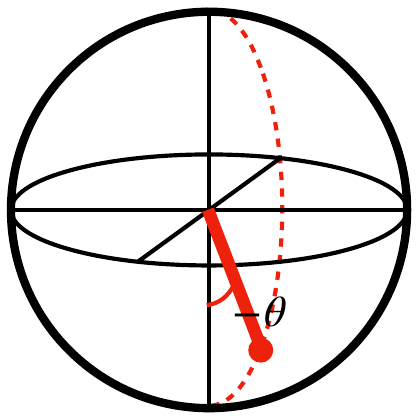} \otimes  \cdots\right]\\
	&= |[ABAB\cdots AB]\rangle.
\end{aligned}
\end{equation}
Here the tensors $A,B$ are:
\begin{equation}
\begin{aligned}
	A &= W \cdot (\left|\downarrow\right\rangle + \xi\left|\uparrow\right\rangle)
	= \begin{bmatrix}
        0 & \left|\uparrow\right\rangle  \\
        \xi\left|\downarrow\right\rangle  & \left|\downarrow\right\rangle
    \end{bmatrix}, \\
    B &= W \cdot (\left|\downarrow\right\rangle - \xi\left|\uparrow\right\rangle)
    = \begin{bmatrix}
        0 & \left|\uparrow\right\rangle  \\
        -\xi\left|\downarrow\right\rangle  & \left|\downarrow\right\rangle 
    \end{bmatrix}.
\end{aligned}
\end{equation}
where $\xi$ is related to the angle $\theta$:
\begin{equation}
	\xi = \tan(\theta/2)
\end{equation}
The energy expectation of the x-qNGM depends on $\xi$, in this specific model, when choosing $\xi=1$, the velocity at $k=0$ is zero.
For this reason we instead choose $\xi=1/2$ so that the x-qNGM has nonzero velocity. 

The x-qNGM is then defined as:
\begin{equation}\label{eq:rydberg-xqngm}
	|k,x\rangle = \hat S_k^z|\psi_\xi\rangle=\sum_j e^{ikj}\hat s^+_j |\psi_\xi\rangle.
\end{equation}
In Fig.~\ref{fig:rydberg}(b), we display the energy expectation and variance of the x-qNGM.

\subsection{Spin-1/2 Onsager scar model}

\begin{figure}
	\centering
	\includegraphics[width=0.9\linewidth]{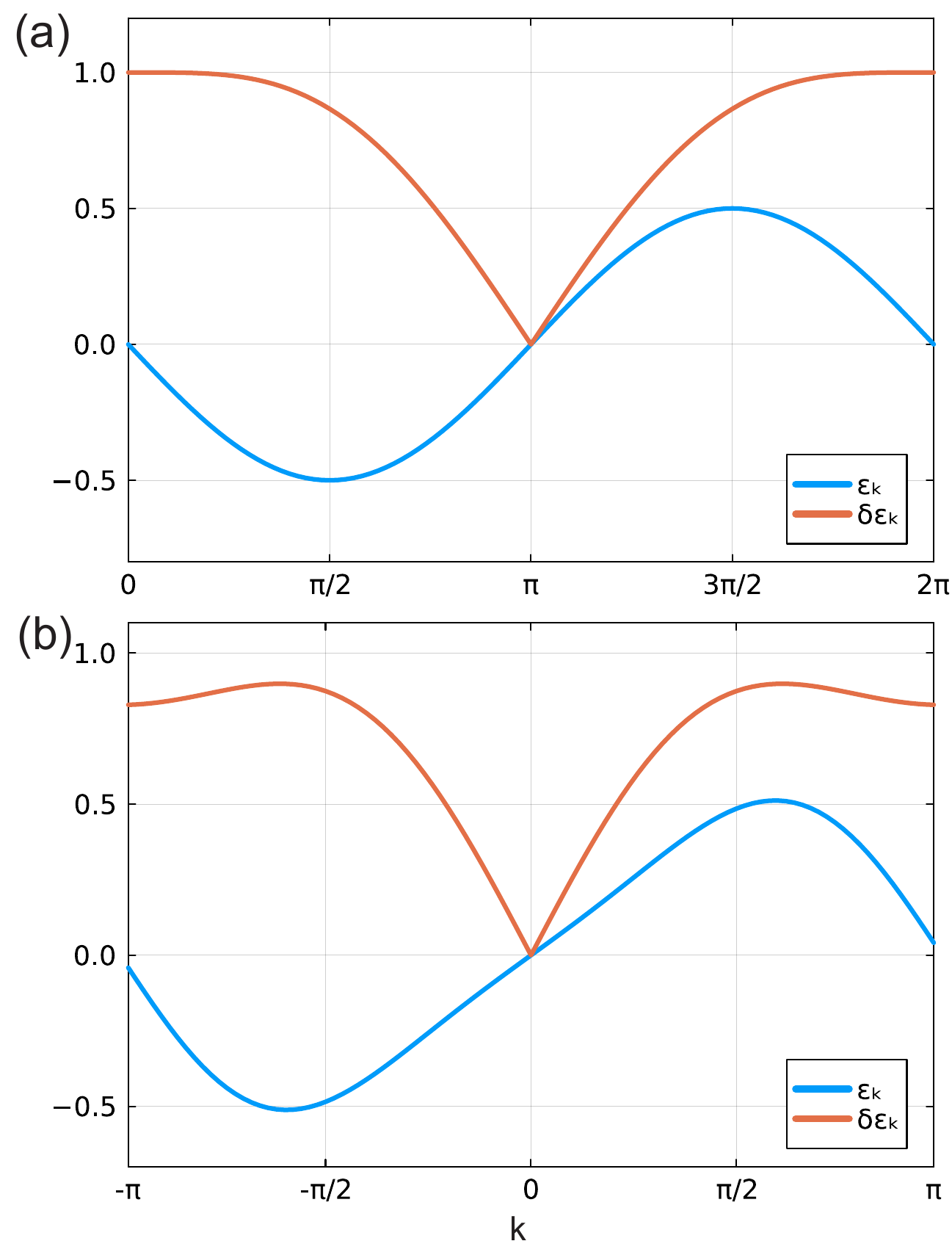}
	\caption{Energy expectation $\varepsilon_k$ and variance $\delta\varepsilon_k$ under Hamiltonian Eq.~(\ref{eq:onsager-scar}) with $n=2, h=0, \lambda=1/2$. (a) $\varepsilon_k$ and and $\delta\varepsilon_k$ of the z-qNGM in Eq.~(\ref{eq:onsager-zqngm}). (b) $\varepsilon_k$ and $\delta\varepsilon_k$ of the x-qNGM in Eq.~(\ref{eq:onsager-xqngm}) with $\xi=1/2$.}
	\label{fig:onsager}
\end{figure}

In Ref.~\cite{Onsager}, the following scar Hamiltonian was proposed
\begin{equation}\label{eq:onsager-scar}
	\hat H_\text{Ons} = \hat H_n + \lambda \hat H_\text{pert} + h \hat S^z.
\end{equation}
Here, $\hat H_n$ is defined as
\begin{equation}
\begin{aligned}
	\hat H_n =&\ -\sum_{j=1}^L \sum_{a=1}^{n-1} \frac{1}{4 \sin (\pi a / n)}\left[n(-1)^a\left(\hat s_j^{-} \hat s_{j+1}^{+}\right)^a+\text {H.c.} \right. \\
	& \left. +(n-2 a) \omega^{a / 2} \hat\tau_j^a\right],
\end{aligned}
\end{equation}
where $\omega = \exp(2\pi i/n)$, and the operator $\hat\tau = \operatorname{diag}(1,\omega,\dots,\omega^{n-1})$.
We only consider the simplest $n=2$ case, where $\hat H_n$ is reduced to the spin-1/2 XY model:
\begin{equation}
	\hat H_2=\sum_{j=1}^L \hat s_j^x \hat s_{j+1}^x + \hat s_j^y \hat s_{j+1}^y.
\end{equation}
For the $n=2$ case, the ladder operator 
\begin{equation}
	\hat Q^+ = \sum_j (-1)^j \hat s_j^+\hat s_{j+1}^+.
\end{equation}
In Ref.~\cite{dsymm}, this scar tower has been regarded as the deformed-SU(2) symmetric space.
The MPO tensor of the deforming operator is:
\begin{equation}
	W = \begin{bmatrix} 
        \left|\downarrow\right\rangle \left\langle \downarrow \right| &
        \left|\uparrow\right\rangle \left\langle \downarrow \right| \\
        \left|\uparrow\right\rangle \left\langle \uparrow \right| & 0
    \end{bmatrix}
\end{equation}
Also, here we choose the $\hat H_\text{pert}$ as
\begin{equation}
	\hat H_\text{pert}
	= \sum_j \left(\hat s_{j-1}^x \hat s_{j+1}^y - \hat s_{j-1}^y \hat s_{j+1}^x \right).
\end{equation}
The z-qNGM is
\begin{equation}\label{eq:onsager-zqngm}
\begin{aligned}
	|k,z\rangle &= \hat{W}\left[ \hat S_k^+ \left(\includegraphics[width=0.07\textwidth,valign=c]{pics/spin-down}\otimes  \cdots \otimes \includegraphics[width=0.07\textwidth,valign=c]{pics/spin-down} \right)\right] \\
	&= \hat Q^+_k|\Psi_0\rangle \\
	&= \sum_j e^{ikj} \hat s_j^+ \hat s_{j+1}^+|\Psi_0\rangle.
\end{aligned}
\end{equation}
In Fig.~\ref{fig:onsager}(a), we show the energy expectation and variance of the z-qNGM.

For scar initial state being the matrix-product state
\begin{equation}
\begin{aligned}
	|\psi_\xi\rangle 
	&= \hat{W}\left[\includegraphics[width=0.07\textwidth,valign=c]{pics/spin_theta} \otimes \includegraphics[width=0.07\textwidth,valign=c]{pics/spin_mtheta} \otimes  \cdots\right]\\
	&= |[ABAB\cdots AB]\rangle,
\end{aligned}
\end{equation}
where the tensors $A,B$ are:
\begin{equation}
\begin{aligned}
	A &= W\cdot (\left|\downarrow\right\rangle + \xi\left|\uparrow\right\rangle)
	= \begin{bmatrix} 
        \left|\downarrow\right\rangle & \left|\uparrow\right\rangle \\
        \xi \left|\uparrow\right\rangle & 0
    \end{bmatrix}, \\
    B &= W\cdot (\left|\downarrow\right\rangle - \xi\left|\uparrow\right\rangle)
    = \begin{bmatrix} 
        \left|\downarrow\right\rangle & \left|\uparrow\right\rangle \\
        -\xi\left|\uparrow\right\rangle & 0
    \end{bmatrix}.
\end{aligned}
\end{equation}
The x-qNGM is then defined as:
\begin{equation}\label{eq:onsager-xqngm}
	|k,x\rangle = \hat S_k^z|\psi_\xi\rangle=\sum_j e^{ikj}\hat s^z_j |\psi_\xi\rangle.
\end{equation}
In Fig.~\ref{fig:onsager}(b), we display the energy expectation and variance of the x-qNGM.

\subsection{Type-II scar tower in spin-1 XY model}

\begin{figure}
	\centering
	\includegraphics[width=0.9\linewidth]{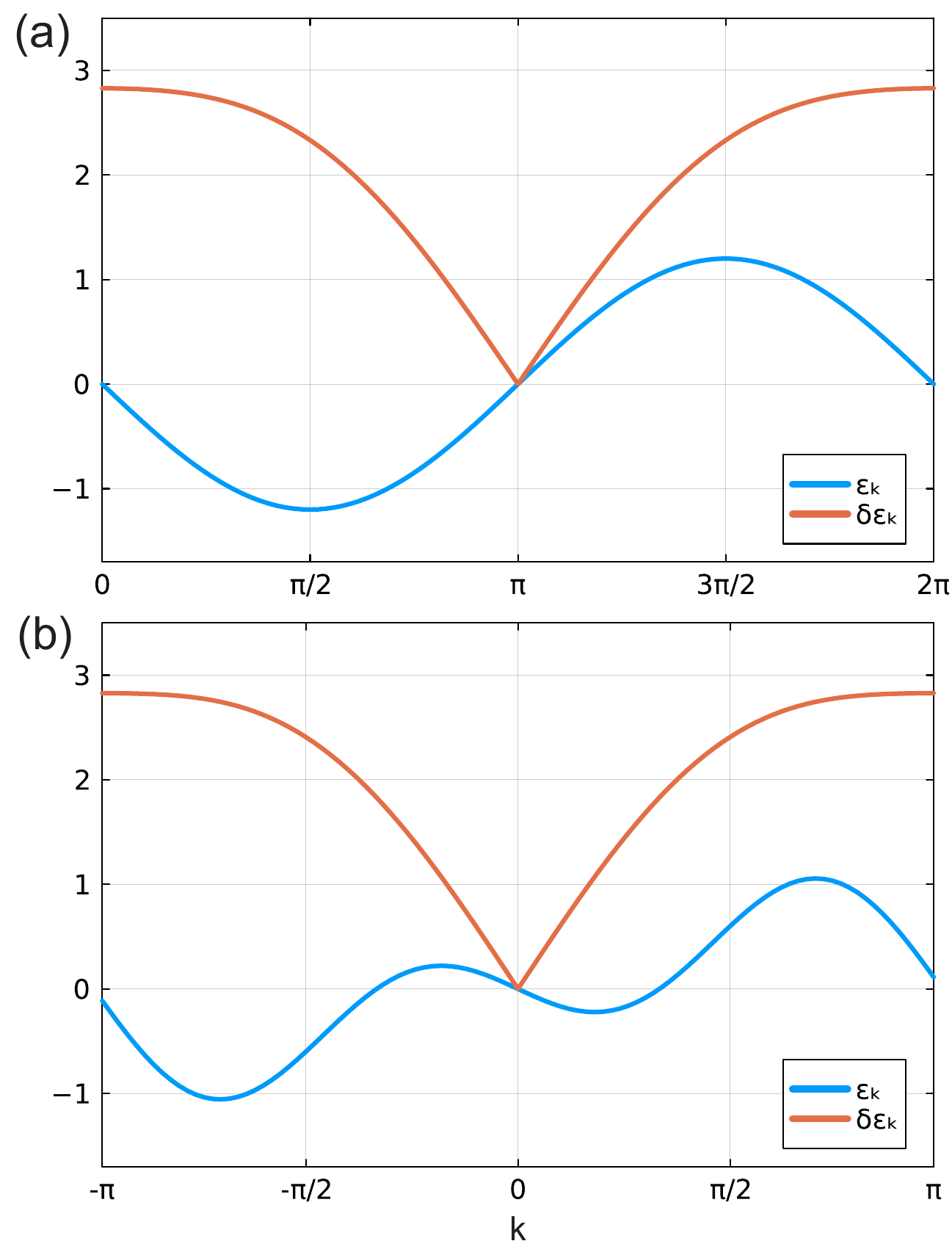}
	\caption{Energy expectation $\varepsilon_k$ and variance $\delta\varepsilon_k$ under Hamiltonian Eq.~(\ref{eq:deformed_xy_ham}) with $h=0,\epsilon=0.1, \lambda=0.3$. (a) $\varepsilon_k$ and and $\delta\varepsilon_k$ of the z-qNGM in Eq.~(\ref{eq:xy-2-zqngm}). (b) $\varepsilon_k$ and $\delta\varepsilon_k$ of the x-qNGM in Eq.~(\ref{eq:xy-2-xqngm}).}
	\label{fig:XY-II}
\end{figure}

In Ref.~\cite{XY-1}, it was discovered that in spin-1 XY model, besides the original scar tower, another special scar tower emerges when the Hamiltonian is given by:
\begin{equation}
\begin{aligned}
	\hat H_\text{XY} =&\ \sum_j \left[
		\hat s_j^x \hat s_{j+1}^x + \hat s_j^y \hat s_{j+1}^y + h\hat s_j^z \right. \\
		& \left.+
		\epsilon (\hat s_j^+)^2(\hat s_{j+1}^-)^2 + \epsilon (\hat s_j^-)^2(\hat s_{j+1}^+)^2
	\right].
\end{aligned}
\end{equation}
This additional scar tower originates from $|\Psi_0\rangle = |-\cdots-\rangle$, and is expressed as:
\begin{equation}
	\left|\Psi_n\right\rangle \propto \sum_{i_1 \neq i_2 \neq \cdots \neq i_n}(-1)^{\sum_{j=1}^n i_j} \bigotimes_{j=1}^n\left(\hat s_{i_j}^{+} \hat s_{i_j+1}^{+}\right)\left|\Psi_0\right\rangle.
\end{equation}
It is important to note that this scar tower lacks a local ladder operator.
In Ref.~\cite{XY-2}, and later in Ref.~\cite{dsymm}, this tower was interpreted as a projected SU(2)-symmetric space.
The projector can be represented by an MPO, described by the tensor:
\begin{equation}
	W = \begin{bmatrix}
        |+\rangle \left\langle \uparrow \right| & |0\rangle \left\langle \downarrow \right| \\
        |0\rangle \left\langle \uparrow \right| & |-\rangle \left\langle \downarrow \right|
    \end{bmatrix}.
\end{equation}
For the z-qNGM, it is defined as:
\begin{equation}\label{eq:xy-2-zqngm}
\begin{aligned}
	|k,z\rangle &= \hat{W}\left[ \hat S_k^+ \left(\includegraphics[width=0.07\textwidth,valign=c]{pics/spin-down}\otimes  \cdots \otimes \includegraphics[width=0.07\textwidth,valign=c]{pics/spin-down} \right)\right] \\
	&= \sum_j e^{ikj} \hat s_j^+ \hat s_{j+1}^+ |\Psi_0\rangle.
\end{aligned}
\end{equation}
For the x-qNGM, the definition is similar as before:
\begin{equation}\label{eq:xy-2-xqngm}
\begin{aligned}
	|\psi_{\xi=1}\rangle 
	&= \hat{W}\left[\includegraphics[width=0.07\textwidth,valign=c]{pics/spin-front} \otimes \includegraphics[width=0.07\textwidth,valign=c]{pics/spin-back} \otimes  \cdots\right]\\
	&= |[ABAB\cdots AB]\rangle,
\end{aligned}
\end{equation}
where $|\Psi_\xi\rangle = |[AB\cdots AB]\rangle$ has the MPS representation:
\begin{equation}
\begin{aligned}
	A &= W\cdot (\left|\downarrow\right\rangle + \xi\left|\uparrow\right\rangle)
	= \begin{bmatrix}
        |+\rangle & |0\rangle \\
        |0\rangle & |-\rangle
    \end{bmatrix}, \\
    B &= W\cdot (\left|\downarrow\right\rangle - \xi\left|\uparrow\right\rangle)
    = \begin{bmatrix}
        -|+\rangle & |0\rangle \\
        -|0\rangle & |-\rangle
    \end{bmatrix}.
\end{aligned}
\end{equation}
The energy expectation and variance are displayed in Fig.~\ref{fig:XY-II}.
For this additional tower, we introduce the deformation:
\begin{equation}
	\hat H = \hat H_\text{XY} + \lambda \hat V,
\end{equation}
where the perturbation term is defined as:
\begin{equation}\label{eq:deformed_xy_ham}
\begin{aligned}
	\hat V =&\ \sum_j \left[\hat\lambda^{(3)}_j\otimes \hat\lambda^{[1,2]}_{j-1,j+1} + \hat b_j \otimes \hat\lambda^{[6,7]}_{j-1,j+1} \right. \\
    &+ \left. \hat s^z_j \otimes \left(\hat\lambda^{[1,7]}_{j-1,j+1}-\hat\lambda^{[2,6]}_{j-1,j+1}\right)\right]
\end{aligned}
\end{equation}
using the notation in Eq.~(\ref{eq:lambda-def}).

\section{Quasi-Nambu-Goldstone Modes in SU(3) Scar Model}

Now we consider the scar tower with SU(3) quasisymmetry, constructed in Ref.~\cite{qsymm-1}.
This is a spin-1 chain hosting a maximal weight representation of SU(3).

\subsection{SU(3) scar model}

In the SU(3) scar model, the anchor state is fixed to the polarized state $|\Psi_e\rangle = \left|-\cdots-\right\rangle$.
The SU(3)-symmetric space, according to the definition, is
\begin{equation}
	\mathcal{H}_\mathrm{SU(3)}
	\equiv \operatorname{span}\{\hat u(g)^{\otimes N}|\Psi_e\rangle|\forall g\in \mathrm{SU(3)}\}.
\end{equation}
From the representation theory of Lie algebra, the space $\mathcal{H}_\mathrm{SU(3)}$ is the maximal weight irreducible representation.
An orthonormal basis for $\mathcal{H}_\mathrm{SU(3)}$ can be generated by acting the following ladder operator
\begin{equation}
\begin{aligned}
	\hat Q_1^+ &= \sum_j (\hat q_1^+)_j = \begin{pmatrix}
		0 & & \\
		 & 0 & 1 \\
		 &  & 0
	\end{pmatrix}_j, \\
	\hat Q_2^+ &= \sum_j (\hat q_2^+)_j = \sum_j \begin{pmatrix}
		0 & & 1 \\
		 & 0 & \\
		 &  & 0
	\end{pmatrix}_j.
\end{aligned}
\end{equation}
to the anchor state $|\Psi_e\rangle$.
That is, 
\begin{equation}
	\mathcal{H}_\mathrm{SU(3)}
	= \operatorname{span}\{(\hat Q_1^+)^n (\hat Q_2^+)^m|\Psi_e\rangle|\forall n,m\in \mathrm{SU(3)}\}.
\end{equation}
With the scar space defined, we can use the quantum inverse method to find all the possible translational invariant nearest-neighbor scar Hamiltonians.

\label{apx:qngmsu3}
\begin{figure}
	\centering
	\includegraphics[width=0.99\linewidth]{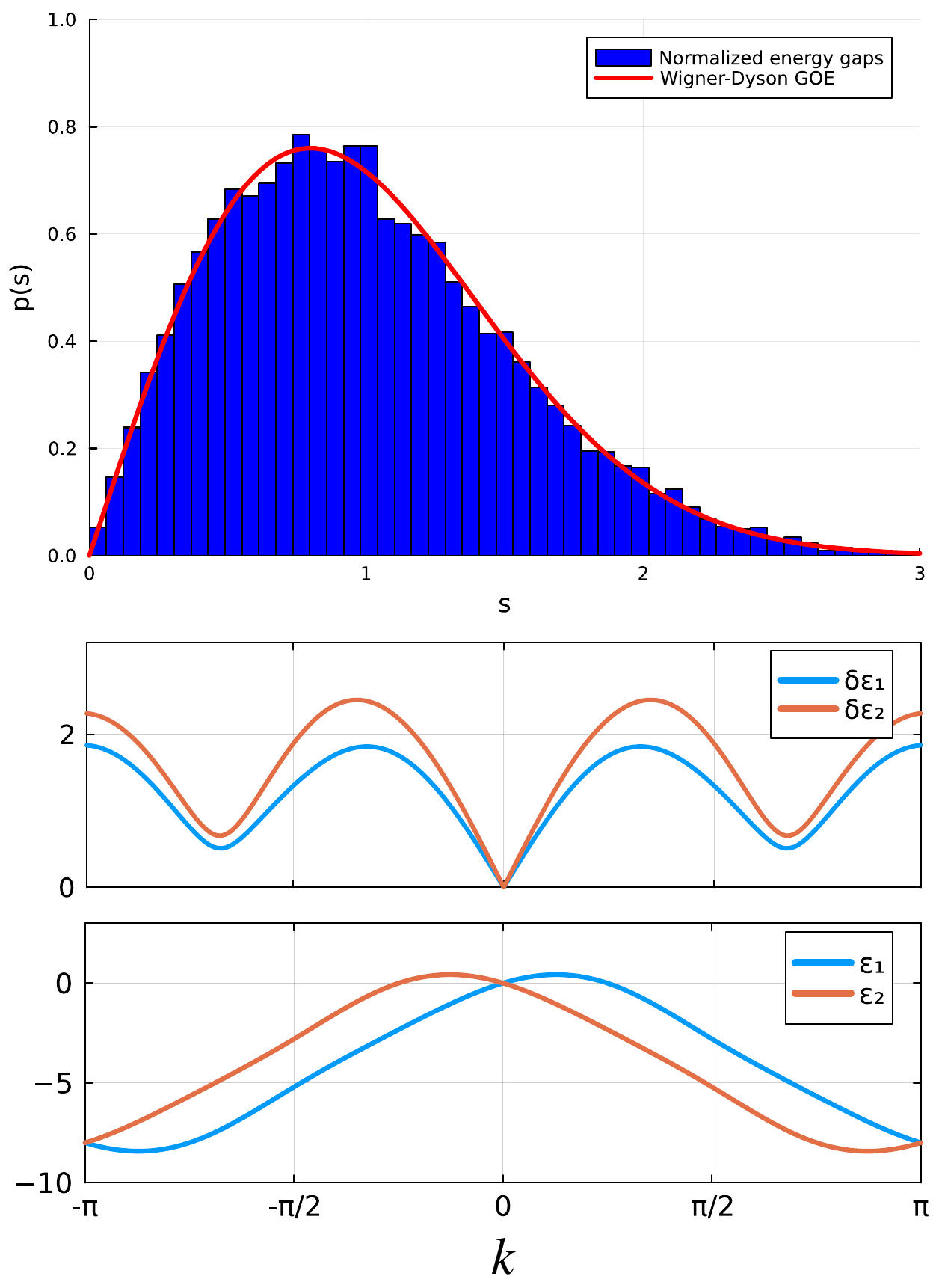}
	\caption{Top: Spectrum of the $L=12$ SU(3)-scar model with Hilber space dimension $\dim \mathcal H = 52488$. The model parameters are fixed to be $J_1=1,\ J_2=-0.6,\ J_3=0.3,\ J_4=0.2,\ J_5=0.1,\ J_6=0.3$. Statistics of energy gaps in the middle half of the ($k=0$) sector show the Wigner-Dyson GOE form, indicating non-integrability. Middle:  energy variance of two modes in SU(3) scar model. Bottom: energy expectation of two modes.}
	\label{fig:su3-levels}
\end{figure}

There are in total 9 solutions from the quantum inverse method:
\begingroup
\allowdisplaybreaks
\begin{align*}
	\hat o^{(1)}_{j,j+1} &= \sum_{a=1}^8 \hat \lambda_{j,j+1}^{(aa)} , \\
	\hat o^{(2)}_{j,j+1} &= \hat \lambda_{j,j+1}^{[12]}+\hat \lambda_{j,j+1}^{[45]}, \\
	\hat o^{(3)}_{j,j+1} &= \hat \lambda_{j,j+1}^{[45]}+\hat \lambda_{j,j+1}^{[67]}, \\
	\hat o^{(4)}_{j,j+1} &= 2\hat \lambda_{j,j+1}^{[13]}-\hat \lambda_{j,j+1}^{[46]}-\hat \lambda_{j,j+1}^{[57]}, \\
	\hat o^{(5)}_{j,j+1} &= 2\hat \lambda_{j,j+1}^{[23]}+\hat \lambda_{j,j+1}^{[47]}-\hat \lambda_{j,j+1}^{[56]}, \\
	\hat o^{(6)}_{j,j+1} &= \hat \lambda_{j,j+1}^{[14]}+\hat \lambda_{j,j+1}^{[25]}-\hat \lambda_{j,j+1}^{[36]}-\sqrt{3}\hat \lambda_{j,j+1}^{[48]}, \\
	\hat o^{(7)}_{j,j+1} &= \hat \lambda_{j,j+1}^{[15]}-\hat \lambda_{j,j+1}^{[24]}-\hat \lambda_{j,j+1}^{[37]}-\sqrt{3}\hat \lambda_{j,j+1}^{[78]}, \\
	\hat o^{(8)}_{j,j+1} &= \hat \lambda_{j,j+1}^{[16]}-\hat \lambda_{j,j+1}^{[27]}+\hat \lambda_{j,j+1}^{[34]}-\sqrt{3}\hat \lambda_{j,j+1}^{[48]}, \\
	\hat o^{(9)}_{j,j+1} &= \hat \lambda_{j,j+1}^{[17]}+\hat \lambda_{j,j+1}^{[26]}+\hat \lambda_{j,j+1}^{[35]}-\sqrt{3}\hat \lambda_{j,j+1}^{[58]}.
\end{align*}
\endgroup
Note that one of them is Heisenberg-like (SU(3) singlet), and 8 are DM-like (SU(3)-octet). 
Besides, these interactions are not limited to nearest neighbor.
That is, $\sum_j \hat o_{j,j+d}^{(n)}$ for any $d \ge 1$ is a parent Hamiltonian for the scar tower.
Choose the Hamiltonian form:
\begin{equation}\label{eq:su(3)-ham}
\begin{aligned}
	\hat H =&\ \sum_j \left(
	J_1 \hat o^{(1)}_{j,j+1} + J_2 \hat o^{(2)}_{j,j+1} + J_3 \hat o^{(3)}_{j,j+1} \right. \\
	&\left. + J_4 \hat o^{(4)}_{j,j+2} + J_5 \hat o^{(6)}_{j,j+2} + J_6 \hat o^{(6)}_{j,j+3}
	\right),
\end{aligned}
\end{equation}
where the local terms of $\hat O^{(4)}$ and $\hat O^{(6)}$ act on next-nearest neighbor sites. 
We choose 
\begin{equation}
\begin{aligned}
	J_1&=1,\quad
	J_2=-0.6,\quad 
	J_3=0.3,\\ 
	J_4&=0.2,\quad 
	J_5=0.1, \quad
	J_6=0.3.
\end{aligned}
\end{equation}
The finite-size exact diagonalization shows clear Wigner-Dyson level statistics, see Fig.~\ref{fig:su3-levels}a.

\subsection{Multiple quasi-Nambu-Goldstone modes}
The nature definitions of quasi-Goldstone modes in this case are
\begin{equation}
\begin{aligned}
	|k,1\rangle &= \sum_j e^{ikj} (\hat{q}_1^+)_j |\Psi_e\rangle, \\
	|k,2\rangle &= \sum_j e^{ikj} (\hat{q}_2^+)_j |\Psi_e\rangle.
\end{aligned}
\end{equation}
Note that the transition amplitude for the two modes are
\begin{equation}
	\langle k,a|H|k,b\rangle
	= \begin{bmatrix}\varepsilon_1 & v \\ v^* & \varepsilon_2\end{bmatrix}_{ab}, 
\end{equation}
where
\begin{equation*}
	\begin{aligned}
		\varepsilon_1 &= 4 J_1 [\cos (k)-1]-4 J_3 \sin (k), \\
		\varepsilon_2 &= 4 J_1 [\cos (k)-1]-4 \left(J_2+J_3\right) \sin (k), \\
		v &= 4 i \sin(2 k) J_4.
	\end{aligned}
\end{equation*}
Because of the transition amplitude, we choose to define the qNGMs as the linear combinations of the $|k,1\rangle$ and $|k,2\rangle$, so that the two modes are decoupled, with energy expectation:
\begin{equation}
	\begin{aligned}
\varepsilon_\pm(k) =&\ 4 J_1 [\cos (k)-1]-\left(2J_2+4 J_3\right) \sin (k) \\
&\pm 2 \sin(k)\sqrt{J_2^2+16 J_4^2 \cos ^2(k)} \\
=&\ -2\left[J_2+2J_3\pm\sqrt{J_2^2+16J_4^2}\right]k + O\left(k^2\right).
\end{aligned}
\end{equation}
Because of the linear combination, the exact expression of the energy variance for the mode becomes complicated.
In Fig.~\ref{fig:su3-levels}b, we plot the energy expectation and energy variance for each independent mode.

\section{Quasi-Nambu-Goldstone Modes in Deformed PXP Model}
\label{apx:qngmpxp}

In this section, we explore the PXP model~\cite{pxp-1,pxp-2}:
\begin{equation}\label{eq:pxp_ham}
	\hat H = \sum_j \hat P^{\downarrow}_{j-1} \hat \sigma_j^x \hat P^{\downarrow}_{j-1},
\end{equation}
which serves as the effective model for the Rydberg atom array experiment~\cite{lukin2017}. 
This model is notable for featuring an approximate scar tower without an exact symmetry structure. 
Despite the absence of precise symmetry, the PXP model exhibits an approximate decoupling 
\begin{equation}\label{eq:pxp-decouple}
	\hat H_\text{PXP} \approx \hat H_\text{FSA} \oplus \hat{H}_\text{th},
\end{equation}
where a special subspace is generated by the forward scattering approximation (FSA)~\cite{pxp-1}.

In the FSA, the Hamiltonian $\hat{H}$ Eq.~(\ref{eq:pxp_ham}) is decomposed into $\hat{H} = \hat{H}^+ + \hat{H}^-$, where 
\begin{equation*}
\begin{aligned}
	\hat H^- &= \sum_{j\in \text{odd}} \hat P^{\downarrow}_{j-1} \hat \sigma_j^+ \hat P^{\downarrow}_{j-1}  +\sum_{j\in\text{even}} \hat P^{\downarrow}_{j-1} \hat \sigma_j^- \hat P^{\downarrow}_{j-1}, \\
	\hat H^+ &= \sum_{j\in \text{odd}} \hat P^{\downarrow}_{j-1} \hat \sigma_j^- \hat P^{\downarrow}_{j-1} +\sum_{j\in\text{even}} \hat P^{\downarrow}_{j-1} \hat \sigma_j^+ \hat P^{\downarrow}_{j-1}.
\end{aligned}
\end{equation*}
Within this approximation, the quasi-periodic dynamics originating from the initial Neel state 
\begin{equation*}
	|\mathbb Z_2\rangle \equiv \left|\uparrow\downarrow\cdots\uparrow\downarrow\right\rangle.
\end{equation*}
is primarily contained within the FSA subspace:
\begin{equation}
	\mathcal{H}_\text{FSA} = \operatorname{span}\left\{ \left.(\hat H^+)^n|\mathbb Z_2\rangle \right| \forall n \in \mathbb Z^+\right\}.
\end{equation}
This subspace is spanned by states generated through the application of $\hat{H}^+$ on the initial $|\mathbb{Z}_2\rangle$ state, eventually reaching another Neel state:
\begin{equation*}
	|\mathbb Z_2'\rangle = \left|\downarrow\uparrow\cdots\downarrow\uparrow \right\rangle.
\end{equation*}

\subsection{Approximate SU(2) and the corresponding quasi-Nambu-Goldstone modes}

\begin{figure}
	\centering
	\includegraphics[width=\linewidth]{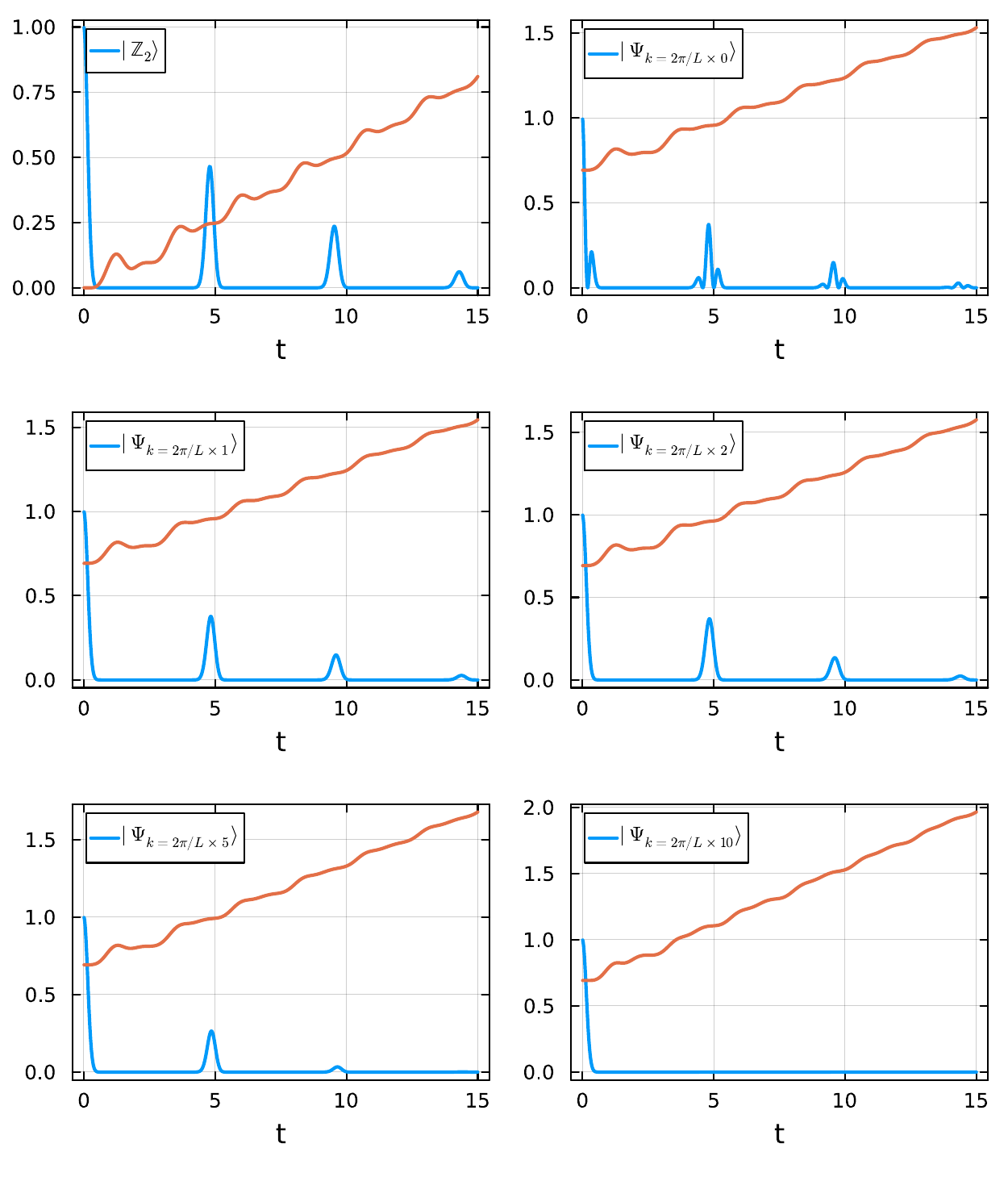}
	\caption{Scar dynamics (for system size $L=50$) initiated from $|\mathbb{Z}_2\rangle$ and $|\Psi_{k=2\pi n/L}\rangle$'s, where $n\in\{0,1,2,5,10\}$. The evolution is based on MPS TDVP simulation with bond dimension $\chi =60$. The blue lines represent the evolution of fidelity $F(t) = |\langle \psi|\psi(t)\rangle|^2$, and the orange line represents the evolution of the bipartite entanglement entropy.}
	\label{fig:pxp0-dynamics}
\end{figure}

The operators $\hat{H}^+$ and $\hat{H}^-$, along with their commutator $[\hat{H}^+, \hat{H}^-]$, form an approximate SU(2) algebra:
\begin{equation}\label{eq:pxp-su2}
	\hat H^\pm = \frac{\Delta}{2} \hat S^\pm, \quad
	\hat S^z = \frac{1}{2}[\hat S^+,\hat S^-] = \frac{1}{\Delta}[\hat H^+,\hat H^-].
\end{equation} 
In this context, the scar initial state $|\mathbb{Z}_2\rangle$ acts as the ``bottom state'' in the $s = L/2$ representation of this approximate SU(2) symmetry, which is evidenced by the relation $\hat{H}^- |\mathbb{Z}_2\rangle = 0$. 
Within this subspace, the Hamiltonian simplifies:
\begin{equation}
	\hat H = \hat H^+ + \hat H^- \simeq \Delta \hat S^x,
\end{equation}
indicating scar dynamics akin to the rotation of a large spin with a period $T = 2\pi/\Delta$. 
This approximate symmetry allows for other scar states, such as:
\begin{equation}
	|\Psi_0\rangle \propto \hat H^+|\mathbb Z_2\rangle,
\end{equation}
which also evolves approximately within the SU(2) subspace. The approximate.
The approximate qNGM state is similarly defined by introducing small spatial modulation:
\begin{equation}\label{eq:pxp-qngm}
\begin{aligned}
	|\Psi_k\rangle = &\ \sum_{j\in \text{odd}} e^{ikj}\hat P^{\downarrow}_{j-1} \hat \sigma_j^- \hat P^{\downarrow}_{j-1} |\mathbb Z_2\rangle \\
	&+\sum_{j\in\text{even}}e^{ikj} \hat P^{\downarrow}_{j-1} \hat \sigma_j^+ \hat P^{\downarrow}_{j-1}|\mathbb Z_2\rangle.
\end{aligned}
\end{equation}
For $|\mathbb Z_2\rangle$ state, the above expression has a simple form:
\begin{equation*}
	|\Psi_k\rangle \propto\sum_{j\in \text{odd}}e^{ikj} \hat \sigma_j^- |\mathbb Z_2\rangle.
\end{equation*}
A unique aspect of the PXP model is that there is no tuning parameter for the spacing of scar eigenstates, precluding the possibility of a degenerate limit. 
Consequently, the lifetime of the qNGM cannot be quantified by energy variance. 
Instead, we can demonstrate the long lifetime by directly simulating the evolution of the many-body fidelity and the growth of entanglement entropy.

To investigate the scar dynamics, we conducted numerical simulations starting from $|\Psi_k\rangle$ in a system of size $L = 50$, employing the time-dependent variational principle (TDVP) algorithm~\cite{tdvp-1,tdvp-2}.
Fig.~\ref{fig:pxp0-dynamics} illustrates the time evolution beginning from $|\mathbb{Z}2\rangle$ and $|\Psi{k=2\pi n/L}\rangle$. Notably, for $k = 2\pi/L$, the fidelity exhibits nearly identical evolution to that of $k = 0$, the latter being regarded as within the FSA subspace.

\subsection{Deformed PXP model and its quasi-Nambu-Goldstone modes}
\begin{figure}
	\centering
	\includegraphics[width=\linewidth]{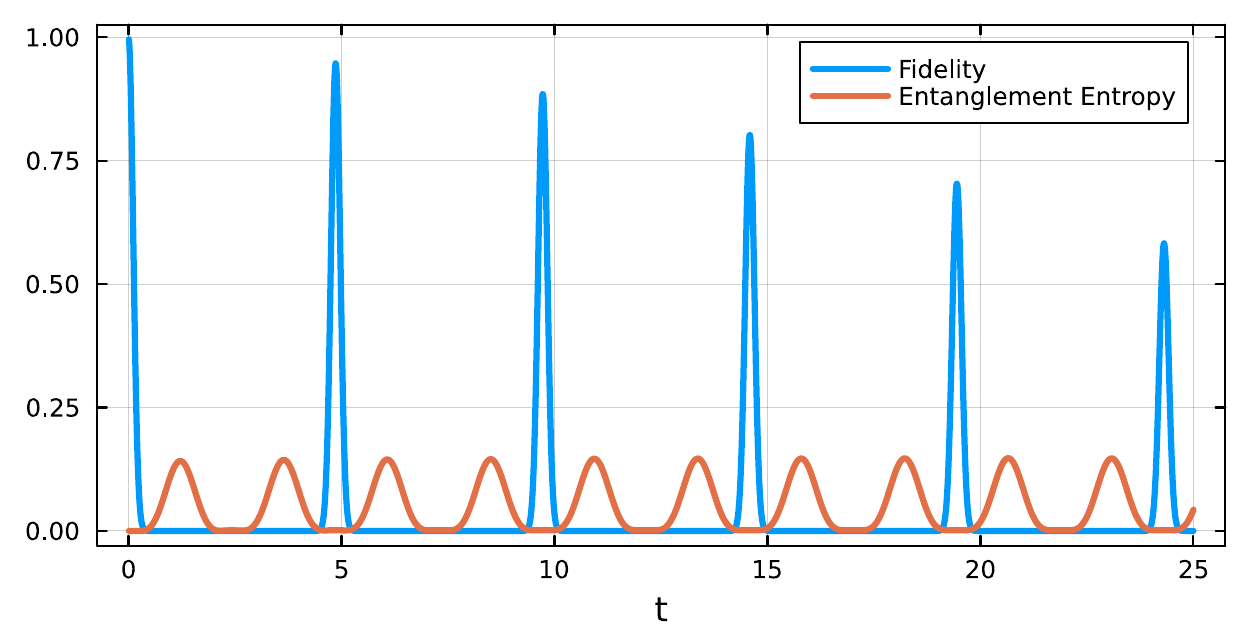}
	\caption{Dynamics of $|\mathbb Z_2\rangle$ under deformed PXP Hamiltonian Eq.~(\ref{eq:pxpd_ham}). The system size is $L=100$, and the simulation is based on the MPS TDVP algorithm with maximum bond dimension $\chi=30$. The many-body fidelity shows quasi-periodic and the entanglement entropy remains small during the simulation time, which allows us to choose a small bond dimension.}
	\label{fig:pxpd-z2}
\end{figure}

Although the forward scattering approximation provides partial insight into the revival phenomena, it remains a coarse approximation for the PXP model. 
Moreover, the PXP scar dynamics exhibit a relatively short lifetime, complicating the quantification of the qNGM lifetime. 
To address this, we employ a deformed PXP model~\cite{PXP-4}, introducing modifications to the original Hamiltonian to enhance the SU(2) algebra in Eq.~(\ref{eq:pxp-su2}):
\begin{equation}\label{eq:pxpd_ham}
	\hat H = \sum_j \hat P^{\downarrow}_{j-1} \hat \sigma_j^x \hat P^{\downarrow}_{j-1}\left[1-\sum_{d=2}^R h_d \left(\hat\sigma_{j-d}^z+\hat\sigma_{j+d}^z\right) \right].
\end{equation}
For clarity, we introduce the following notation:
\begin{equation}
	\hat K_j \equiv \hat{\mathbb I} - \sum_{d=2}^R h_d \left(\hat\sigma_{j-d}^z+\hat\sigma_{j+d}^z\right).
\end{equation}
By setting:
\begin{equation}
	h_d = \frac{h_0}{\left[\phi^{d-1}-\phi^{-(d-1)}\right]^2},
\end{equation}
where $\phi = (\sqrt{5}+1)/2$ and $h_0$ is optimized numerically to approximately $0.051$ \cite{PXP-4}, the decoupling in Eq.~(\ref{eq:pxp-decouple}) is significantly improved, though remaining approximate. 
Subsequent analysis is based on this deformed PXP model with these specified parameters.

\begin{figure}
	\centering
	\includegraphics[width=\linewidth]{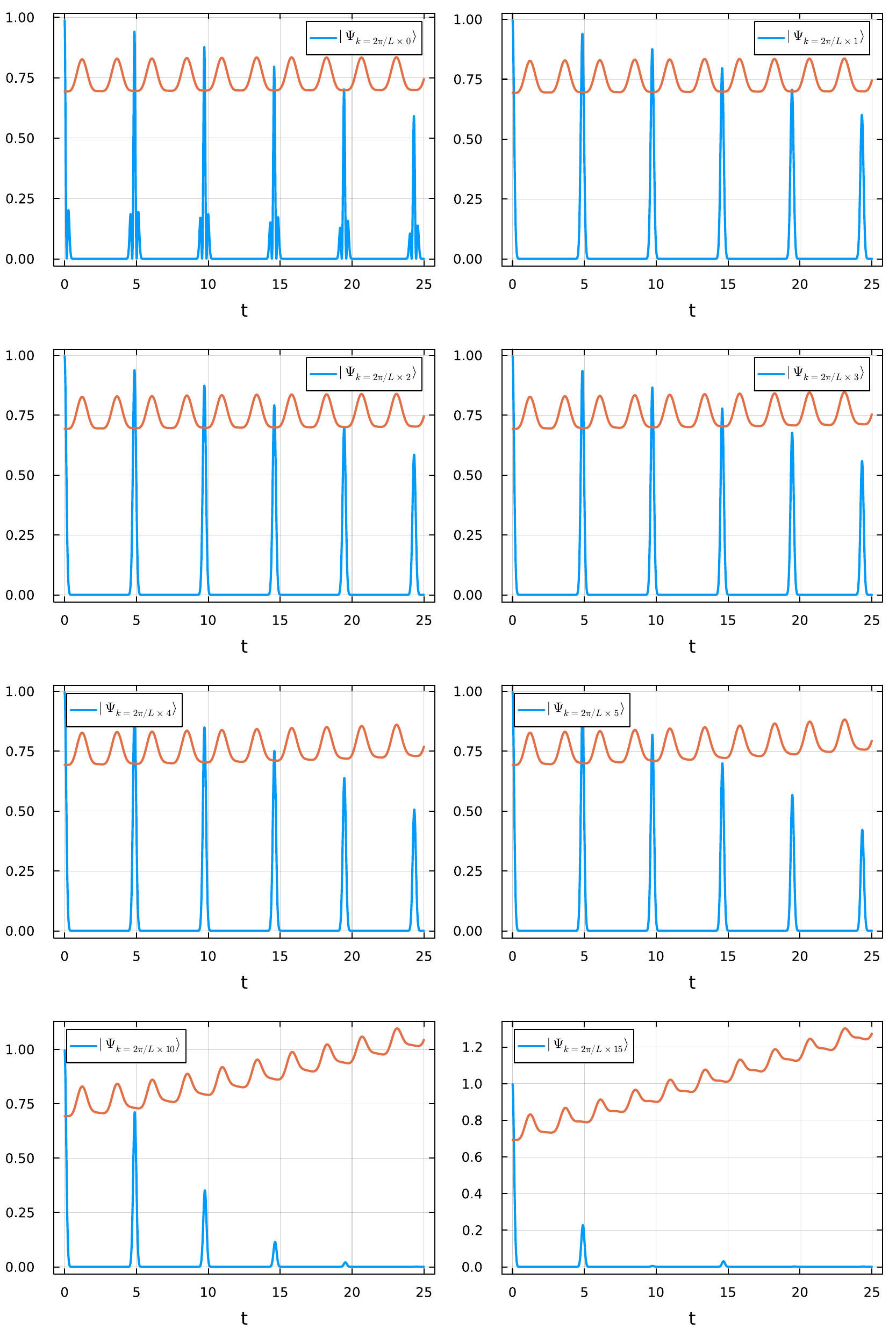}
	\caption{Scar dynamics initiated from $|\mathbb{Z}_2\rangle$ and $|\Psi_{k=2\pi n/L}\rangle$'s, where $n\in\{0,1,2,3,4,5,10,15\}$. The blue lines represent the evolution of fidelity $F(t) = |\langle \psi|\psi(t)\rangle|^2$, and the orange line represents the evolution of the bipartite entanglement entropy.}
	\label{fig:pxp-dynamics}
\end{figure}

The deformed PXP model can also be expressed as $\hat{H} = \hat{H}^- + \hat{H}^+$, where
\begin{equation*}
\begin{aligned}
	\hat H^- &= \sum_{j\in \text{odd}} \hat P^{\downarrow}_{j-1} \hat \sigma_j^+ \hat P^{\downarrow}_{j-1}  \hat K_j +\sum_{j\in\text{even}} \hat P^{\downarrow}_{j-1} \hat \sigma_j^- \hat P^{\downarrow}_{j-1} \hat K_j, \\
	\hat H^+ &= \sum_{j\in \text{odd}} \hat P^{\downarrow}_{j-1} \hat \sigma_j^- \hat P^{\downarrow}_{j-1}  \hat K_j +\sum_{j\in\text{even}} \hat P^{\downarrow}_{j-1} \hat \sigma_j^+ \hat P^{\downarrow}_{j-1} \hat K_j.
\end{aligned}
\end{equation*}
With these parameters, in Eq.(\ref{eq:pxp-su2}), we find $\Delta \approx 1.29$, resulting in a period $T \approx 4.86$. 
Fig.~\ref{fig:pxpd-z2} presents a simulation of scar dynamics for a system of size $L=100$ starting from the $|\mathbb{Z}_2\rangle$ initial state. 
Notably, for $L=100$, any seemingly nonzero fidelity indicates that the state remains close to the original one. 
Another observation is the low entanglement entropy during the dynamics. 
In contrast to the original PXP model, as shown in the first panel of Fig.~\ref{fig:pxp0-dynamics}, where the entanglement entropy generally increases with minor dips, the deformed model shows prolonged periods of low entanglement entropy.

The qNGM is generated by the twisted ladder operator
\begin{equation}
	\hat H^+_k = \sum_{j\in \text{odd}/\text{even}} e^{ikj}\hat P^{\downarrow}_{j-1} \hat \sigma_j^{-/+} \hat P^{\downarrow}_{j-1}  \hat K_j.
\end{equation}
The additional $\hat{K}_j$ term does not affect the $|\mathbb{Z}_2\rangle$ state. 
Therefore, the qNGM remains the same as in the original PXP model:
\begin{equation*}
	|\Psi_k\rangle = \hat H^+_k = \sum_{j\in \text{odd}}e^{ikj} \hat \sigma_j^- |\mathbb Z_2\rangle.
\end{equation*}
Refer to Fig.~\ref{fig:pxp-dynamics}for the time evolution starting from $|\mathbb{Z}2\rangle$ and $|\Psi{k=2\pi n/L}\rangle$. 
Besides the quasi-oscillation of many-body fidelity, the bipartite entanglement entropy also displays approximate periodicity and remains low for extended durations.

\begin{figure}
	\centering
	\includegraphics[width=\linewidth]{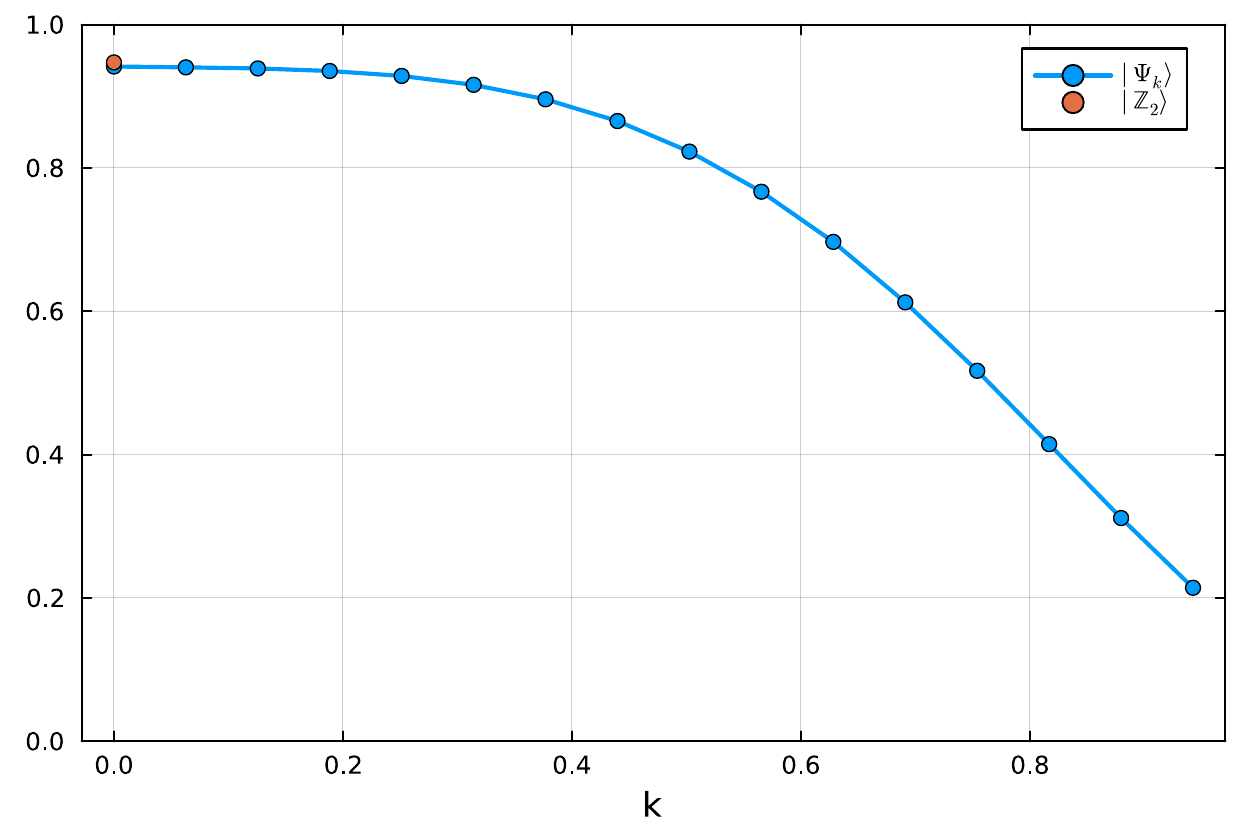}
	\caption{One-period fidelity $F(T)$ for initial state $|\mathbb{Z}_2\rangle$ as well as $|\Psi_{k}\rangle$'s. Note that $F_k(T)$ behaves like a continuous function of $k$.}
	\label{fig:pxp-band}
\end{figure}

Since the PXP model does not fall within the quasisymmetry framework and lacks a ``degenerate limit,'' we instead consider the evolution operator:
\begin{equation}
	U(T) = \exp(-i \hat H_\text{PXP} T),\quad T \approx 4.86.
\end{equation}
We interpret this operator as the Floquet operator, characterized by the following eigensystem:
\begin{equation}
	U(T)|n\rangle = e^{-i\phi_n}|n\rangle.
\end{equation}
For the initial state $|\psi(0)\rangle$, exhibiting nearly periodic time evolution, its Floquet spectral decomposition can be expressed as:
\begin{equation}
	|\psi(0)\rangle \approx \sum_\alpha c_\alpha |\alpha\rangle,
\end{equation}
where the dominant $c_\alpha$ coefficients correspond to nearly identical quasienergies $\phi_\alpha$. 
A natural approach to quantify the lifetime of the qNGM is to consider the one-period fidelity:
\begin{equation}
	F_k(T) = |\langle\Psi_k|\Psi_k(T)\rangle|^2.
\end{equation}
As shown in Fig.~\ref{fig:pxp-band}, $F_k(T)$ 
behaves as a continuous function of $k$; for small $k$, the fidelity decay closely resembles that of states in the subspace $\mathcal{H}_\text{FSA}$.

\subsection{Spectrum}

\begin{figure}
	\centering
	\includegraphics[width=\linewidth]{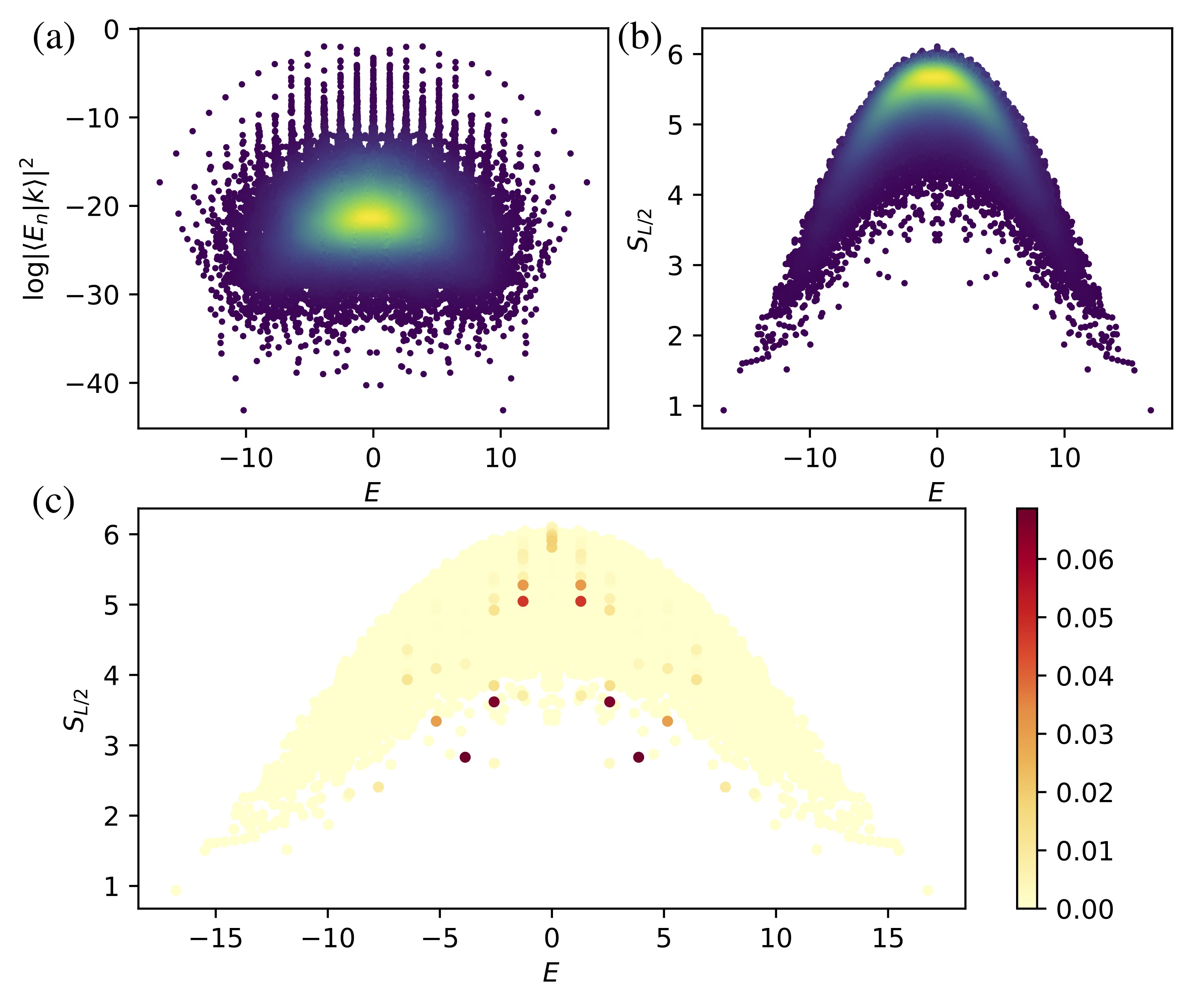}
	\caption{Exact diagonalization of a finite system with $L=28$ in the $(k=2\pi/L) + (k=\pi+2\pi/L)$ sector. (a) depicts the overlap between the initial state and each eigenstate. (b) illustrates the entanglement entropy of each eigenstate. (c) presents a scatter plot of eigenenergy versus energy, with a color map indicating the overlaps with the qNGMs defined in Eq.~(\ref{eq:pxp-qngm}).}
	\label{fig:pxp-ent}
\end{figure}

We examine the qNGM with the smallest nonzero momentum $k = \frac{2\pi}{L}$. 
Note that since the initial state is invariant only under a two-site shift, it spans both $k = \frac{2\pi}{L}$ and $k = \pi + \frac{2\pi}{L}$ sectors:
\begin{equation}
	|\Psi_k\rangle = \frac{1}{\sqrt 2} \left( |\Psi_k^e\rangle + |\Psi_k^o\rangle \right),
\end{equation}
where $|k_e\rangle$ and $|k_o\rangle$ are projections of $|k\rangle$ onto each sector. 
Specifically:
\begin{equation}
\begin{aligned}
	|\Psi_k^e\rangle &\propto \sum_{j\in\text{odd}} e^{ikj}\hat \sigma_j^- |\mathbb Z_2\rangle
	+ \sum_{j\in\text{even}} e^{ikj}\hat \sigma_j^- |\mathbb Z_2'\rangle, \\
	|\Psi_k^o\rangle &\propto \sum_{j\in\text{odd}} e^{ikj}\hat \sigma_j^- |\mathbb Z_2\rangle
	- \sum_{j\in\text{even}} e^{ikj}\hat \sigma_j^- |\mathbb Z_2'\rangle,
\end{aligned}
\end{equation}
where $|\mathbb Z_2'\rangle$ serves as the ``top state'' in the approximate SU(2) representation.

Numerically, we perform exact diagonalization on a system of size $L=28$ to investigate the eigenstate overlap $|\langle E_n|k\rangle|^2$ and the bipartite entanglement entropy of each eigenstate. 
Specifically, we perform the diagonalization in the subspace:
\begin{equation}
	\mathcal H = \mathcal{H}_{k=\frac{2\pi}{L}} \oplus \mathcal{H}_{k=\pi+\frac{2\pi}{L}}.
\end{equation}
As depicted in Fig.~\ref{fig:pxp-ent}(a), the eigenstate components exhibit strong overlaps with approximately equally-spaced eigenstates, suggesting nearly periodic evolution. 
However, as shown in Fig.~\ref{fig:pxp-ent}(b), the subspace $\mathcal H$ lacks a set of low-entangled states, unlike the $(k=0)$ sector. 
Further illustrated in Fig.~\ref{fig:pxp-ent}(c), although some states are relatively low-entangled, the majority exhibit high entanglement.

\end{appendix}

\bibliography{ref}

\end{document}